\documentclass[11pt]{article}

\usepackage[letterpaper,margin=1in]{geometry}
\usepackage[T1]{fontenc}
\usepackage[utf8]{inputenc}
\usepackage{lmodern}
\usepackage{microtype}
\usepackage{amsmath,amssymb,amsthm,mathtools}
\usepackage{algorithm}
\usepackage{algorithmic}
\usepackage{aliascnt}
\usepackage{thmtools,thm-restate}
\usepackage[hidelinks]{hyperref}
\usepackage[nameinlink,noabbrev]{cleveref}

\usepackage[suppress]{color-edits}
\addauthor[Yusuf]{yk}{blue}
\addauthor[Vasilis]{vl}{red}
\addauthor[Aditya]{ap}{cyan}
\addauthor[Shaddin]{sd}{magenta}

\crefname{section}{Section}{Sections}
\Crefname{section}{Section}{Sections}
\crefname{subsection}{Section}{Sections}
\Crefname{subsection}{Section}{Sections}
\crefname{theorem}{Theorem}{Theorems}
\Crefname{theorem}{Theorem}{Theorems}
\crefname{lemma}{Lemma}{Lemmas}
\Crefname{lemma}{Lemma}{Lemmas}
\crefname{proposition}{Proposition}{Propositions}
\Crefname{proposition}{Proposition}{Propositions}
\crefname{corollary}{Corollary}{Corollaries}
\Crefname{corollary}{Corollary}{Corollaries}
\crefname{observation}{Observation}{Observations}
\Crefname{observation}{Observation}{Observations}
\crefname{definition}{Definition}{Definitions}
\Crefname{definition}{Definition}{Definitions}
\crefname{example}{Example}{Examples}
\Crefname{example}{Example}{Examples}
\crefname{algorithm}{Algorithm}{Algorithms}
\Crefname{algorithm}{Algorithm}{Algorithms}

\theoremstyle{plain}
\newtheorem{theorem}{Theorem}[section]
\newtheorem{lemma}[theorem]{Lemma}
\newtheorem{proposition}[theorem]{Proposition}

\theoremstyle{definition}
\newtheorem{observation}[theorem]{Observation}
\newtheorem{definition}[theorem]{Definition}
\newtheorem{example}[theorem]{Example}

\newcommand{\BP}{\mathsf{BP}}
\newcommand{\RBP}{\mathsf{RBP}}
\renewcommand{\vec}[1]{\mathbf{#1}}

\newcommand{\OPT}{\mathsf{OPT}}
\newcommand{\alphaNR}{\alpha_{\mathrm{NR}}}
\newcommand{\alphaR}{\alpha_{\mathrm{R}}}
\newcommand{\floor}[1]{\left\lfloor #1 \right\rfloor}
\newcommand{\ceil}[1]{\left\lceil #1 \right\rceil}
\newcommand{\eps}{\varepsilon}

\title{Batched Pandora's Box}
\author{
    Shaddin Dughmi
    \thanks{
        University of Southern California,
        Email: {\tt shaddin@usc.edu}. Supported by the Air Force Office of Scientific Research under award number FA9550-24-1-0261. This work was done in part while the author was on sabbatical as the Carter and Tania Neild visiting professor at Northwestern University,
as well as a visiting professor in the Data Science Institute at the University of Chicago.
    }
    \and Yusuf Hakan Kalayci
    \thanks{
        University of Southern California,
        Email: {\tt kalayci@usc.edu}. Supported by the Air Force Office of Scientific Research under award number FA9550-24-1-0261.
    }
    \and Vasilis Livanos
    \thanks{
        University of Southern California,
        Email: {\tt vas.livanos@gmail.com}. Supported by the Air Force Office of Scientific Research under award number FA9550-24-1-0261. This work was done in part while the author was visiting Northwestern University.
    }
    \and Aditya Prasad
    \thanks{
        University of Chicago,
        Email: {\tt adityaprasad@uchicago.edu}.
    }
}
\date{}

\begin{document}

\maketitle

\begin{abstract}
Motivated by numerous parallelizable stochastic search problems, most notable and timely among them being LLM inference-time scaling, we propose and study batched versions of the Pandora's Box problem of Weitzman. In particular, boxes are opened in capacity-constrained batches, each batch has a setup cost, and all rewards in a batch are revealed together. We consider two different variants, motivated by different application environments: one where boxes are reusable (i.e., can provide multiple i.i.d.~samples) and another where they are not. For both variants we rule out most ``simple'' natural heuristics, and also formally prove NP-hardness of approximation in the traditional sense. We then relax the problem to allow bi-criteria approximations, with respect to both rewards and setup costs, where we exhibit constant approximation algorithms for both the reusable and non-reusable settings. This is obtained through a linear-programming relaxation of Pandora's Box problem, followed by randomized or Pipage rounding.
\end{abstract}

\section{Introduction}
\label{sec:introduction}

In several stochastic search problems, before we find a solution of sufficient
quality we need to pay a search cost. In marketing, a platform tests different
advertisements before committing to the best-performing one; in hyper-parameter
tuning, a practitioner trains their model with a menu of settings before selecting
a model; in inference-time reasoning, an inference system issues multiple LLM
queries or decoding strategies before choosing the best response. The classical
Pandora's Box problem, originally studied by Weitzman \cite{weitzman1979optimal}, captures
the sequential version of this tradeoff. Each option has an inspection cost and
an unknown reward, and the decision maker adaptively decides which option to
inspect next and when to stop.

In many of these applications, however, inspections are naturally parallelized. Ad variants in A-B testing are exposed
to users in cohorts, experiments are tested on a pool of machines, and queries to an LLM are submitted together to reduce latency. This parallelism creates a batched information structure, as the policy must commit
to a set of inspections before seeing any of their outcomes. A batch may also have
a setup cost, capturing the overhead of launching a new experimental framework, allocating
computing resources, or coordinating a marketing campaign. Thus batching trades off the cost of latency with the savings provided by adaptivity.

While batched feedback has been studied in bandit problems \cite{perchet2016batched,gao2019batched,grover2018delayed,verma2022delayedbo},
it has seen comparatively less study in the Pandora's Box problem. We initiate this
direction by studying a batched version of the Pandora's Box problem. In the \emph{Batched Pandora's Box model}, we are given $n$ boxes, each with a distribution of a hidden
reward $V_i$, as well as an opening cost $c_i$. A policy groups boxes to create
the next batch to be opened, and a feasible batch contains at most $k$ boxes for a
given batch \emph{capacity} $k$ and costs a fixed setup cost $T$ to open in addition to the individual opening costs. Once a batch is opened, all rewards in the batch are revealed simultaneously. One can then choose to stop and obtain the maximum reward observed over all opened batches or continue to open another batch.

We introduce two variants of this model: The \emph{non-reusable setting} where an
opened box is consumed and cannot be opened again, and the \emph{reusable setting} where each box can be opened multiple times and provides a fresh i.i.d.~sample each time. The distinction
matches the difference between one-time opportunities, such as allocating a
consumer cohort to an ad variant, and repeatable experiments, such as running
additional inference trials of the same LLM. Although this model is a
direct analogue of the classical problem with delayed feedback, the resulting
optimization problem differs significantly from classical Pandora's Box, requiring
us to examine which elements of Weitzman's solution to the classical problem
carry over to the batched model.

\subsection{Our Contributions}
\label{subsec:contributions}

We organize our results around three questions raised by batched feedback.

\paragraph{What survives of Weitzman's index rule under batched feedback?}
The answer depends on reusability. For non-reusable boxes, adaptive batch
formation is useful: a policy can choose later batches after seeing
earlier outcomes and thereby outperform every fixed menu of disjoint batches. This
advantage is bounded, however. We prove that every adaptive non-reusable policy
has a fixed-menu counterpart with at least half its value, using a random
root-to-leaf path of the adaptive decision tree in the spirit of stochastic
probing \cite{gupta2017adaptivity,bradac2019near}. For reusable boxes, we can
recover Weitzman's index structure. Each feasible batch $\vec{a}$ has a reservation
value $\sigma(\vec{a})$, and an optimal policy repeatedly opens a
maximum-reservation batch until its realized value exceeds that threshold.

\paragraph{Is exact or ordinary approximation tractable?}
The answer is negative, in a strong sense. Specifically, exact optimization is NP-hard
for the non-reusable problem for every fixed finite batch capacity $k\geq3$. Moreover, with
variable batch capacity, and even with $k=\infty$, both the non-reusable and reusable
models have a sign gap: it is NP-hard to distinguish $\OPT=0$ from
$\OPT>0$. The reductions come from \textsc{Subset Product}
\cite{garey1979computers}, where the input consists of integers $b_1,\ldots,b_n$ and a
target $Q$, and the question is whether some subset has product exactly $Q$.
For each \(b_i\), we create a box with a two-point distribution that gives value \(M\) with probability \(1-1/b_i\) and value \(0\) with probability \(1/b_i\). Hence a batch corresponding to a subset \(S\) fails with probability \(1/\prod_{i\in S} b_i\). By calibrating the setup and item costs, we make it so that a batch has positive surplus exactly when the product of the constituent boxes' $b_i$'s is the target \(Q\).
Therefore, finding any positive-value policy certifies a YES instance of
\textsc{Subset Product}. This sign gap rules out ordinary multiplicative
approximation under the original costs, since any positive approximation
guarantee would distinguish zero-optimum instances from positive-optimum
instances.

\paragraph{Can a discounted-cost relaxation recover useful algorithmic guarantees?}
The sign gap motivates a bi-criteria benchmark: compare against the original-cost
optimum, but evaluate the returned policy with discounted costs. For a policy
$\pi$, with maximum observed value $M_\pi$ and total cost $C_\pi$, let
\[
    \operatorname{val}_{\lambda}(\pi)
    =
    \mathbb{E}[M_\pi]-\lambda\,\mathbb{E}[C_\pi].
\]
An $(\alpha,\lambda)$ discounted-cost bi-criteria guarantee means that, for every
$\varepsilon>0$, there is an algorithm returning an \emph{efficient} policy $\widehat{\pi} = \widehat{\pi}(\eps)$ with
\[
    \operatorname{val}_{\lambda}(\widehat{\pi})
    \geq
    \alpha\operatorname{OPT}_1-\varepsilon,
\]
where $\operatorname{OPT}_1$ is the expected value of the optimal adaptive policy with the original (non-discounted) costs. For finite-support distributions, our algorithms for both variants are based on rounding a linear programming relaxation of the batched problem. In the non-reusable model, the LP records opening
probabilities, expected batch count, and probabilities that each box is the
maximum in a batch, and upper bounds
every nonempty adaptive policy. In a solution of this LP, the batch count can
be fractional, and so we need to round it first to an integral value. Then,
we use dependent rounding to partition the boxes into legal batches. We incur
a loss of $2(\sqrt{2}-1)$ from the batch count rounding and of $1 - 1/e$ from
the correlation gap of the dependent rounding scheme, yielding in total
\[
    \operatorname{val}_{\alphaNR}(\widehat{\pi})
    \geq
    \alphaNR\operatorname{OPT}_1-\varepsilon,
    \qquad \text{where } \quad 
    \alphaNR=(1-1/e)2(\sqrt{2}-1).
\]
In the reusable model, we reduce the problem to finding a single batch, corresponding to a multi-set of boxes, to sample repeatedly until another sample no longer pays for its cost. We design a feasibility LP
\(\mathrm{RLP}_\tau\) based on a candidate reservation value of the desired batch
that is feasible when there is a batch certifying a reservation value at least $\tau$.
Binary search finds the largest certified threshold up to a small additive error,
and independent slot rounding turns the certificate into an integral batch. In total,
we only incur a rounding loss of $\alphaR=1-1/e$, giving a stronger bound of
\[
    \operatorname{val}_{\alphaR}(\widehat{\pi})
    \geq
    \operatorname{OPT}_1-\varepsilon .
\]

\subsection{Related Work}
\label{subsec:related-work}
\paragraph{Pandora's Box and costly inspection.}
Weitzman's seminal work \cite{weitzman1979optimal} initiated study of the Pandora's Box setting and showed that independent boxes with inspection costs admit an optimal reservation-value policy.
Subsequent work extends this costly-search
view in several directions, including generalized objectives
\cite{olszewski2015general}, market-search coordination
\cite{kleinberg2016descending}, nonobligatory inspection
\cite{beyhaghi2019nonobligatory,fu2022nonobligatory,beyhaghi2022structure,scully2024local},
combinatorial or online feasibility constraints
\cite{singla2017price,boodaghians2020order,berger2023combinatorial,bowers2024nested,chawla2024combinatorial,chawla2025commitment},
and correlated or sample-access variants
\cite{chawla2019correlations,chawla2021approximating,gergatsouli2023correlations}. 
Another recent line studies online and learning variants of Pandora's Box under online-arrival, contextual, bandit, semi-bandit, and LLM-motivated feedback models
\cite{esfandiari2019online, gergatsouli2022online, atsidakou2024contextual, gatmiry2024bandit, agarwal2024semibandit, liu2025improved, kalayci2025optimal, belloni2026online}.
These works mainly change the distributional knowledge, feedback available across repeated rounds, contextual structure, or feasible selections. In contrast, our model is a known-distribution, single-instance Pandora's Box problem with independent rewards. We keep the best-observed-value objective and instead change the feedback structure: inspections must be launched in capacity-limited batches with setup costs.

\paragraph{Stochastic probing and rounding.}
Our structural and algorithmic tools also connect to stochastic probing and
submodular rounding. Stochastic probing studies the value of adaptivity when
random elements are probed subject to feasibility constraints, including
constant adaptivity gaps for submodular and XOS objectives and near-optimal
results for multi-value probing \cite{gupta2017adaptivity,bradac2019near}. Our
non-reusable adaptivity-gap proof uses the same random-path perspective, but
with Pandora-specific accounting through reservation values, capped rewards,
and fixed-menu Pandora policies. The LP algorithms use standard
correlation-gap and dependent-rounding ideas, including pipage rounding and
correlation-robust stochastic optimization
\cite{calinescu2011maximizing,agrawal2009correlation}.

\paragraph{Delayed feedback.}
Delayed feedback has been studied as a natural model in bandits and online
learning. One line of work studies how late observations affect regret in
stochastic, Bayesian, or
adversarial bandits, including general delay reductions, delayed conversions,
anonymous aggregated rewards, and unrestricted delays
\cite{guha2010delayed,joulani2013delayed,vernade2017delayed,pikeburke2017anonymous,thune2019unrestricted,wu2022thompson}.
Another line studies batched or parallel experimentation, where actions are chosen in groups before feedback from the group is available, for regret minimization, best-arm identification, or Bayesian optimization
\cite{perchet2016batched,gao2019batched,grover2018delayed,verma2022delayedbo}.

These works share the practical obstruction that decisions may have to be made
before earlier outcomes are observed, but their objective is typically
statistical: regret or sample complexity. Our model is a single-instance search problem.
Here delayed feedback enters through capacity-constrained batches with setup
costs, so the delay changes the economics and commitment structure of costly
inspection rather than only the rate at which a learner receives information.

\section{Preliminaries}
\label{sec:preliminaries}

In the classical Pandora's Box problem, there are $n$ boxes numbered
$[n]=\{1,\ldots,n\}$. Box $i$ has an independent non-negative reward $V_i$ with
finite mean and a deterministic opening cost $c_i\geq0$. Opening box $i$ costs
$c_i$ and immediately reveals $V_i$. Weitzman's rule assigns each box a
reservation value
\[
\tau_i=\sup\left\{\tau\in\mathbb{R}:
  \mathbb{E}\bigl[(V_i-\tau)^+\bigr]\geq c_i
\right\},
\]
opens boxes in decreasing order of $\tau_i$, and stops when the best observed
reward is at least as large as the largest remaining reservation value
\cite{weitzman1979optimal}.

\paragraph{Model.} We study a batched version of this problem. We group the boxes into batches
where each batch contains at most $k$ boxes and, to open a nonempty batch,
we have to pay a setup cost $T \geq 0$ in addition to the individual opening
costs. When a batch is opened, all rewards in it are revealed together. We
consider two variants: a non-reusable model, as in the classical problem,
where opened boxes are consumed and thus batches partition the boxes, and
a reusable model, where each box can provide multiple i.i.d.~samples and
thus batches correspond to multisets of boxes.

\begin{definition}[Non-Reusable Batched Pandora's Box, $\BP(T,k)$]
A feasible batch is a nonempty subset $B$ of currently unopened boxes with
$|B|\leq k$. Opening $B$ costs
\[
K(B)=T+\sum_{i\in B}c_i,
\]
reveals $(V_i)_{i\in B}$, and consumes all boxes in $B$. If $k=\infty$, the
cardinality constraint is omitted.
\end{definition}

Since the batches partition the $n$ boxes in the non-reusable model, $\BP(T,k)$
is identical to $\BP(T,n)$ whenever $k\geq n$.

\begin{definition}[Reusable Batched Pandora's Box, $\RBP(T,k)$]
Each box type \(i\in[n]\) has a strictly positive sampling cost \(c_i>0\).\footnote{This assumption is relevant only when \(k=\infty\); it rules out degenerate instances in which arbitrarily many copies of a zero-cost type can be included without increasing the batch cost.}
A feasible reusable batch is a vector
$\vec{a}\in\mathbb{Z}_{\geq0}^n$ of multiplicities with
$1\leq \|\vec{a}\|_1\leq k$. Opening $\vec{a}$ costs
\[
K(\vec{a})=T+\sum_i a_i c_i
\]
and draws $a_i$ i.i.d.~samples of type $i$. The reward observed in batch $\vec{a}$
is
\[
X_{\vec{a}}=\max_{i\in[n]}\max_{1\leq r\leq a_i}V_{i,r},
\]
where $V_{i,r}$ denotes the $r$-th sample of type $i$. Since boxes (and thus batches)
are reusable, they remain available to be used in later rounds. If $k=\infty$,
the cardinality constraint is omitted.
\end{definition}

For a policy $\pi$ in any of these models, let $M_\pi$ be the maximum reward
observed by the policy, with $M_\pi=0$ if no box is opened, and let $C_\pi$ be
the total cost paid, including setup and individual opening costs. As in the
classical problem, we evaluate the performance of a policy $\pi$ by its expected
maximum reward minus its expected total cost. We define the discounted-cost value
\[
    \operatorname{val}_\lambda(\pi)
    =
    \mathbb{E}[M_\pi]-\lambda\,\mathbb{E}[C_\pi],
\]
where $\lambda\in[0,1]$. Thus, $\lambda=1$ corresponds to the original-cost
objective, while $\lambda<1$ corresponds to evaluating the policy after discounting
the costs. We refer to $\operatorname{val}_1(\pi)$ as the \emph{utility} of policy \(\pi\). Similarly, the \emph{utility} of a batch --- also called its \emph{one-shot surplus} --- is its expected maximum reward minus its opening cost. We write $\operatorname{OPT}_1$ for the optimal original-cost value,
taking the empty policy when the optimum is zero.
The source of the main departure of our batched model from the classical Pandora's Box is the setup cost.
\begin{observation}\label{prop:zero-setup}
When $T=0$, any chosen batch can be simulated by opening its boxes sequentially,
ignoring intermediate observations until the batch is complete; the same
applies to the fresh samples in a reusable batch. Since singleton batches (batches of size $1$) are
feasible, for every $k\geq1$, $\BP(0,k)$ has the classical singleton value (i.e. $\BP(0,k) = \BP(0, 1)$), and
if $c_i>0$ for every $i$, then $\RBP(0,k)$ has the reusable singleton value (i.e. $\RBP(0, k) = \RBP(0, 1)$).
\end{observation}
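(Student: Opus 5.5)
The plan is to prove the two inequalities $\BP(0,k)\le\BP(0,1)$ and $\BP(0,k)\ge\BP(0,1)$ separately, and likewise for $\RBP$. The lower bounds are immediate: when $T=0$, a singleton batch $\{i\}$ costs exactly $c_i$ and reveals exactly $V_i$. Hence every policy feasible in $\BP(0,1)$ is also feasible in $\BP(0,k)$ for every $k\ge1$, with the same reward and cost on every sample path, so $\BP(0,k)\ge\BP(0,1)$. The same argument applies in the reusable model with singleton multiplicity vectors $\vec{a}=e_i$.

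For the upper bound I will use a simulation argument. Fix an arbitrary (possibly adaptive) policy $\pi$ for $\BP(0,k)$. I will build a singleton policy $\pi'$ as follows. Whenever $\pi$ opens a batch $B=\{i_1,\dots,i_m\}$, $\pi'$ opens $i_1,\dots,i_m$ one at a time. During this block it ignores the revealed values: it does not stop early and does not change its plan. After the block it has observed exactly $(V_i)_{i\in B}$, so it updates its state and continues as $\pi$ would.

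By induction on the number of batches opened, the two policies then have the same information at every batch boundary. Therefore they open the same set of boxes on every realization and observe the same maximum, giving $M_{\pi'}=M_\pi$. Since $T=0$, the cost of the block is $\sum_{i\in B}c_i=K(B)$, so $C_{\pi'}=C_\pi$ pathwise. Taking expectations gives $\operatorname{val}_1(\pi')=\operatorname{val}_1(\pi)$, and taking the supremum over $\pi$ yields $\BP(0,k)\le\BP(0,1)$.

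The reusable case uses the same construction. A batch $\vec{a}$ is replaced by $\|\vec{a}\|_1$ successive singleton draws, $a_i$ of them of type $i$. These fresh draws are i.i.d.\ and independent of the past, so the observed vector $(V_{i,r})$ has the same joint law as in the batch, and $\sum_i a_ic_i=K(\vec{a})$ when $T=0$. To keep the coupling clean I will couple the samples explicitly, so that the singleton policy's draws are literally the batch's draws. The role of the assumption $c_i>0$ is this: if $k=\infty$, it keeps every finite-cost batch finite, so each block is a finite sequence of singleton openings.

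The main obstacle is making the comparison between values of adaptive policies rigorous, since policies may be randomized and, in the reusable case, unbounded in length. I will handle this by defining the value through the supremum over policies and noting that the simulation is a measure-preserving pathwise coupling. Equality of $M$ and $C$ on every path then transfers to the expectations. Policies with infinitely many rounds need no special treatment, because the costs are nonnegative and the reward is a running maximum, so monotone convergence applies to both $\mathbb{E}[M]$ and $\mathbb{E}[C]$.
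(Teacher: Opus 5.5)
Your proposal is correct and follows essentially the paper's own argument: the paper justifies the observation only by noting that singleton batches are feasible and that, when $T=0$, a batch (or a reusable multiset of fresh samples) can be simulated by sequential singleton openings that ignore intermediate observations, with the same cost and reward, which is exactly your two-inequality simulation. One small correction: your stated role for $c_i>0$ is off, because reusable batches are finite multiplicity vectors $\vec{a}\in\mathbb{Z}_{\geq0}^n$ by definition even when $k=\infty$, so your simulation never uses that hypothesis; in the paper it is a standing assumption of the reusable model that excludes degenerate zero-cost types and makes \cref{thm:reusable-index} applicable, and it plays no part in the simulation.
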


\paragraph{Computational considerations.}
We represent each distribution by a finite list of value-probability pairs
$(v_{i\ell},p_{i\ell})$, where $v_{i\ell}\geq0$, $p_{i\ell}>0$, and
$\sum_\ell p_{i\ell}=1$. We use $L_i$ for the (non-zero) support of the distribution
of box $i$, i.e. $\ell \in L_i \implies v_{i\ell} > 0$. The hardness reductions
use two-point distributions, while all approximation statements are stated for this
finite-support input model.

\paragraph{Submodular functions.}
For a set function $f:2^{[n]}\to\mathbb{R}_{\geq0}$, we say that $f$ is
normalized if $f(\varnothing)=0$, monotone if $f(A)\leq f(B)$ for all
$A\subseteq B$, and submodular if, for all $A\subseteq B\subseteq[n]$ and
$i\notin B$,
\[
f(A\cup\{i\})-f(A)\geq f(B\cup\{i\})-f(B).
\]
The function $F(x)=\mathbb{E}[f(S_x)]$ is called the multilinear extension of $f$,
where $S_x$ includes each element $i$ independently with probability $x_i$.
\section{Structural Observations}
\label{sec:structural}

We begin by examining how the index-based structure of Weitzman's policy changes under batching. 
A natural first attempt greedily merges boxes in order of decreasing reservation values. 
We show that this type of sort-and-merge consecutive boxes policy fails badly: in our construction, the only positive utility batch combines a non-consecutive group of boxes. Here, the optimal policy's batches can be made arbitrarily far apart. This rules out both greedy and dynamic programming approaches for partitioning the boxes into consecutive batches in the reservation-value order.  

Next, we examine the strength of adaptivity. In the non-reusable model, we present an example that establishes a gap between the value attained by adaptive policies and fixed-menu policies that commit to a partition of boxes into batches and run Weitzman's policy on the fixed batches. Then, we show that this adaptivity gap between the two types of policies can be at most a factor of $2$. Finally, we show that in the reusable setting, there is no gap between the optimal adaptive and non-adaptive policy. As a nice byproduct, we recover an index principle in the style of Weitzman: specifically, the optimal policy repeats the batch with the highest reservation value until the batch value exceeds the reservation threshold.

\subsection{Failure of Simple Batching Rules}

A natural first attempt is to import Weitzman's ordering rule into the batched
problem. Compute each box's reservation value as if it were opened alone, sort
boxes in decreasing order of these values, and then form batches from
consecutive blocks of this order. This preserves the intuition that high-index
boxes should be inspected early and, in the batched setting, possibly inspected
together. The next example shows that this rule misses a batched
effect: setup costs can make two separated boxes valuable together even though
every consecutive batch in the reservation-value order is unprofitable.

Take \(T=10\) and \(k=2\). Let \(A\) have item cost \(c_A=1\), with
\(\Pr[V_A=100]=0.1\) and \(\Pr[V_A=0]=0.9\). Let \(B\) be deterministic with
value \(9\), with item cost \(c_B=1\). Individually, neither box pays for the
setup cost. However, together they have expected maximum
\[
    \mathbb{E}[\max\{V_A,V_B\}]
    =
    0.1\cdot 100+0.9\cdot 9=18.1,
\]
so the batch \(\{A,B\}\) has positive utility.
We will design the instance so that the target boxes A and B will be arbitrarily far apart under Weitzman's reservation-value ordering.

Now consider a deterministic decoy box \(D\). Let \(D\) have value \(100\)
and item cost \(c_D=91\). Its reservation value is worse than \(A\)'s but better
than \(B\)'s, so the sorted order is $A\succ D\succ B$. At the same time, \(D\) never contributes a useful net value: it duplicates the
high outcome of \(A\), but its item cost is almost as large as its value. One
can verify that every feasible consecutive batch has negative utility, while \(\{A,B\}\) has positive utility. Therefore, no consecutive
batching algorithm, whether greedy or based on dynamic programming over
intervals, can obtain positive utility on this example. The exact calculations are given in
\cref{subsec:consecutive-batching-lower-bound}, which also proves a stronger
bi-criteria lower bound: for every batch capacity \(k\ge 2\) and every
\(\alpha>1/2\), no algorithm restricted to consecutive batches in the
reservation-value order can guarantee an
\((\alpha,\alpha)\)-bi-criteria approximation.

\subsection{Non-Reusable Model: Adaptivity Gap}

A fixed-menu policy first commits to a disjoint family $\mathcal{F}$ of feasible
batches. It then treats each batch $B\in\mathcal{F}$ as one box in a finite
classical Pandora instance, with reward $X_B=\max_{i\in B}V_i$ and cost
$K(B)$, and applies Weitzman's rule. Let
$\operatorname{OPT}_{\mathrm{fix}}(T,k)$ denote the best value of any such
fixed-menu policy. This benchmark preserves the Pandora stopping rule while
removing the branching structure of adaptive batch choice.

The natural question is whether the best fixed menu can match the value of the
best adaptive policy. For $k = 1$ the answer is yes, by the classical Pandora
rule. For larger batches, fixed menus lose the ability to branch on an early
observation before deciding which later batch to open. The following example
shows that this restriction can reduce value.

\begin{example}[Adaptive branching can help]\label{ex:strict-gap}
Consider three boxes with setup cost \(T=1/5\) and batch capacity \(k=2\). Box \(1\)
is deterministic with value \(1/2\). Box \(2\) is \(0\) with probability \(2/5\), \(1/2\) with probability \(2/5\), and \(9\) with probability \(1/5\). Box \(3\) is \(10\) with probability \(1/5\), and \(0\) otherwise. The item costs are \(c_1=c_2=\frac15\) and \(c_3=1\).

The adaptive policy first opens box \(2\). If box \(2\) realizes as \(9\), the
policy stops. If box \(2\) realizes as \(1/2\), it opens box \(3\) alone. If box
\(2\) realizes as \(0\), it opens the batch \(\{1,3\}\). The point is that box
\(1\) is useful only if box \(3\) realizes as $0$, i.e. when we don't already have a guarantee from box $2$.

A fixed menu cannot make this branch-dependent choice. It must decide in
advance whether box \(1\) is grouped with box \(3\), separated from box \(3\), or
grouped with box \(2\). For this instance, the adaptive policy has value $57/25 = 2.28$, whereas the best fixed-menu policy has value $279/125 = 2.232$. 
Thus adaptive branching strictly improves over every fixed menu. Appendix~\ref{app:strict-gap-example} gives a tuned version of the same
signal/fallback/risky-box template, with all parameters rounded to two decimal places; in that instance, the adaptive policy attains about \(1.193\) times the value of the best fixed-menu policy.
\end{example}

Nevertheless, the value of adaptive batch formation is bounded: every adaptive
policy is within a constant factor of the best fixed-menu policy.

\begin{theorem}[Non-Reusable Adaptivity Gap]\label{thm:bp-gap}
For every $T$ and $k$,
\[
\operatorname{OPT}_1^{\BP(T,k)}\leq
2\cdot\operatorname{OPT}_{\mathrm{fix}}(T,k).
\]
\end{theorem}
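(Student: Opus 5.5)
We prove the bound by converting an optimal adaptive policy $\pi$ into a fixed menu, following the random root-to-leaf path idea from stochastic probing. View $\pi$ as a decision tree. Each internal node is labeled by a batch $B$ of currently unopened boxes, and branches on the realization $(V_i)_{i\in B}$. Run $\pi$ on an independent \emph{ghost} copy $(V_i')_{i\in[n]}$ of the rewards. The batches along the resulting root-to-leaf path $P$ are disjoint and feasible, so they form a fixed menu $\mathcal{F}(P)$. Our fixed-menu policy draws $P$ this way and then runs Weitzman's rule on $\mathcal{F}(P)$ with the true rewards $V$. Since the optimal fixed menu does at least as well as this randomized one, it suffices to show
\[
\mathbb{E}_P\bigl[\operatorname{Weitz}(\mathcal{F}(P))\bigr]\ \geq\ \tfrac12\,\operatorname{val}_1(\pi).
\]

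The first step is to upper bound $\operatorname{val}_1(\pi)$ by an amortized quantity, using Weitzman's capped-reward accounting at the level of batches. Each batch $B$ is a classical box with reward $X_B=\max_{i\in B}V_i$, cost $K(B)$, and reservation value $\sigma_B$. Define the capped value $\kappa_B=\min(X_B,\sigma_B)$. The standard argument for any adaptive policy (Kleinberg–Waggoner–Weyl style) gives
\[
\operatorname{val}_1(\pi)\ \leq\ \mathbb{E}\Bigl[\max_{B\text{ opened by }\pi}\kappa_B\Bigr].
\]
The argument goes as follows. The term $\mathbb{E}[(X_B-\sigma_B)^+]=K(B)$ pays the cost in expectation, and the choice of $B$ depends only on information independent of $X_B$. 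This is exactly where the disjointness of non-reusable batches is used. Conversely, Weitzman's policy on a fixed menu $\mathcal{F}$ achieves exactly
\[
\mathbb{E}\Bigl[\max_{B\in\mathcal{F}}\kappa_B\Bigr],
\]
since it is optimal and never ``wastes'' a box. So both sides of the target inequality become expected maxima of capped values.

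The core step is then the inequality
\[
\mathbb{E}_{P,V}\Bigl[\max_{B\in\mathcal{F}(P)}\kappa_B(V)\Bigr]\ \geq\ \tfrac12\,\mathbb{E}\Bigl[\max_{B\in \pi(V)}\kappa_B(V)\Bigr].
\]
Here the left side uses independent path and rewards, and the right side uses the path driven by the rewards themselves. I would prove it via the threshold/layer-cake decomposition:
\[
\max_B \kappa_B=\int_0^\infty \mathbf{1}\bigl[\exists B:\kappa_B>t\bigr]\,dt .
\]
Fix a level $t$. Let $q(t)$ be the probability that the adaptive run hits some batch with $\kappa_B>t$. Define the stopping time at which the adaptive run first sees $\kappa_B>t$, and couple the ghost run with the real run until that time.

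On the real path, the probability that some batch on $P$ exceeds $t$ is
\[
1-\mathbb{E}\Bigl[\prod_{B\in P}(1-p_B(t))\Bigr], \qquad p_B(t)=\Pr[\kappa_B>t].
\]
The adaptive probability is $q(t)=\mathbb{E}_P\bigl[\sum_{B\in P} p_B(t)\prod_{\text{earlier}}(1-p)\bigr]$. This identity holds because, along the tree, the event ``continue along $P$'' is correlated with outcomes. So we need a tool comparing these. The natural one is the standard inequality $1-\prod(1-p_j)\geq \tfrac12\min(1,\sum p_j)$, together with the identity $\mathbb{E}_P\bigl[\sum_{B\in P}p_B(t)\bigr]\geq q(t)$. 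The latter holds since the expected number of threshold crossings along the random path upper bounds the crossing probability.

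Combining the two with concavity of $\min(1,\cdot)$ does not directly give $\tfrac12 q(t)$, because Jensen goes the wrong way. This is the main obstacle. What I would try first is to truncate the ghost path at the first time its accumulated mass $\sum p_B(t)$ reaches $1$, using prefix monotonicity: a sub-menu is still a menu, and the $\max$ is monotone. The alternative is to invoke the multi-value probing argument of \cite{bradac2019near} for the function $\max$ of $\kappa$'s, which is monotone submodular across disjoint independent batches; their random-path adaptivity gap bound of $2$ for such functions would then apply directly. Integrating over $t$ finishes the proof.
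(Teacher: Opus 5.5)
\textbf{There is a genuine gap: the key factor-$\tfrac12$ random-path inequality is never proved.} Your reduction matches the paper's: the capped-value bound $\operatorname{val}_1(\pi)\le\mathbb{E}[\max\kappa_B]$, the identification of a fixed menu's Weitzman value with $\mathbb{E}[\max(\{0\}\cup\{\kappa_B\})]$, the ghost path, and the layer-cake decomposition. (One minor point: $\kappa_B$ needs a convention for overpriced batches. The paper avoids this by first showing that an optimal policy never opens them.)

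The missing piece is the whole content of the theorem: at each level, the ghost path hits with at least half the adaptive probability. Your threshold route cannot give $\tfrac12$. Take a long chain in which every batch exceeds level $t$ with probability $\varepsilon$, and the policy continues until the first exceedance (depth $N$, with $\varepsilon\to0$ and $N\varepsilon\to\infty$).
\begin{itemize}
\item Then $q(t)\to1$.
\item The ghost-path mass $S_P=\sum_{B\in P}p_B(t)$ is asymptotically $\mathrm{Exp}(1)$, so $\tfrac12\,\mathbb{E}_P[\min\{1,S_P\}]\to\tfrac12(1-1/e)<\tfrac12$.
\item Truncating the ghost path only lowers the left-hand side.
\item The ghost path itself achieves exactly $\tfrac12$ here, so there is no slack to absorb these losses.
\end{itemize}
Also, your displayed ``identity'' for $q(t)$ is actually the ghost-path hitting probability $\mathbb{E}_P[1-\prod_{B\in P}(1-p_B(t))]$, not the adaptive one. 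In this example the two differ by a factor of $2$.

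The paper's argument is an induction on tree depth, done separately for each threshold $\tau$. Let $e$ be the root batch, $p=\Pr[\kappa_e\ge\tau]$, and let $a_\tau,r_\tau$ be the adaptive and ghost-path hitting probabilities. The real root mark is independent of the ghost observation $\tilde O_e$, and every continuation is disjoint from $e$. Hence
\[
r_\tau(\mathcal{T})=p+(1-p)\,\mathbb{E}\bigl[r_\tau(\mathcal{T}_{\tilde O_e})\bigr]\ \ge\ p+\frac{1-p}{2}\,\mathbb{E}\bigl[a_\tau(\mathcal{T}_{\tilde O_e})\bigr].
\]
Meanwhile $a_\tau(\mathcal{T})=p+B$, where $B=\mathbb{E}[\mathbf{1}\{\kappa_e<\tau\}\,a_\tau(\mathcal{T}_{O_e})]\le\mathbb{E}[a_\tau(\mathcal{T}_{\tilde O_e})]$. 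So $r_\tau-\tfrac12 a_\tau\ge\tfrac p2(1-B)\ge0$. Integrating over $\tau$ gives the factor $2$, and some deterministic path, i.e.\ a fixed menu, does at least as well as the average.

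Your alternative of citing the adaptivity gap of $2$ for monotone submodular multi-value probing \cite{bradac2019near} can also work, but not ``directly''. That model has fixed independent items, whereas overlapping candidate batches are correlated. To apply it you must:
\begin{itemize}
\item observe that batches are disjoint along every tree path and within every fixed menu, so replacing each batch by an independent copy changes neither side;
\item then invoke the result with the prefix-closed family of sequences of disjoint feasible batches and $f=\max$ of the marks.
\end{itemize}
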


The proof is existential and is given in \cref{app:structural-proofs}. It
utilizes a random-path rounding argument in the style of stochastic probing
\cite{gupta2017adaptivity,bradac2019near}: one samples a root-to-leaf path of
the adaptive decision tree on an independent ``ghost'' instance and then plays
that fixed path on the real instance. For this style of argument, we must ensure that along any realization of box values, the batches consist of disjoint sets of boxes. 
Then, we use Weitzman's capped-value analysis to turn this path into a fixed-menu policy losing at most a factor two.

\subsection{Reusable boxes: one repeated batch}

We turn our attention to the reusable setting. Since opening a reusable
batch does not consume future access to it, each reusable batch can be viewed as
a box with infinitely many independent copies. The dynamic problem therefore
reduces to identifying the batch with the largest reservation value. For a reusable batch $\vec{a}$, recall that $X_{\vec{a}}$ is the best reward among the fresh
copies drawn by opening $\vec{a}$, and $K(\vec{a})$ is its setup-plus-sampling cost. If
$0<K(\vec{a})\leq\mathbb{E}[X_{\vec{a}}]$, define its reservation value $\sigma(\vec{a})$ by $K(\vec{a})=\mathbb{E}\bigl[(X_{\vec{a}}-\sigma(\vec{a}))^+\bigr]$.
Batches with $K(\vec{a})>\mathbb{E}[X_{\vec{a}}]$ are overpriced. Let
\[
\sigma^\star=
\max\Bigl(\{0\}\cup
\{\sigma(\vec{a}):\vec{a}\text{ is feasible and not overpriced}\}\Bigr).
\]

\begin{theorem}[Reusable Reservation Index]\label{thm:reusable-index}
Assume values lie in $[0,V_{\max}]$ and every non-empty feasible batch has
positive opening cost. For $k=\infty$, also assume every $c_i>0$. If
$\sigma^\star=0$, opening nothing is optimal. If $\sigma^\star>0$, then for any feasible
batch $\vec{a}^{\star}$ with $\sigma(\vec{a}^{\star}) = \sigma^*$, the policy that repeatedly opens $\vec{a}^{\star}$ until it observes a reward exceeding $\sigma^\star$ is optimal.
\end{theorem}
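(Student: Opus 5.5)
The plan is to adapt Weitzman's capped-value (amortization) argument to the reusable setting, treating each feasible batch $\vec{a}$ as a ``box type'' with infinitely many i.i.d.\ copies, reward $X_{\vec{a}}$, cost $K(\vec{a})$ and reservation value $\sigma(\vec{a})$. Overpriced batches get reservation value $-\infty$ (or any value $\le 0$). The proof has two halves: an upper bound $\operatorname{OPT}_1\le \sigma^\star$ for every policy, and a matching lower bound showing the repeat-$\vec{a}^\star$ policy achieves exactly $\sigma^\star$. When $\sigma^\star=0$ the upper bound alone shows the empty policy (value $0$) is optimal.

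\textbf{Upper bound.} Fix any adaptive policy $\pi$. Assume first that it opens at most $N$ batches almost surely; general policies are handled by truncation and monotone or dominated convergence, using $V_{\max}<\infty$ and the fact that every opening has cost at least $\min(T+\min_i c_i,\ldots)>0$. This bound on cost is where the assumption that every nonempty batch has positive cost, and $c_i>0$ when $k=\infty$, enter; they imply $\mathbb{E}[\#\text{openings}]<\infty$ for any policy with finite cost. For the $t$-th opened batch $\vec{a}_t$ with realized reward $X_t$, define the capped value $\kappa_t=\min(X_t,\sigma(\vec{a}_t))$. By the definition of $\sigma$, $\mathbb{E}[(X_t-\sigma(\vec{a}_t))^+\mid\mathcal{F}_{t-1}]=K(\vec{a}_t)$ whenever $\vec{a}_t$ is not overpriced, and $\mathbb{E}[X_t\mid\mathcal{F}_{t-1}]<K(\vec{a}_t)$ when it is. Since the sample is fresh and independent of the past, optional stopping gives $\mathbb{E}[C_\pi]\ge \mathbb{E}\bigl[\sum_t (X_t-\sigma(\vec{a}_t))^+\bigr]$, with the overpriced case handled by setting $\sigma(\vec{a})=0$ and using $\mathbb{E}[X^+]<K$. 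Then
\[
M_\pi-\textstyle\sum_t (X_t-\sigma(\vec{a}_t))^+\le \max_t \kappa_t\le \max(0,\sigma^\star),
\]
because the maximum is attained at some $t$ where $X_t-(X_t-\sigma_t)^+=\kappa_t$, and the other terms are nonnegative. Taking expectations gives $\operatorname{val}_1(\pi)\le\sigma^\star$.

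\textbf{Lower bound.} For the repeat-$\vec{a}^\star$ policy, the number of openings is geometric. Success in each round has probability $\Pr[X>\sigma^\star]$, which is positive since $K>0$ forces $\mathbb{E}[(X-\sigma^\star)^+]>0$. The final reward $M$ is distributed as $X$ conditioned on $X>\sigma^\star$, while every failed round contributes $X_t\le\sigma^\star$ and is dominated. Write $p=\Pr[X>\sigma^\star]$. Then
\[
\mathbb{E}[C]=K/p=\mathbb{E}[(X-\sigma^\star)^+]/p=\mathbb{E}[M]-\sigma^\star,
\]
so the value equals $\sigma^\star$ and matches the bound. Equivalently, the capped-value inequalities above are tight: the policy never stops while $\kappa$ could still improve, and it stops only after $X_t>\sigma_t$, so $M=\kappa+\text{excess}$ exactly.

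\textbf{Main obstacle.} The delicate step is showing that $\sigma^\star$ is well defined, i.e.\ that the max over infinitely many batches is attained. For finite $k$ there are finitely many feasible batches, so this is immediate. For $k=\infty$, every $c_i>0$ means $K(\vec{a})\ge T+\|\vec{a}\|_1\min_i c_i$ grows linearly, while $\mathbb{E}[X_{\vec{a}}]\le V_{\max}$. Hence only batches with $\|\vec{a}\|_1\le V_{\max}/\min_i c_i$ are not overpriced, and the maximum is over a finite set. A secondary technical point is justifying optional stopping for policies with unbounded horizons. Here I would use the fact that the partial sums $\sum_{t\le \tau\wedge N}\bigl(K_t-(X_t-\sigma_t)^+\bigr)$ form a supermartingale with bounded increments, together with $\mathbb{E}[\#\text{openings}]<\infty$.
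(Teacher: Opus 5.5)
Your proof is correct, but it establishes the upper bound by a different route than the paper. The paper does not use capped values in the reusable setting. It first shows $\mathbb{E}[(X_{\vec{a}}-s)^+]\le K(\vec{a})$ for every feasible batch and every $s\ge\sigma^\star$, handling overpriced batches via $\mathbb{E}[X_{\vec{a}}]<K(\vec{a})$. It then sets $W(m)=\max\{m,\sigma^\star\}$ and checks the one-step Bellman inequality $-K(\vec{a})+\mathbb{E}\,W(\max\{m,X_{\vec{a}}\})\le W(m)$ in both regimes $m<\sigma^\star$ and $m\ge\sigma^\star$. Every policy is then bounded by backward induction, with truncation handled by bounded convergence for the reward and monotone convergence for the cost.

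You instead run Weitzman's pathwise amortization. Your argument combines the per-round inequality $\mathbb{E}[(X_t-\sigma_t)^+\mid\mathcal{F}_{t-1}]\le K(\vec{a}_t)$, which holds because reusable openings draw fresh samples, with the pathwise bound $M_\pi-\sum_t(X_t-\sigma_t)^+\le\max_t\kappa_t\le\sigma^\star$. This is exactly the argument the paper uses for non-reusable boxes in its capped-value lemma. Your route therefore unifies the two settings and makes the tightness of the index policy transparent: it recovers all the excess it pays for, because it stops only after an exceedance. The paper's verification argument is shorter and needs only the single family of inequalities at thresholds $s\ge\sigma^\star$. It also yields the optimal continuation value $\max\{m,\sigma^\star\}$ from every incumbent $m$. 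The finiteness argument for $\sigma^\star$ and the geometric-repetition lower bound coincide with the paper's.

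Two small repairs are needed. First, an overpriced batch cannot be given index $-\infty$ or an arbitrary nonpositive value, since the argument requires $\mathbb{E}[(X-\sigma)^+]\le K$. The choice $\sigma=0$ that you eventually use works, as does any value in $[\mathbb{E}[X]-K,0]$. Second, optional stopping with bounded increments is both unnecessary and false for $k=\infty$, where batch costs are unbounded. Since all summands are nonnegative, Tonelli gives $\mathbb{E}\sum_t I_t(X_t-\sigma_t)^+=\sum_t\mathbb{E}\bigl[I_t\,\mathbb{E}[(X_t-\sigma_t)^+\mid\mathcal{F}_{t-1}]\bigr]\le\mathbb{E}[C_\pi]$ directly.
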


The theorem is the reusable analogue of Weitzman's index rule. A copy of the
best reservation batch remains available after every failed attempt, so the
index policy continues to choose it. The boundedness and positive-cost assumptions
ensure that the relevant batches that are not overpriced are finite in number; the
formal details are in \cref{app:structural-proofs}.

The repeated-batch value equals the reservation value. If batch $\vec{a}$ has
reservation value $\sigma(\vec{a})$ and
$p=\Pr[X_{\vec{a}}>\sigma(\vec{a})]$, then repeatedly opening it until the first
success gives
\[
\begin{aligned}
\mathbb{E}[X_{\vec{a}}\mid X_{\vec{a}}>\sigma(\vec{a})]
-\frac{K(\vec{a})}{p}
&=
\frac{\mathbb{E}\!\left[X_{\vec{a}}\mathbf{1}\{X_{\vec{a}}>\sigma(\vec{a})\}\right]}{p}
-\frac{K(\vec{a})}{p} =
\sigma(\vec{a})+
\frac{\mathbb{E}[(X_{\vec{a}}-\sigma(\vec{a}))^+]-K(\vec{a})}{p} \\
&= \sigma(\vec{a}).
\end{aligned}
\]
The last equality uses
\(\mathbb{E}[(X_{\vec{a}}-\sigma(\vec{a}))^+]=K(\vec{a})\), by the definition of
\(\sigma(\vec{a})\).
\section{Hardness Results}
\label{sec:hardness}

In this section, we show that the various settings of Batched Pandora's Box are NP-hard. Broadly, our reductions construct instances of Batched Pandora's Box in which even building a batch with positive utility encodes NP-hard search problems.

\paragraph{Product gadget.}
Our reductions use a common product gadget.
Each box has the same positive prize $M$, and the box corresponding to \emph{label} $b_i$ fails with probability $1/b_i$.
A batch whose labels multiply to weight $w = \prod_{i \in B} b_i$ then fails with probability $1/w$.
We set the item costs and the batching cost carefully as a function of the weights of the boxes, ensuring a batch offers positive opening utility if and only if its labels multiply to a prescribed target value $Q$.
Consequently, finding a batch with positive opening utility encodes an NP-hard search problem, yielding hardness for Batched Pandora's Box.
For clarity, we initially present the proof using exact logarithms, and then show that rational approximations to the logarithms (of polynomial length) still preserve this property.
The formal proofs are presented in Section \ref{sec:hardness-proofs}.


\paragraph{Non-reusable boxes.}
We first investigate the non-reusable problem. Exact optimization is already
intractable for every fixed finite batch capacity at least three.

\begin{theorem}[Fixed Finite-Batch-Capacity Exact Hardness]
\label{thm:bp-fixed-hard}
For every fixed $k\geq3$, exact optimization of $\BP(T,k)$ is NP-hard, even when
$T>0$ and all boxes have support $\{0, M\}$ for a shared $M$. Equivalently,
deciding whether $\operatorname{OPT}_1\geq\Theta$ for a rational threshold
$\Theta$ is NP-hard.
\end{theorem}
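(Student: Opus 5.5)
The plan is to reduce from a multiplicative analogue of \textsc{3-Partition}, which I will call \textsc{Product 3-Partition}. Its input is $3m$ integers $b_1,\ldots,b_{3m}$ and a target $Q$ with $\prod_i b_i=Q^m$, and it asks whether the $b_i$ can be partitioned into $m$ triples, each with product exactly $Q$. The first step is to check that this problem is NP-hard, ideally strongly so. I would get this by the same log/exponent encoding that makes \textsc{Subset Product} and \textsc{Product Partition} hard, applied to \textsc{3-Partition} or \textsc{Numerical 3-Dimensional Matching} in place of \textsc{Partition}; I expect this part to be routine. For $k>3$, I would pad each intended batch with $k-3$ dummy boxes labelled $b=1$, i.e.\ zero-cost, always-failing boxes, and make them harmless to include. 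Alternatively, I would reduce from the $k$-element version of the problem directly.

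The second step is the instance. Every box has support $\{0,M\}$, and box $i$ fails with probability $q_i=b_i^{-\eta}$ for a small parameter $\eta>0$, so a batch with label product $w$ fails with probability $w^{-\eta}$. Its cost is $c_i=\beta\ln b_i$, so the batch cost is $T+\beta\ln w$. The one-shot surplus of a batch is then
\[
  g(w)=M\bigl(1-w^{-\eta}\bigr)-\beta\ln w-T,
\]
which is strictly concave in $u=\ln w$ and is maximized at $w=Q$ when $\beta=M\eta Q^{-\eta}$. I would choose $T$ so that $g(Q)=\delta>0$. Since labels are integers, any $w\neq Q$ has $|\ln w-\ln Q|\geq\ln(1+1/Q)$, and strict concavity then gives the gap $g(w)\leq\delta-\gamma$ with $\gamma$ explicit. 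As in the paper's product gadget, I would first argue with exact logarithms and real exponents. I would then replace $q_i$ and $c_i$ by polynomial-length rational approximations, keeping the approximation error below $\gamma/10$ so that the gap survives.

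The third step analyzes policies. With support $\{0,M\}$, a policy stops at the first observed $M$, and any all-zero history carries no information beyond which boxes have been consumed. So, without loss of generality, a policy is a fixed ordered sequence of disjoint batches $B_1,\ldots,B_r$, and its value is
\[
  \sum_{j}\Bigl(\prod_{l<j}W_l^{-\eta}\Bigr)\,g(W_j),
\]
where $W_j$ is the label product of $B_j$. Taking $\eta$ small makes every discount factor at least $(Q^m)^{-\eta}\geq1-\epsilon$, so the value lies within $O(\epsilon mM)$ of $\sum_j g(W_j)$.

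\begin{itemize}
  \item \textbf{YES direction.} A valid partition into triples gives value about $m\delta$.
  \item \textbf{NO direction.} Every batch sequence either has some batch with $W_j\neq Q$, losing at least $\gamma$, or uses fewer than $m$ batches of product $Q$, and each batch earns at most $\delta$.
\end{itemize}

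Setting $\Theta=m\delta-\gamma/2$ therefore separates the two cases.

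The main obstacle is the parameter balancing. The quantities $M\eta$, $\beta$, $\delta$ and $\gamma$ all scale with $\eta$, while the discount error must be $o(\gamma)$ summed over $m$ batches. I would first try $\delta=\gamma/(4m)$ and take $\eta$ polynomially small, for instance $\eta\leq\gamma/(10\,m^2 M\ln Q)$ after rescaling. I would then verify that all required numbers stay of polynomial bit-length, which is where strong NP-hardness of the source problem is needed. A second subtle point is that batches with product greater than $Q$ must also lose at least $\gamma$. In the NO case, a policy must not profit from a few good triples plus leftover boxes. This follows because any non-$Q$ batch has $g\leq\delta-\gamma<0$ and good batches number at most $m-1$; I would bound the number of batches used by $3m$ to control the total discount error.
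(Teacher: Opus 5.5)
Your skeleton matches the paper's. The paper's gadget is exactly your choice $\eta=1$, $\beta=M/Q$, with surplus $H-S\,D(w/Q)$. Before the first $M$ a policy is just an ordered sequence of disjoint batches, and one counts disjoint target batches. Labels $2^{s_i}$ from \textsc{3-Partition} are a fine substitute for the paper's Numerical 3DM with class primes. The gap is in the separation step, which is the whole content of an exact-hardness proof.

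First, the threshold is wrong. Nothing in your argument stops a NO instance from containing several disjoint product-$Q$ batches, and a policy opening $\nu$ of them earns about $\nu\delta$. So YES and NO differ by roughly one target batch, i.e.\ by $\delta$, not $\gamma$. With your choice $\delta=\gamma/(4m)$ you get $\Theta=m\delta-\gamma/2=-\gamma/4<0$, which the empty policy already beats in every NO instance.

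Second, your additive error control cannot be fixed by tuning. In $u=\ln w$ one has $g''(u)=-M\eta^2e^{-\eta u}$, so the gap from concavity at distance $\ln(1+1/Q)$ is $\gamma\le M\eta^2/(2Q^2)$. Your discount error $O(\epsilon mM)$, with $\epsilon\ge 1-Q^{-m\eta}\approx m\eta\ln Q$, is of order $m^2\eta M\ln Q$. Their ratio grows as $\eta\to0$, rescaling $M$ changes nothing, and your condition $\eta\le\gamma/(10m^2M\ln Q)$ actually forces $\eta\ge 20m^2Q^2\ln Q$.

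The fix is to argue multiplicatively. Discount factors lie in $(0,1]$ and every non-target batch has $g<0$. So any sequence is worth at most $\delta$ times its number of target batches, hence at most $(m-1)\delta$ in a NO instance. The YES packing is worth $\delta\sum_{j<m}Q^{-j\eta}>(m-1)\delta$ once $Q^{-(m-1)\eta}>1-1/m$. Put $\Theta$ strictly between these, with rational-approximation errors small relative to $\delta$.

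The paper's route is simpler and needs no small parameter. With $\eta=1$ the probabilities $1-1/b_i$ are already rational. Approximate only $\ln 2$ (and the class primes' logarithms) and build $T$ from the same surrogates. Then every target batch has exactly the same cost, failure probability $q=1/Q$ and surplus $\Delta$. Hence $\operatorname{OPT}_1=\Delta(1+q+\cdots+q^{\nu-1})$ exactly, which is strictly increasing in $\nu$.

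Separately, for $k\ge4$ your label-$1$ padding misses the real issue, and such boxes do not have support $\{0,M\}$. With larger capacity, batches can reach product $Q$ without being valid triples. For example, exponents $\{7,5\}$ and $\{3,3,3,3\}$ with $B=12$ admit no valid triple split, yet give two disjoint target batches when $k=4$. You need every product-$Q$ set of at most $k$ boxes to be a valid triple. Keeping the \textsc{3-Partition} bounds $B/4<s_i<B/2$ (so $Q^{1/4}<b_i<Q^{1/2}$) gives this for every $k$ and makes padding unnecessary. The paper gets the same property from one class prime per class.
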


We reduce from the strongly NP-hard Numerical 3-Dimensional Matching problem. 
The instance consists of three classes of items $X, Y, Z$ and a size function $s(e), e \in X \cup Y \cup Z$, and the goal is to partition the items into triples $(x, y, z)$ satisfying $s(x) + s(y) + s(z) = B$. 

For batch size $k=3$, we encode each item $e$ as a box whose label contains the factor $2^{s(e)}$. 
To encode the class constraint, we assign a distinct identifying prime to each of $X,Y,Z$, and include the corresponding prime in the label of every item from that class.
We then set the target product to be the product of the three class primes times $2^B$. By unique factorization, a batch has the target product if and only if it contains one item from each class whose sizes sum to $B$. For $k>3$, we add dummy zero-size classes to fill the remaining positions in each batch.

The product gadget makes precisely the target tuples profitable. Since all target tuples
have the same success probability and the same cost, the optimal value is a
strictly increasing function of the maximum number of disjoint target tuples,
and exact optimization distinguishes whether a perfect numerical matching
exists.

The next question is whether one can at least obtain an ordinary constant-factor
approximation. When batch capacity is part of the input, the product gadget gives a
stronger obstruction: distinguishing zero value from positive value is already
NP-hard.

\begin{theorem}[Non-Reusable Sign-Gap Decision Hardness]
\label{thm:bp-sign-hard}
For $\BP(T,k)$ with $k$ part of the input, and for $\BP(T,\infty)$, the promise
problem of distinguishing
\[
\operatorname{OPT}_1=0
\qquad\text{from}\qquad
\operatorname{OPT}_1>0
\]
is NP-hard, even on common-prize two-point instances with $T>0$.
\end{theorem}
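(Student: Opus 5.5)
We reduce from \textsc{Subset Product}: given positive integers $b_1,\ldots,b_n\ge 2$ and a target $Q$, decide whether some $S\subseteq[n]$ has $\prod_{i\in S}b_i=Q$. We discard any $b_i$ that does not divide $Q$, since such a box can never belong to a target subset. For each remaining $i$ we create a box with common prize $M$: it has value $M$ with probability $1-1/b_i$ and value $0$ with probability $1/b_i$. We take $k=n$, or $k=\infty$; since batches partition the boxes, these give the same problem. For a batch $B$ with weight $w(B)=\prod_{i\in B}b_i$ we have $\mathbb{E}[X_B]=M(1-1/w(B))$. The item costs are chosen to be additive in $\log b_i$, namely $c_i=\gamma\log b_i$. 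The batch cost is then $T+\gamma\log w(B)$, and the one-shot surplus is
\[
g(w)=M\Bigl(1-\frac1w\Bigr)-T-\gamma\log w .
\]
This is a strictly concave function of $u=\log w$. We pick $\gamma=M/Q$, so that $g'(u)=M e^{-u}-\gamma$ vanishes exactly at $w=Q$. We then set $T=M(1-1/Q)-\gamma\log Q$, which makes $g(Q)=0$. Every other weight then has strictly negative surplus.

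One adjustment is needed so that the target gives positive value rather than zero. We lower $T$ by a tiny amount $\delta$. Weights are integers, so any $w\neq Q$ satisfies $w\le Q-1$ or $w\ge Q+1$. By strict concavity, $g(w)\le\max\{g(Q-1),g(Q+1)\}\le -\eta$ for an explicit $\eta$ of order $M/Q^{3}$. Choosing $\delta<\eta$ gives $g(w)>0$ if and only if $w=Q$.

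Next we must show that positive single-batch surplus characterizes $\OPT_1>0$. If a YES subset $S$ exists, opening the single batch $S$ gives value $g(Q)>0$. For the converse, suppose every batch has negative surplus, i.e.\ $\mathbb{E}[X_B]<K(B)$. Because all prizes equal $M$, the best an adaptive policy can do is stop at the first batch containing an $M$. The policy's value is therefore a sum over its opened batches. Each term is the probability of reaching that batch, namely all previous batches failing, multiplied by $\mathbb{E}[X_B]-K(B)<0$. To see this, write $M_\pi=\sum_t \mathbf{1}\{\text{reach }t\}\,(X_{B_t}-\text{previous max})$ and note that the previous max is $0$ whenever we reach batch $t$, since we stop once we see $M$. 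Alternatively, by the Weitzman-style accounting, each nonnegative increment is bounded by $X_{B_t}$. In both forms the value is $\le 0$, so $\OPT_1=0$.

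The main obstacle is that $\log b_i$ is irrational, so these parameters are not valid inputs. We replace each $\log b_i$ by a rational $\ell_i$ with $|\ell_i-\log b_i|\le\epsilon$, and $\log Q$ analogously. This perturbs the cost of any batch by at most $n\gamma\epsilon$. We choose $\epsilon=2^{-\mathrm{poly}}$ so that $n\gamma\epsilon<\min(\delta,\eta-\delta)/2$, with $\delta=\eta/2$. The sign pattern then survives: the target stays strictly positive and every other weight stays strictly negative. Since $\eta\ge M/\mathrm{poly}(Q)$ and $Q$ has polynomially many bits, the required precision needs only polynomially many bits, and Taylor series give rational approximations computable in polynomial time. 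Finally we check that $T>0$; we can ensure this by scaling $M$, since $1-1/Q$ exceeds $\log Q/Q$ for $Q\ge2$.
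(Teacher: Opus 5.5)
Your proposal is correct and is essentially the paper's proof. It uses the same common-prize two-point gadget with costs proportional to $\ln b_i$ (your $\gamma=M/Q$ is the paper's $\Lambda$) and a setup cost $\gamma(Q-1-\ln Q)$ lowered by a small margin, so the surplus is $\delta-\gamma D(w/Q)$; it then uses an inverse-polynomial separation for integer $w\ne Q$ (you get it from concavity in $\log w$ and the neighbors $Q\pm1$, the paper from the bound $D(w/Q)\ge 1/(16Q^2)$), rational logarithm approximations, and $k=n$.

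The only real difference is the NO case. You argue directly from the common prize: stop at the first success, so each reached batch contributes $\mathbb{E}[X_B]-K(B)<0$. This matches the paper's informal sketch, while the formal proof cites the general lemma that overpriced batches can be skipped. Two small fixes: take $\delta$ as an explicit rational lower bound rather than the irrational $\eta/2$, and include the $\ln Q$ error in $T$ in your perturbation count, giving $(n+1)\gamma\epsilon$, which your factor-of-two slack already absorbs.
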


Here the reduction is from \textsc{Subset Product}: given labels
$a_1,\ldots,a_n$ and target $Q$, decide whether some subset has product $Q$.
The batch capacity is set to the number of source items. If a target subset exists,
opening the corresponding batch once gives positive value; otherwise every
nonempty batch has negative one-shot surplus. Because all boxes have a common
prize, adaptive stopping cannot turn these negative-surplus attempts into
positive value: before the first success the policy observes only zeros, and
the expected reward from any sequence of attempts is dominated by its expected
cost.

\paragraph{Reusable boxes.}
For reusable boxes, the same product idea applies, but we must prevent a target batch from using multiple copies of the same source item. 
Broadly, for each source item $i$, we introduce a private prime $p_i$ and create two boxes whose labels correspond to $p_i$ and $p_i a_i$.
We set the target product to $Q\prod_i p_i$. 
Since $p_i$ appears in no other label and has exponent one in the target, any multi-set whose labels multiply to the target must contain exactly one of $p_i$ and $p_i a_i$ for each $i$.
Choosing $p_i$ corresponds to excluding $a_i$ from the subset, while choosing $p_i a_i$ corresponds to including it. 
After canceling the private primes, the remaining condition enforces that the product of the selected $a_i$'s equals $Q$. 
Finally, the product gadget then gives the reusable sign gap.

\begin{theorem}[Reusable Sign-Gap Decision Hardness]
\label{thm:rbp-hard}
For $\RBP(T,k)$ with finite $k$ as part of the input, and for $\RBP(T,\infty)$,
the promise problem of distinguishing
\[
\operatorname{OPT}_1=0
\qquad\text{from}\qquad
\operatorname{OPT}_1>0
\]
is NP-hard, even on common-prize two-point instances with $T>0$.
\end{theorem}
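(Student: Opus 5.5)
We reduce from \textsc{Subset Product} with labels $a_1,\ldots,a_n$ (all $a_i\geq2$) and target $Q$, as sketched above. For each source item $i$ we pick a distinct private prime $p_i$ that does not divide $Q$ or any $a_j$; the first $n$ primes exceeding $\max(Q,\max_j a_j)$ suffice and have polynomial bit length. We then create two box types: $u_i$ with label $b_{u_i}=p_i$ and $w_i$ with label $b_{w_i}=p_ia_i$. Every type has the common two-point distribution: prize $M$ with probability $1-1/b$ and $0$ with probability $1/b$. The target is $Q^\star=Q\prod_ip_i$, and the capacity is $k=n$, or $k=\infty$.

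The first step is the combinatorial claim. Suppose a multiset $\vec{a}$ has label product $W(\vec{a})=\prod_t b_t^{a_t}$ equal to $Q^\star$. Since $p_i$ has exponent one in $Q^\star$ and divides only the labels of $u_i$ and $w_i$, exactly one copy of $u_i$ or $w_i$ is chosen for each $i$. Cancelling the private primes shows that the chosen $w_i$'s index a subset with product $Q$. Conversely, every solution gives such a multiset of size exactly $n$, so it is feasible for $k=n$.

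The second step is the cost calibration. A batch fails, meaning all its samples are $0$, with probability $1/W$, so its expected reward is $M(1-1/W)$. Since every label is at least $2$, we have $W(\vec{a})\geq 2^{\|\vec{a}\|_1}$. We want the cost to be linear in the multiplicities, so we use $K(\vec{a})=T+\sum_t a_t c_t$ with $c_t=\gamma\ln b_t$. This gives $K=T+\gamma\ln W$. The surplus is then $g(W)=M(1-1/W)-T-\gamma\ln W$, which is concave in $\ln W$. We choose $\gamma=M/Q^\star$, so that $\ln W=\ln Q^\star$ is the maximizer, and set $T$ so that $g(Q^\star)$ is a tiny positive number $\delta$. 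For integers $W\neq Q^\star$, either $W\leq Q^\star-1$ or $W\geq Q^\star+1$. Strict concavity gives a gap $g(Q^\star)-g(Q^\star\pm1)\geq \Omega(M/Q^{\star3})$, so choosing $\delta$ smaller than this gap makes every non-target batch have strictly negative one-shot surplus. Every $c_t>0$ as required. Exact logarithms are then replaced by rational approximations of polynomial precision, which move all surpluses by much less than the gap; this is the step flagged above as handled uniformly for the gadget.

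The third step, which I expect to be the main obstacle, is converting one-shot surpluses into $\OPT_1$. In the YES case, opening the target batch once has value $\delta>0$. In the NO case, every feasible batch, including every unbounded multiset when $k=\infty$, has negative surplus, and we must show no adaptive policy gets positive value. Because all prizes equal $M$, the policy observes only $0$'s until its first success, after which further openings only cost money. So any policy is a sequence of batches $\vec{a}^1,\vec{a}^2,\ldots$ opened while all previous outcomes were $0$, with success probability $q_j=1-1/W(\vec{a}^j)$. The policy value is $\sum_j \Pr[\text{reach } j]\,(Mq_j-K(\vec{a}^j))$, which is a sum of strictly negative terms, hence at most $0$. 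Here we use that $M_\pi=M$ exactly when some success occurs and the failure events are independent across fresh samples. This argument is immune to the infinitude of the policy space, since it only uses $Mq-K<0$ for each batch. Thus $\OPT_1=0$ in NO instances and $\OPT_1\geq\delta>0$ in YES instances. Since \textsc{Subset Product} is NP-hard, so is the promise problem.
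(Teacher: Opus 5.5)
Your reduction is the paper's construction. You use the same private-prime encoding (labels $p_i$ and $p_ia_i$, target $Q^\star=Q\prod_ip_i$) and the same logarithmic gadget: your $\gamma=M/Q^\star$ is the paper's $\Lambda$, and your surplus is $g(W)=\delta-\gamma D(W/Q^\star)$ with $D(z)=\ln z+1/z-1$. The genuine difference is the last step. The paper concludes through \cref{thm:reusable-index}: every batch overpriced forces $\sigma^\star=0$, and a positive-surplus batch has reservation value $M-K/s>0$. You instead argue directly that, under a common prize, every policy has value $\sum_j\Pr[\text{reach } j]\,(Mq_j-K(\vec a^j))$, a sum of negative terms. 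That argument is correct once you condition on the policy's internal randomness and use countable additivity for the reward and monotone convergence for the cost. It is more self-contained, and opening the target batch once is a simpler YES certificate than going through reservation values.

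Two steps need repair, though. First, the rationalization does not go through as stated for $k=\infty$. Replacing $\ln b_t$ by $\ell_t$ moves the surplus of $\vec a$ by up to $\gamma\sum_ta_t|\ell_t-\ln b_t|$, which grows linearly in $\|\vec a\|_1$. So no fixed precision moves \emph{all} surpluses by less than the gap, yet your NO case needs every multiset of every size to be overpriced in the rational instance. The paper fixes this with a size cutoff. With precision at most $1/4$, every rational item cost is at least $\gamma/4$, so any $\vec a$ with $\|\vec a\|_1>L_0=4Q^\star+1$ costs more than $M$ and is overpriced outright. It then suffices to make $\gamma(L_0+1)\eta$ small, which still needs only polynomially many bits. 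Rounding every $\ell_t$ upward also works, since then rational costs dominate exact ones and only the setup-cost error matters. Relatedly, the tolerance must be below $\delta$, not merely below the gap, or the YES batch can lose its positive surplus. You should also check that $T=\gamma(Q^\star-1-\ln Q^\star)-\delta$ stays positive after rationalization, since the theorem claims $T>0$.

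Second, your choice of primes is not known to be computable in deterministic polynomial time. No deterministic algorithm is known that finds a prime exceeding a given $N$ in time polynomial in $\log N$, and $Q$ may be exponentially large in value. Use small primes instead, as the paper does. The number $Q\prod_ja_j$ has at most $\log_2(Q\prod_ja_j)$ distinct prime factors. Hence at least $n$ of the first $n+\lfloor\log_2(Q\prod_ja_j)\rfloor$ primes divide neither $Q$ nor any $a_j$, and these can be found by sieving.
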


For unbounded batch capacity, item costs make all sufficiently large multisets
automatically overpriced, so the reduction only has to protect a finite range of
multiplicities by the product separation.

\section{Bi-criteria Approximation Results}
\label{sec:approximation}

The hardness results of the previous section preclude standard positive-factor
approximation guarantees under the original costs. We therefore turn to discounted-cost
bicriteria guarantees. Recall that
\(\operatorname{val}_\lambda(\pi) = \mathbb{E}[M_\pi]-\lambda\,\mathbb{E}[C_\pi]\), and
that an algorithm has an $(\alpha,\lambda)$ guarantee if, for every requested $\varepsilon>0$,
it returns a feasible policy $\widehat{\pi} = \widehat{\pi}(\eps)$ with $\operatorname{val}_\lambda(\widehat{\pi})
  \geq
\alpha\operatorname{OPT}_1-\varepsilon$. Thus, the algorithm is compared with the original-cost optimum, while its returned policy is evaluated after discounting costs.

\subsection{Non-Reusable Boxes}
\label{subsec:approx-nonreusable}

Consider first the non-reusable model. Although an adaptive policy may have a
possibly exponential decision tree, before any values are realized, it induces
only ex-ante statistics: the probability $x_i$ that each box is ever opened and
the expected number $z$ of batches it opens. These statistics determine the
expected cost. The relaxation therefore only needs to upper-bound the final
reward that can be attributed to the opened boxes. For each box $i$, we omit the zero atom and, for $\ell \in L_i$, write
\[
\Pr[V_i=v_{i\ell}]=p_{i\ell},
\qquad \text{where} \qquad 
v_{i\ell}>0.
\]
When box $i$ is the final selected value and realizes as the point $v_{i\ell}$
in its support, we assign the probability mass of this event to a variable
$y_{i\ell}$. This ``winner'' event can occur only if box $i$ is opened and
realizes as $v_{i\ell}$, and at most one positive value is selected overall. These two observations give the following relaxation:
\begin{equation*}
\tag{$\mathrm{LP}$}
\begin{aligned}
\max\quad
&\sum_i\sum_{\ell\in L_i} v_{i\ell}y_{i\ell}
  -Tz-\sum_i c_i x_i \\
\text{s.t.}\quad
&1\leq z\leq n,\\
&\sum_i x_i\leq kz,\\
&0\leq y_{i\ell}\leq p_{i\ell}x_i &&\forall i,\ell\in L_i,\\
&0\leq x_i\leq1 &&\forall i,\\
&\sum_i\sum_{\ell\in L_i}y_{i\ell}\leq1
\end{aligned}
\end{equation*}
The constraint $\sum_i x_i\leq kz$ is the ex-ante version of batch capacity:
if a policy opens $z$ batches in expectation, it cannot open more than $kz$
boxes in expectation.

At first glance, one could think that we could write $n$ different versions
of this LP, one per candidate value of $z$, since any policy always opens an
integer number of batches, and take the maximum of these $n$ LPs. This, in
principle, would save us the extra loss we incur from rounding $z$. Unfortunately,
such an approach does not constitute a valid upper bound to the optimal adaptive
policy, since the latter could open a different number of batches depending on
the history of observed values and thus, in expectation, a fractional number of
batches overall. For this reason, allowing $z$ to be fractional is crucial so
that the LP can capture the value of the optimal adaptive algorithm. Also, the
LP is a relaxation for the optimal adaptive algorithm, because the atom masses are
allowed to be arranged fractionally into a final winning value. After fixing the
opening marginals $x$, this upper bound on the expected reward has the equivalent threshold form
\[
B(x)=
\int_0^\infty
\min\left\{1,\sum_i x_i\Pr[V_i>t]\right\}\,dt .
\]
Sampling which boxes to open independently with probabilities given by the $x_i$'s
guarantees a $(1-1/e)$-fraction of this envelope at every threshold. Afterwards, pipage
rounding algorithm converts the opening marginals $x$ into an actual set of opened boxes,
while respecting the batch budget certified by $z$, without decreasing the expected
monotone submodular reward.

We also round $z$ to one of its two neighboring integers and scale the opening
marginals accordingly. Once the number of batches is integral, a selected set of at
most $k\bar z$ boxes can be partitioned arbitrarily into $\bar z$ legal batches.
Consequently, the rounding incurs only the batch-count factor $2(\sqrt2-1)$ from
rounding the batch count and no additional contention loss inside individual batches.

\begin{algorithm}[t]
\caption{Non-reusable LP rounding}
\label{alg:lp-rounding}
\begin{algorithmic}
\REQUIRE Opening marginals $x$, fractional batch count $z\geq1$, and capacity $k$
\STATE Set $\theta=z-\floor{z}$
\STATE Draw $\bar z=\floor{z}$ with probability $1-\theta$, and
       $\bar z=\ceil{z}$ with probability $\theta$
\STATE Set $q_i=\min\{1,(\bar z/z)x_i\}$ for every box $i$
\STATE Apply pipage rounding over the rank-$(k\bar z)$ uniform matroid to obtain
       a set $S$ with $|S|\leq k\bar z$
\STATE Partition $S$ arbitrarily into at most $\bar z$ groups of size at most $k$ each, open all
       nonempty groups, and stop
\end{algorithmic}
\end{algorithm}

\begin{restatable}[Finite-Support Non-Reusable Guarantee]{theorem}{generalnoreusablethm}
\label{thm:general-noreusable}
For every finite-support non-reusable instance and every \(\varepsilon>0\),
there is a randomized polynomial-time algorithm that returns a policy
\(\widehat{\pi}=\widehat{\pi}(\varepsilon)\) satisfying
\[
    \operatorname{val}_{\alphaNR}(\widehat{\pi})
    \ge
    \alphaNR\,\operatorname{OPT}_1-\varepsilon,
    \qquad
    \alphaNR
    =
    \left(1-\frac1e\right)2(\sqrt2-1).
\]
The algorithm solves \(\mathrm{(LP)}\), returns the empty policy if its optimum is
non-positive, and otherwise applies \cref{alg:lp-rounding} to an optimal
solution.
\end{restatable}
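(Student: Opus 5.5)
The plan is to sandwich $\operatorname{OPT}_1$ between the LP optimum and the rounded policy's discounted value. Write $\mathrm{LP}^\star$ for the optimum of $\mathrm{(LP)}$. I will show (i) $\operatorname{OPT}_1\le \mathrm{LP}^\star$ whenever $\operatorname{OPT}_1>0$, and (ii) for an optimal LP solution $(x,y,z)$, the policy $\widehat{\pi}$ produced by \cref{alg:lp-rounding} satisfies $\mathbb{E}[M_{\widehat\pi}]\ge \alphaNR\sum_{i,\ell}v_{i\ell}y_{i\ell}$ and $\mathbb{E}[C_{\widehat\pi}]\le Tz+\sum_i c_ix_i$. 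Together these give
\[
\operatorname{val}_{\alphaNR}(\widehat\pi)\ge \alphaNR\,\mathrm{LP}^\star\ge\alphaNR\operatorname{OPT}_1 .
\]
The $-\varepsilon$ absorbs numerical issues, and if $\mathrm{LP}^\star\le 0$ the empty policy is optimal by (i).

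For (i), fix an optimal adaptive policy with $\operatorname{OPT}_1>0$; it is nonempty and opens at least one batch, so set $z=\mathbb{E}[\#\text{batches}]\in[1,n]$. Let $x_i=\Pr[i\text{ opened}]$, and let $y_{i\ell}$ be the probability that box $i$ is the (tie-broken) maximum of the observed values and realizes $v_{i\ell}$. Each batch holds at most $k$ boxes, so $\sum_i x_i\le kz$ by linearity. Because the decision to open $i$ is independent of $V_i$, we get $y_{i\ell}\le \Pr[i\text{ opened},V_i=v_{i\ell}]=p_{i\ell}x_i$. A single winner gives $\sum y\le1$. The objective then equals the policy's value exactly.

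For (ii), first the cost. We have $\mathbb{E}[\bar z]=z$, so the expected setup cost is at most $Tz$. Also $\Pr[i\in S]=q_i$ under pipage rounding (marginals are preserved), and $\mathbb{E}[q_i]\le \mathbb{E}[\bar z/z]\,x_i=x_i$, which bounds the item costs.

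Next the reward. Let $f(S)=\mathbb{E}[\max_{i\in S}V_i]$; it is monotone submodular, and for finite supports it can be evaluated exactly. Note $\sum_i q_i\le(\bar z/z)kz=k\bar z$. Pipage rounding over the rank-$k\bar z$ uniform matroid then returns $S$ with $\mathbb{E}[f(S)]\ge F(q)$. The correlation-gap bound at each threshold $t$ gives
\[
\Pr\Bigl[\max_{i\in S_q}V_i>t\Bigr]=1-\prod_i\bigl(1-q_i\Pr[V_i>t]\bigr)\ge(1-1/e)\min\Bigl\{1,\sum_i q_i\Pr[V_i>t]\Bigr\},
\]
so $F(q)\ge(1-1/e)B(q)$. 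Since $q_i\ge\min\{1,\bar z/z\}x_i$, we have $B(q)\ge\min\{1,\bar z/z\}B(x)$. Finally, $B(x)\ge\sum v_{i\ell}y_{i\ell}$, because the LP mass of winners above $t$ is at most $\min\{1,\sum_i x_i\Pr[V_i>t]\}$; integrating over $t$ gives the claim.

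It remains to bound $\mathbb{E}[\min\{1,\bar z/z\}]$. Writing $z=m+\theta$, this equals $(1-\theta)\tfrac{m}{m+\theta}+\theta$. That expression is increasing in $m$, so the worst case is $m=1$, giving $g(\theta)=\tfrac{1-\theta}{1+\theta}+\theta$. Minimizing, $g'(\theta)=0$ at $\theta=\sqrt2-1$, with minimum value $2(\sqrt2-1)$. Multiplying the factors yields the reward bound with $\alphaNR=(1-1/e)\,2(\sqrt2-1)$.

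I expect the main obstacle to be the relaxation step (i). The subtle points are the independence argument behind $y_{i\ell}\le p_{i\ell}x_i$ and justifying $z\ge1$, which is why the empty-policy branch is needed. A secondary point is polynomial-time implementability of pipage rounding with exact multilinear evaluations. If exact evaluation is awkward, I would instead use randomized swap rounding or sampled estimates, paying the additive $\varepsilon$.
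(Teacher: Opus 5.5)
Your proposal is correct and matches the paper's proof. It uses the same LP relaxation built from ex-ante opening probabilities, expected batch count, and winner masses; the same randomized rounding of $z$ with rescaled marginals, followed by pipage rounding on the rank-$k\bar z$ uniform matroid; the same cost accounting; and the same reward factor, $(1-1/e)$ times $(\lfloor z\rfloor+\theta^2)/z\ge 2(\sqrt2-1)$. The only difference is cosmetic: you apply the correlation gap at the rounded marginals $q$ and then use monotonicity and the scaling $\min\{1,cs\}\ge c\min\{1,s\}$ of the threshold envelope. The paper instead compares $F(q)$ with $F(x)$ via concavity of $F$ along rays and monotonicity, and then applies the correlation gap at $x$.
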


The proof follows this outline. The details are in \cref{subsec:proof-nonreusable-approx}.

\subsection{Reusable Boxes}
\label{subsec:approx-reusable}

The reusable case starts from a different structural fact. By
\cref{thm:reusable-index}, an optimal original-cost policy repeats one
maximum-reservation batch. Thus, it suffices to find one certifying
multiset of at most $k$ fresh i.i.d.~samples. The case $T=0$ is
handled exactly by \cref{prop:zero-setup}, so assume $T>0$ below
and let $V_{\max}$ be the largest support value.

For a threshold $\tau$, write $f_\tau(\vec{a})=\mathbb{E}[(X_{\vec{a}}-\tau)^+]$. If one batch pays for its discounted cost using the expected excess above
$\tau$, then repeatedly opening it guarantees discounted value at least
$\tau$.

\begin{lemma}[Threshold Certificate]
\label{lem:reusable-threshold-certificate}
Fix $\lambda\in(0,1]$. If a nonempty reusable batch $\vec{a}$ satisfies
\[
f_\tau(\vec{a})\geq \lambda K(\vec{a}),
\]
then repeatedly opening $\vec{a}$ until the first realization
$X_{\vec{a}}>\tau$ has $\lambda$-discounted value at least $\tau$.
\end{lemma}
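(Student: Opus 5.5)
The plan is to compute the $\lambda$-discounted value of the repeated-batch policy directly, mirroring the repeated-batch calculation that follows \cref{thm:reusable-index}, with $K(\vec{a})$ replaced by $\lambda K(\vec{a})$ and equality replaced by inequality. Let $p=\Pr[X_{\vec{a}}>\tau]$.

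\textbf{Step 1: $p>0$, so the policy stops almost surely.} Since $T>0$ in this subsection and every $c_i>0$, a nonempty batch has $K(\vec{a})>0$. The hypothesis then gives $f_\tau(\vec{a})\geq\lambda K(\vec{a})>0$. Because $(X_{\vec{a}}-\tau)^+$ vanishes on the event $\{X_{\vec{a}}\le\tau\}$, this forces $p>0$. The successive openings are i.i.d.\ copies of $X_{\vec{a}}$, because each opening draws fresh samples. Hence the number $N$ of openings is geometric with $\mathbb{E}[N]=1/p<\infty$.

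\textbf{Step 2: cost.} The policy pays exactly $K(\vec{a})$ per opening, so by Wald's identity (or simply linearity of expectation over the geometric count) $\mathbb{E}[C_\pi]=K(\vec{a})/p$.

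\textbf{Step 3: reward.} We have $M_\pi\ge X^{(N)}$, the value observed in the last opening. In fact equality holds, since all earlier openings saw values at most $\tau$ and the last one sees a value above $\tau$. By the i.i.d.\ structure, $X^{(N)}$ has the law of $X_{\vec{a}}$ conditioned on $\{X_{\vec{a}}>\tau\}$. Therefore
\[
\mathbb{E}[M_\pi]=\frac{\mathbb{E}\bigl[X_{\vec{a}}\mathbf{1}\{X_{\vec{a}}>\tau\}\bigr]}{p}=\tau+\frac{f_\tau(\vec{a})}{p}.
\]
The second equality uses $\mathbb{E}[X\mathbf{1}\{X>\tau\}]=\tau p+\mathbb{E}[(X-\tau)^+]$.

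\textbf{Step 4: combine.} Putting the two pieces together,
\[
\operatorname{val}_\lambda(\pi)=\tau+\frac{f_\tau(\vec{a})-\lambda K(\vec{a})}{p}\geq\tau,
\]
where the inequality is exactly the hypothesis of the lemma.

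\textbf{Main obstacle.} The only delicate point is Step 1: justifying $p>0$ and finite expected cost. This is where positivity of $K(\vec{a})$ is needed, so that the hypothesis forces $f_\tau(\vec{a})>0$. A secondary check is the case $\tau<0$. There $p=1$, and the identity in Step 3 still holds verbatim, so no separate argument is needed.
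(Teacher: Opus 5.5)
Your proposal is correct and takes the same approach as the paper's proof: both compute the value of geometric repetition as $\mathbb{E}[X_{\vec{a}}\mid X_{\vec{a}}>\tau]-\lambda K(\vec{a})/p=\tau+(f_\tau(\vec{a})-\lambda K(\vec{a}))/p\geq\tau$. Your Step 1 argues that $p>0$ because $K(\vec{a})>0$, which holds since every $c_i>0$ in the reusable model; this simply spells out what the paper states as ``the certificate implies $p>0$.''
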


This observation reduces the reusable algorithm to a threshold search. For a
candidate $\tau$, the task is to determine whether some batch can certify it,
i.e. whether there is a batch with reservation value at least $\tau$.
The fractional relaxation is similar to the non-reusable LP, except that it
is turned into a feasibility LP and $x_i$ is now a fractional number of copies
of type $i$ in a single reusable batch: 
\begin{equation*}
\tag{$\mathrm{RLP}_\tau$}
\begin{aligned}
\text{find}\quad&x, y\\
\text{s.t.}\quad
&\sum_i x_i\leq k,\\
&0\leq y_{i\ell}\leq p_{i\ell}x_i &&\forall i,\ell\in L_i,\\
&0\leq x_i &&\forall i,\\
&\sum_i\sum_{\ell\in L_i}y_{i\ell}\leq 1 \\
&\sum_i\sum_{\ell\in L_i}(v_{i\ell}-\tau)^+y_{i\ell}
  \geq T+\sum_i c_i x_i
\end{aligned}
\end{equation*}

We say that \(\tau\) is LP-certified if \(\mathrm{RLP}_\tau\) is feasible. The \(y\)-variables form a fractional one-winner certificate: the constraints \(y_{i\ell}\le p_{i\ell}x_i\) encode the marginal availability of atom \(v_{i\ell}\), while \(\sum_{i,\ell}y_{i\ell}\le 1\) encodes that one batch contributes only one maximum value. Thus feasibility of \(\mathrm{RLP}_\tau\) certifies that the fractional excess above \(\tau\) can pay for the batch cost. Lowering \(\tau\) only increases the coefficients \((v_{i\ell}-\tau)^+\) and makes the certificate constraint easier to satisfy. Therefore, certification is monotone and binary search can find the largest certified threshold up to an additive error.

When a threshold is certified fractionally, we round by filling the $k$ batch
slots independently according to the copy counts $x_i$. At level $t\geq\tau$,
one slot exceeds $t$ with probability
$\sum_i x_i\Pr[V_i>t]/k$, so the rounded batch achieves an
$\alphaR=1-1/e$ fraction of the fractional excess envelope. The method of conditional
expectations derandomizes the slot choices without losing the certificate.

\begin{algorithm}[t]
\caption{Reusable threshold search}
\label{alg:reusable-threshold}
\begin{algorithmic}
\REQUIRE A finite-support reusable $\RBP(T,k)$ instance and accuracy $\varepsilon>0$
\STATE Set $\ell=0$, $u=V_{\max}$, $\widehat{\tau}=0$, and $\widehat{\vec{a}}=\vec{0}$
\FOR{$\ceil{\log_2(V_{\max}/\varepsilon)}$ iterations}
  \STATE Set $\tau=(\ell+u)/2$ and solve \(\mathrm{RLP}_\tau\)
  \IF{$\tau$ is LP-certified}
    \STATE Round feasible LP solution into a certifying
           batch $\vec{a}_\tau$ using the method of conditional expectation 
    \STATE Set $\ell=\tau$, $\widehat{\tau}=\tau$, and
           $\widehat{\vec{a}}=\vec{a}_\tau$
  \ELSE
    \STATE Set $u=\tau$
  \ENDIF
\ENDFOR
\IF{$\widehat{\vec{a}}=\vec{0}$}
  \RETURN the empty policy
\ENDIF
\STATE Repeatedly open $\widehat{\vec{a}}$ until its value exceeds $\widehat{\tau}$
\end{algorithmic}
\end{algorithm}

\begin{restatable}[Finite-Support Reusable Guarantee]{theorem}{reusablegeneralthm}
\label{thm:reusable-general}
For every finite-support reusable instance with finite batch capacity \(k\) and
every \(\varepsilon>0\), \cref{alg:reusable-threshold} runs in polynomial time
and returns a policy \(\widehat{\pi}=\widehat{\pi}(\varepsilon)\) satisfying
\[
    \operatorname{val}_{\alphaR}(\widehat{\pi})
    \ge
    \operatorname{OPT}_1-\varepsilon,
    \qquad
    \alphaR=1-\frac1e.
\]
Equivalently, the algorithm provides a
\((1,\alphaR)\) discounted-cost bicriteria guarantee.
\end{restatable}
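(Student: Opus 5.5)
Let $\tau^\star=\sigma^\star$ be the optimal reservation index. By \cref{thm:reusable-index}, $\operatorname{OPT}_1=\sigma^\star$; if $\sigma^\star=0$ the empty policy suffices. The plan has four steps: (i) every $\tau\le\sigma^\star$ is LP-certified; (ii) every certified $\tau$ can be rounded to an integral batch $\vec a$ with $f_\tau(\vec a)\ge\alphaR K(\vec a)$; (iii) \cref{lem:reusable-threshold-certificate} with $\lambda=\alphaR$ then gives $\operatorname{val}_{\alphaR}\ge\widehat\tau$; (iv) binary search yields $\widehat\tau\ge\sigma^\star-\varepsilon$ in polynomially many rounds.

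\textbf{Step (i): completeness of $\mathrm{RLP}_\tau$.} Let $\vec a^\star$ be a feasible batch with $\sigma(\vec a^\star)=\sigma^\star$, and fix $\tau\le\sigma^\star$. Set $x_i=a^\star_i$, so $\sum_i x_i\le k$. Set $y_{i\ell}$ to the probability that $X_{\vec a^\star}=v_{i\ell}$ and that this maximum is attained by type $i$, with ties broken by a fixed rule. Then $\sum_{i,\ell} y_{i\ell}\le 1$. Moreover, $y_{i\ell}\le\Pr[\exists r:V_{i,r}=v_{i\ell}]\le a_i p_{i\ell}$ by a union bound, so $y_{i\ell}\le p_{i\ell}x_i$. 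Finally,
\[
\sum_{i,\ell}(v_{i\ell}-\tau)^+y_{i\ell}=f_\tau(\vec a^\star)\ge f_{\sigma^\star}(\vec a^\star)=K(\vec a^\star),
\]
which is exactly the certificate constraint. Monotonicity in $\tau$ is immediate.

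\textbf{Step (ii): rounding.} This is the main obstacle. Given a feasible $(x,y)$, fill each of the $k$ slots independently: type $i$ with probability $x_i/k$, and empty with probability $1-\sum_i x_i/k$. Let $\vec a$ be the resulting random multiset. The expected cost is exactly $T+\sum_i c_ix_i$, because empty slots cost nothing and $T$ is paid whenever the batch is nonempty. If the batch is empty, the certificate forces $T\le$ excess and we must argue separately; the simplest fix is to condition on a nonempty outcome, or to note that the empty batch contributes $0$ to both sides while $T$ is charged.

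Write $f_\tau(\vec a)=\int_\tau^\infty\Pr[X_{\vec a}>t]\,dt$. For each $t\ge\tau$, slots are independent and each exceeds $t$ with probability $g(t)/k$, where $g(t)=\sum_ix_i\Pr[V_i>t]$. Hence
\[
\Pr[X>t]=1-(1-g(t)/k)^k\ge(1-1/e)\min\{1,g(t)\}.
\]
The LP excess is at most $\int_\tau^\infty\min\{1,g(t)\}\,dt$, since the $y$ mass above $t$ is bounded both by $1$ and by $g(t)$. Therefore
\[
\mathbb E_{\vec a}[f_\tau(\vec a)]\ge\alphaR\,\mathbb E_{\vec a}[K(\vec a)].
\]
Derandomize slot by slot via conditional expectations on the quantity $f_\tau-\alphaR K$. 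This quantity is computable exactly for finite supports because each slot is independent. The procedure ends with a deterministic $\vec a$ satisfying $f_\tau(\vec a)-\alphaR K(\vec a)\ge0$, and this $\vec a$ is nonempty because $T>0$ makes the empty batch strictly negative.

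\textbf{Step (iv): conclusion.} By step (i), the binary search maintains $\ell\le\sigma^\star\le u$ with certified $\ell$, or $u$ drops only at uncertified points that lie above $\sigma^\star$. After $\lceil\log_2(V_{\max}/\varepsilon)\rceil$ rounds, $\widehat\tau\ge\sigma^\star-\varepsilon$. If nothing is ever certified, then $\sigma^\star<\varepsilon$ and the empty policy suffices. Each round is one LP solve plus a polynomial derandomization, so the algorithm runs in polynomial time.
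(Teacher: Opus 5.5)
Your proposal follows the paper's proof essentially step for step. The matching steps are: completeness of $\mathrm{RLP}_\tau$ via tie-broken winner atoms and the union bound $y_{i\ell}\le a_ip_{i\ell}$ (this is \cref{lem:reusable-general-upper}); independent slot rounding with $1-(1-g(t)/k)^k\ge\alphaR\min\{1,g(t)\}$; derandomization by conditional expectations on $f_\tau-\alphaR K$; \cref{lem:reusable-threshold-certificate} with $\lambda=\alphaR$; and binary search.

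The one genuine gap is the case $T=0$. The statement covers it, but your argument does not. Your nonemptiness step explicitly invokes $T>0$, and at $T=0$ the argument fails. With $T=0$, the point $x=0$, $y=0$ is feasible for $\mathrm{RLP}_\tau$ at every $\tau$. So every threshold is certified, and the binary search pushes $\widehat\tau$ to within $\varepsilon$ of $V_{\max}$. But a nonempty batch has $f_\tau(\vec a)\le V_{\max}-\tau$ while $K(\vec a)\ge\min_i c_i>0$. Hence once $\varepsilon<\alphaR\min_i c_i$, no nonempty batch satisfies the certificate at the last tested threshold. The derandomized batch there must therefore be $\vec 0$, and the algorithm returns the empty policy even when $\operatorname{OPT}_1>0$. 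The paper disposes of this case in its first line. By \cref{prop:zero-setup}, $\RBP(0,k)=\RBP(0,1)$, and \cref{thm:reusable-index} solves the singleton problem exactly by repeating the type with the largest individual reservation value. You need to add this, or some other separate treatment of $T=0$.

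Two smaller points. First, the invariant $\ell\le\sigma^\star$ you state need not hold, since $\mathrm{RLP}_\tau$ is only a relaxation and may certify thresholds above $\sigma^\star$. What suffices is the following: every uncertified threshold exceeds $\sigma^\star$ (by your step (i)), so $u\ge\sigma^\star$ throughout. Meanwhile $\ell$ is always certified, and $u-\ell\le\varepsilon$ at the end. Second, your remark that $T$ is paid only when the batch is nonempty conflicts with your claim that the expected cost is exactly $T+\sum_ic_ix_i$. The clean version, as in the paper, charges $T$ on every outcome in the potential, including the empty one. That is what makes the expectation exact and the all-dummy outcome strictly negative when $T>0$.
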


The comparison with the optimum follows from the reusable reservation-index
theorem. If $\vec{a}^\star$ is an optimal original-cost batch with reservation
value $\operatorname{OPT}_1$, then it certifies every threshold
$\tau\leq\operatorname{OPT}_1$ in the feasibility LP. Binary search therefore reaches a
threshold within $\varepsilon$ of the optimum, and the rounded certifying batch
is converted into a repeated policy by
\cref{lem:reusable-threshold-certificate}. The formal proof appears in
\cref{subsec:proof-reusable-approx}.

\section{Proofs of Hardness Results}
\label{sec:hardness-proofs}

We will repeatedly use two numerical primitives. First, define
\[
D(z)=\ln z+\frac{1}{z}-1.
\]
This function is minimized at $z=1$. Moreover, when $w$ and $Q$ are positive
integers with $w\ne Q$, the value $D(w/Q)$ is bounded away from zero by an
inverse polynomial in $Q$.

\begin{lemma}[Analytic Product Separation]
\label{lem:analytic-product-separation}
For every integer $Q\geq8$ and every positive integer $w\ne Q$,
\[
D(w/Q)\geq \frac{1}{16\cdot Q^2}.
\]
\end{lemma}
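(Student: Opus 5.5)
We need $D(w/Q) \ge 1/(16Q^2)$ for integers $w \ne Q$, $Q \ge 8$. Write $u = w/Q$, so $D(u) = \ln u + 1/u - 1$.

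\textbf{Basic shape of $D$.} We have
\[
D'(u) = \frac{1}{u} - \frac{1}{u^2} = \frac{u-1}{u^2}.
\]
So $D$ is decreasing on $(0,1]$, increasing on $[1,\infty)$, and $D(1)=0$. Since $w$ is an integer different from $Q$, either $w \le Q-1$ or $w \ge Q+1$. By monotonicity on each side of $1$, it suffices to check the two nearest points:
\[
u_- = \frac{Q-1}{Q} = 1 - \frac{1}{Q}, \qquad u_+ = \frac{Q+1}{Q} = 1 + \frac{1}{Q}.
\]
Every admissible $w$ has $D(w/Q) \ge \min\{D(u_-), D(u_+)\}$.

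\textbf{Bounding the two endpoints.} Set $u = 1+h$ with $h = \pm 1/Q$, so $|h| \le 1/8$. Write
\[
D(1+h) = \ln(1+h) + \frac{1}{1+h} - 1 = \ln(1+h) - \frac{h}{1+h}.
\]
I would bound this directly by integration. Since $D(1)=0$,
\[
D(1+h) = \int_1^{1+h} \frac{s-1}{s^2}\,ds .
\]
The integrand has the same sign as $h$, so the integral is nonnegative either way.

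For $h>0$: on $[1,1+h]$ we have $s^2 \le (1+h)^2 \le 81/64 < 2$. Hence
\[
D(1+h) \ge \int_0^h \frac{t}{2}\,dt = \frac{h^2}{4}.
\]

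For $h<0$: write $h = -\delta$ with $0<\delta\le 1/8$. Then
\[
D(1-\delta) = \int_{1-\delta}^1 \frac{1-s}{s^2}\,ds \ge \int_{1-\delta}^1 (1-s)\,ds = \frac{\delta^2}{2},
\]
using $s \le 1$.

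In both cases $D \ge h^2/4 = 1/(4Q^2)$. This comfortably exceeds the claimed $1/(16Q^2)$.

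\textbf{Obstacle.} The only real subtlety is reducing to the two neighbors $w = Q\pm1$ via unimodality of $D$. This must also cover $w$ far from $Q$: small $w$, where $D(w/Q)$ grows like $Q/w$, and large $w$, where it grows logarithmically. Unimodality handles all of these at once, so no further case analysis is needed.

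The hypothesis $Q \ge 8$ is used only to control the factor $s^2$ in the integrand on the upper side. The bound $1/16$ is looser than necessary, presumably to absorb the slack the paper later needs for rational approximations of the logarithm.
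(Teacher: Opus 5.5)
Your proof is correct and uses the same two ingredients as the paper's: unimodality of $D$ about $1$, and a quadratic lower bound on $D$ near $1$. The difference is organizational. You apply monotonicity first to reduce to $w=Q\pm1$ and then integrate $D'$ directly, which gives the sharper $1/(4Q^2)$. The paper instead applies Taylor's theorem with $D''\ge 4/27$ on the window $[1/2,3/2]$, and handles $w/Q$ outside that window via $\min\{D(1/2),D(3/2)\}>1/100\ge 1/(16Q^2)$; that last step is where it uses $Q\ge 8$.
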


\begin{proof}
We have
\[
D(1)=D'(1)=0,
\qquad
D'(z)=\frac{z-1}{z^2},
\qquad
D''(z)=\frac{2-z}{z^3}.
\]
Thus $D$ decreases on $(0,1)$ and increases on $(1,\infty)$. On
$[1/2,3/2]$, $D''(z)\geq4/27$, so Taylor's theorem around $1$ gives
\[
D(z)\geq \frac{2}{27}\cdot(z-1)^2.
\]
If $w/Q\in[1/2,3/2]$, then $w\ne Q$ and integrality imply
$|w/Q-1|\geq1/Q$, hence
\[
D(w/Q)\geq \frac{2}{27\cdot Q^2}
>
\frac{1}{16\cdot Q^2}.
\]
Outside this interval, monotonicity gives
\[
D(w/Q)\geq \min\{D(1/2),D(3/2)\}
>
\frac{1}{100}
\geq \frac{1}{16\cdot Q^2},
\]
where the last inequality uses $Q\geq8$.
\end{proof}

Second, we will use rational approximations to logarithms so that all
constructed costs are rational and polynomially representable.

\begin{lemma}[Rational Log Approximation]
\label{lem:rational-log-surrogates}
Let $m_1,\ldots,m_r$ be positive integers and let $\eta>0$ be rational. In time
polynomial in the input bit length and $\log(1/\eta)$, one can compute rational
numbers $\ell_1,\ldots,\ell_r$ such that
\[
|\ell_i-\ln m_i|\leq \eta
\qquad\text{for all }i.
\]
\end{lemma}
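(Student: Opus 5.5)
To compute each $\ell_i$, I will truncate a rapidly converging series for the natural logarithm, using only rational arithmetic. First I reduce each $m_i$ to a fixed interval. Let $e_i=\lfloor\log_2 m_i\rfloor$; this can be read off the binary length of $m_i$. Then $m_i=2^{e_i}u_i$ with $u_i=m_i/2^{e_i}\in[1,2)$ rational, so $\ln m_i=e_i\ln 2+\ln u_i$. Here $e_i\le\log_2 m_i$ is bounded by the bit length of the input. It therefore suffices to approximate $\ln 2$ to accuracy $\eta/(2(e_i+1))$ and $\ln u_i$ to accuracy $\eta/2$. Then $\ell_i=e_i\widetilde{\ln 2}+\widetilde{\ln u_i}$ has error at most $\eta$.

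For a rational $u\in[1,2]$ I will use the series
\[
\ln u=2\sum_{j\ge0}\frac{s^{2j+1}}{2j+1},\qquad s=\frac{u-1}{u+1}\in[0,1/3].
\]
The tail after $N$ terms is at most $2\sum_{j\ge N}s^{2j+1}\le 2\cdot 3^{-(2N+1)}\cdot\frac{9}{8}$. Hence $N=O(\log(1/\delta))$ terms give accuracy $\delta$. Applying this with $u=2$ gives $\ln 2$, since then $s=1/3$.

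The main obstacle is bit-length control. The partial sums are exact rationals, but naively their numerators and denominators grow with $N$ and with the bit length of $s$. The term $s^{2N+1}$ alone has bit length $O(N\cdot\mathrm{bits}(s))$. That is still polynomial, because $N=O(\log(1/\eta)+\log\log m_i)$ and $\mathrm{bits}(s)$ is polynomial in the input. Even so, to keep the output compact I will round each term, and the final sum, to the nearest multiple of $2^{-P}$, taking $P=\lceil\log_2(1/\eta)\rceil+\lceil\log_2 N\rceil+\lceil\log_2(e_i+1)\rceil+3$. The accumulated rounding error is then at most $(N+1)2^{-P}\le\eta/8$.

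Every arithmetic operation is a multiplication, addition, or division of rationals of polynomial bit length, and there are $O(rN)$ operations in total. The running time is therefore polynomial in the input length and $\log(1/\eta)$. The total error is the sum of the two truncation tails, the rounding errors, and the amplification of the $\ln 2$ error by the factor $e_i$. With the accuracy choices above, this sum is at most $\eta$.
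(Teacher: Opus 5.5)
Your proposal is correct and follows the route the paper's proof only gestures at: range reduction to a fixed interval, then a classical rapidly convergent series, with bit lengths controlled. You have filled in the details the paper omits.

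One small bookkeeping fix is needed. As written, the truncation targets $\eta/(2(e_i+1))$ for $\ln 2$ and $\eta/2$ for $\ln u_i$ already use up the whole error budget before the rounding errors are added. Tighten both targets by a constant factor, which costs only a constant number of extra series terms; the total error is then at most $\eta$.
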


\begin{proof}
The approximation claim is the standard bit-complexity computation of elementary
functions: after range reduction, logarithms can be approximated to $b$ bits by
classical polynomial-time series or binary-splitting algorithms.
\end{proof}

\subsection{Proof of Theorem~\ref{thm:bp-fixed-hard}: Fixed finite-capacity non-reusable boxes}

\begin{proof}[Proof of Theorem~\ref{thm:bp-fixed-hard}]
We first describe the reduction using exact logarithmic costs and ignore
representation-size issues; the final paragraph replaces the logarithms by
rational approximations and checks that the constructed instance has polynomial
encoding length. We reduce from
strongly NP-complete Numerical 3-Dimensional Matching
\cite{garey1979computers}. An instance consists of three classes $X,Y,Z$, each
of size $m$, positive integer sizes $s(e)$, and a target $B$. The question is
whether the elements can be partitioned into $m$ triples $(x,y,z)$ satisfying
\[
s(x)+s(y)+s(z)=B.
\]
Since Numerical 3DM is strongly NP-complete, we may restrict to instances in
which $B$ and all sizes $s(e)$ are bounded by a polynomial in $m$. We also
assume $m\geq2$, since the case $m=1$ is decidable directly.

Fix $k\geq3$. For $k=3$, use the three original classes. For $k>3$, add
$k-3$ dummy classes, each containing $m$ dummy elements of size zero. Choose
$k$ distinct odd primes $p_1,\ldots,p_k$, one per class, and use $2$ as a
separate encoding prime. For an element $e$ in class $j$, define
\[
b_e=p_j\cdot 2^{s(e)}.
\]
For dummy elements, $s(e)=0$. Set
\[
Q=\left(\prod_{j=1}^k p_j\right)\cdot 2^B.
\]

Call a batch \emph{target} if its label product is $Q$. By unique factorization,
target batches are exactly the feasible $k$-tuples with one element from every
class and original sizes summing to $B$. For $k>3$, the dummy classes preserve
the perfect-matching question: any collection of disjoint valid original triples
can be extended using distinct dummy elements, and any disjoint family of target
$k$-tuples projects to disjoint valid original triples.

Set
\[
H=1,
\qquad
S=100\cdot H\cdot Q^2,
\qquad
M=S\cdot Q.
\]
For each element $e$, create an independent high-or-zero box equal to $M$ with
probability $1-1/b_e$ and zero otherwise. Give an element $e$ in class $j$ cost
\[
c_e=S\cdot\ln b_e=S\cdot(\ln p_j+s(e)\ln2),
\]
and set the setup cost to
\[
T=S\cdot(Q-1-\ln Q)-H
=S\cdot\left(Q-1-\sum_{j=1}^k\ln p_j-B\ln2\right)-H.
\]

For a nonempty batch $A$, write
\[
w(A)=\prod_{e\in A}b_e,
\qquad
K(A)=T+\sum_{e\in A}c_e.
\]
The batch fails with probability $1/w(A)$, and its one-shot surplus is
\begin{align*}
M\cdot\left(1-\frac{1}{w(A)}\right)-K(A)
&=S Q\cdot\left(1-\frac{1}{w(A)}\right)
  -\left(S(Q-1-\ln Q)-H+\sum_{e\in A}S\ln b_e\right)\\
&=S Q-\frac{S Q}{w(A)}
  -S(Q-1-\ln Q)+H-S\sum_{e\in A}\ln b_e\\
&=H+S\left(1+\ln Q-\ln w(A)-\frac{Q}{w(A)}\right)\\
&=H-S\cdot\left(\ln\frac{w(A)}{Q}+\frac{Q}{w(A)}-1\right)\\
&=H-S\cdot D(w(A)/Q).
\end{align*}
Thus every target batch has surplus $H$. If $A$ is non-target, then
\cref{lem:analytic-product-separation} gives
\[
H-S\cdot D(w(A)/Q)
\leq H-\frac{S}{16Q^2}
=-\frac{21}{4}\cdot H,
\]
so every non-target batch is overpriced.

Every target batch has the same item-cost sum
\[
S\cdot\left(\sum_{j=1}^k\ln p_j+B\ln2\right),
\]
and therefore the same total cost
\[
K=S\cdot(Q-1)-H,
\]
the same failure probability $q=1/Q$, and the same positive one-shot surplus
\[
\Delta=M\cdot(1-q)-K=H.
\]

Let $\nu$ be the maximum number of pairwise-disjoint target batches. By
\cref{app:lem:overpriced}, an optimal policy opens only target batches. Until
the first success, every observed value is zero, so the all-failure path of any
deterministic policy is a sequence of disjoint target batches and has length at
most $\nu$. Conversely, any packing of $r$ target batches can be opened in any
order until the first success. Its expected utility is
\begin{align*}
M\cdot(1-q^r)-K\cdot(1+q+\cdots+q^{r-1})
&= \bigl(M\cdot(1-q)-K\bigr)\cdot(1+q+\cdots+q^{r-1})\\
&= \Delta\cdot(1+q+\cdots+q^{r-1}).
\end{align*}
This expression is strictly increasing in $r$. Randomization and stopping
before exhausting a packing cannot improve it, since every additional available
target batch has the same positive conditional surplus. Hence
\[
\operatorname{OPT}_1
= \Delta\cdot(1+q+\cdots+q^{\nu-1}),
\]
with the geometric sum interpreted as zero when $\nu=0$.

A YES instance has $\nu=m$, while a NO instance has $\nu\leq m-1$. Define
\[
\Theta=
\frac{1}{2}\cdot
\left[
\Delta\cdot(1+q+\cdots+q^{m-1})
+
\Delta\cdot(1+q+\cdots+q^{m-2})
\right].
\]
Then $\operatorname{OPT}_1\geq\Theta$ exactly in YES instances.

\emph{Representation.}
The construction above used exact logarithms. To make the input rational, let
$s_{\max}=\max_e s(e)$ and
\[
D_0=2\cdot k+k\cdot s_{\max}+B.
\]
By \cref{lem:rational-log-surrogates}, choose rationals
$\ell_1,\ldots,\ell_k,\ell_{\mathrm{bin}}$ such that
\[
|\ell_j-\ln p_j|\leq\delta,\qquad
|\ell_{\mathrm{bin}}-\ln2|\leq\delta,\qquad
\delta=\frac{H}{100\cdot S\cdot D_0}.
\]
Replace the real costs by
\[
c_e=S\cdot(\ell_j+s(e)\cdot\ell_{\mathrm{bin}})
\]
for an element $e$ in class $j$, and set
\[
T=S\cdot\left(Q-1-\sum_{j=1}^k\ell_j-B\cdot\ell_{\mathrm{bin}}\right)-H.
\]
For any batch of at most $k$ elements, the change in its surplus is at most
\[
S\cdot(2\cdot k+k\cdot s_{\max}+B)\cdot\delta
=H/100.
\]
Target batches still have total cost exactly $K=S(Q-1)-H$ and surplus exactly
$H$, while every non-target batch remains overpriced. The setup cost remains
positive because the real setup cost $S(Q-1-\ln Q)-H>0$ for $Q\geq8$ and the
perturbation is at most $H/100$; the item costs are nonnegative by the same
choice of precision.

Finally, the reduction has polynomial size. Since $B$ and all sizes $s(e)$ are
bounded by a polynomial in $m$, and $k$ is fixed, the integers $2^B$ and
$Q=(\prod_j p_j)2^B$ may be large as values but have polynomial binary encoding
length. Hence the labels $b_e$ also have polynomial binary encoding length. The
probabilities $1-1/b_e$, the scale parameters $S,M$, the rational logarithmic
approximations, and the rational costs all have polynomial encoding length.
Also $Q^{m-1}$ has bit length $O(m\log Q)=O(mB)$, so the threshold $\Theta$ is
polynomially representable. Thus the rational instance and threshold are
produced in polynomial time.
\end{proof}

\subsection{Proof of Theorem~\ref{thm:bp-sign-hard}: Sign-gap hardness for non-reusable boxes}

Before proving the non-reusable sign-gap theorem, we isolate the batch-level
hardness used in the reduction. For a feasible batch $B$, let $X_B$ be the best
value observed inside the batch and let $K(B)$ be its opening cost. Its
\emph{one-shot surplus} is
\[
\Delta(B)=\mathbb{E}[X_B]-K(B).
\]
If $\Delta(B)<0$, equivalently $K(B)>\mathbb{E}[X_B]$, the batch is
\emph{overpriced}.

The construction makes the probability of no success equal to the reciprocal of
a product. Costs are logarithmic in the same labels. Thus the reward term
contributes a reciprocal-product term, the cost term contributes a logarithmic
term, and after calibration the surplus contains $D(w/Q)$, which is uniquely
minimized at the target product $w=Q$.

\begin{definition}[Restricted \textsc{Subset Product}]
The input consists of integers $a_1,\ldots,a_n\geq2$ and an integer target
$Q\geq8$. The question is whether there is a subset $S\subseteq[n]$ such that
\[
\prod_{i\in S}a_i=Q.
\]
\end{definition}

The usual \textsc{Subset Product} problem has no restrictions on $a_i$ or $Q$.
The restricted version above remains NP-hard.

\begin{lemma}[Restricted Subset Product Remains NP-Hard]
\label{lem:restricted-subset-product}
\textnormal{Restricted \textsc{Subset Product}} is NP-hard.
\end{lemma}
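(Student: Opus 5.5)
We reduce from the standard \textsc{Subset Product} problem, which is NP-complete \cite{garey1979computers}, by a polynomial-time transformation that enforces $a_i\geq2$ and $Q\geq8$ while preserving the answer.

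Take an arbitrary instance $(a_1,\ldots,a_n;Q)$ with positive integers. First we dispose of degenerate labels.
\begin{itemize}
\item[] Any $a_i=1$ can be deleted: including or excluding it does not change any subset product.
\item[] If some $a_i=0$ or $Q\le 0$ appears (depending on the convention for the source problem), the instance is decided directly in polynomial time or handled by the same trivial case analysis. With positive integers only, $Q=0$ is a NO instance.
\item[] If $Q=1$, the empty subset has product $1$, so the instance is YES, unless the convention requires a nonempty subset. In that case, after deleting all $1$'s, every nonempty product is at least $2$, so the answer is decided directly.
\end{itemize}
Each such case is mapped to a fixed YES or NO restricted instance, for example $(a_1=8;\,Q=8)$ for YES and $(a_1=2;\,Q=9)$ for NO.

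Next we boost the target. Add one new item $a_{n+1}=p$, where $p$ is a prime with $p\geq 8$ that divides none of $a_1,\ldots,a_n,Q$. For instance, take $p$ to be the smallest prime exceeding $\max\{7,Q,\max_i a_i\}$; by Bertrand's postulate it has polynomial bit length and can be found by a search over polynomially many candidates with a primality test. Alternatively, take the smallest prime $\geq 11$ not dividing $Q\prod_i a_i$; it lies within $O(\log(Q\prod_i a_i))$ small primes and is found by trial division. Set the new target to $Q'=pQ\geq 8$.

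Correctness follows from unique factorization. If $\prod_{i\in S}a_i=Q$, then $S\cup\{n+1\}$ has product $Q'$. Conversely, suppose $S'$ has product $pQ$. Because $p$ divides no $a_i$ with $i\le n$, the exponent of $p$ in the product equals the number of times item $n+1$ is used, which is $0$ or $1$. That exponent must equal the exponent of $p$ in $pQ$, which is $1$ since $p\nmid Q$. So $n+1\in S'$, and removing it leaves a subset of the original items with product $Q$. All new labels are at least $2$, and the encoding grows only polynomially.

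The only step needing care is the first one: conventions of the source problem (whether the empty subset is allowed, whether zeros may appear) and the efficient choice of a fresh prime. Both are routine. An even simpler choice that avoids needing a prime is $a_{n+1}=8$ with target $8Q$. However, this could create spurious solutions through factors of $2$, so the coprime prime is the safe choice.
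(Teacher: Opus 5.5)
Your reduction is correct if you use your second prime-selection rule. Its skeleton matches the paper's: delete the $1$'s, append one new label, multiply the target by it, and show that every solution must use it. The forcing mechanism is different, though. The paper forces the new label by size rather than divisibility. With $A$ the product of the remaining labels, it adds $P=8A$ and sets $Q'=PQ$. A subset avoiding $P$ has product at most $A<Q'$, so $P$ is used and the rest multiplies to $Q$. This is exactly the repair of your rejected ``$a_{n+1}=8$'' idea: spurious solutions are excluded by making the new label exceed all the others combined, not by coprimality. It needs no primes, only the observation that $\log A$ is at most the input length. Your $p$-adic argument is equally valid and more robust, since it pins down the multiplicity of the new label even if repetitions were allowed. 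That is why the paper itself switches to private primes in the reusable reduction. The price is a prime search.

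Your write-up has an error in that search: the first rule is not polynomial-time. Bertrand's postulate only places a prime in $(m,2m]$, an interval with exponentially many integers in the bit length of $m$. No unconditional polylogarithmic bound on prime gaps is known, so the smallest prime above $\max\{7,Q,\max_i a_i\}$ is not known to be deterministically computable in polynomial time. Keep only your second rule, with the count made explicit. For input bit length $L$ we have $Q\prod_i a_i<2^L$, so it has fewer than $L$ distinct prime factors. Hence one of the $L$ primes following $7$ divides none of the inputs. All of these primes are $O(L\log L)$, so a sieve plus trial division finds one; the paper uses the same counting for its private primes.

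There is also a minor slip in your degenerate cases. Under a nonempty-subset convention with $Q=1$, you must decide before deleting the $1$'s: the answer is YES if and only if some $a_i=1$.
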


\begin{proof}
Reduce from the standard \textsc{Subset Product} problem. Given an instance
$(a_1,\ldots,a_n,Q)$, delete all labels equal to $1$, since they do not affect
feasibility, and write
\[
A=\prod_i a_i
\]
for the product of the remaining labels, with $A=1$ if none remain. Add one new
label
\[
P=8\cdot A
\]
and set the new target to
\[
Q'=P\cdot Q.
\]
All labels in the new instance are at least $2$, and $Q'\geq8$.

If the original instance has a subset of product $Q$, then adding the new label
$P$ gives a subset of product $Q'$. Conversely, any subset of the new instance
with product $Q'$ must contain $P$: without it, the product is at most
$A<Q'$. After removing $P$, the remaining selected labels have product $Q$.
Thus the two instances are equivalent, and the transformation has polynomial
bit length because $\log A$ is at most the total input length.
\end{proof}

\begin{lemma}
\label{lem:product-gap}
For every rational $H>0$, deciding whether a common-prize high-or-zero
non-reusable instance contains a nonempty positive-surplus feasible batch is
NP-hard, even when the batch capacity is part of the input and also when
$k=\infty$, and even with $T>0$.
\end{lemma}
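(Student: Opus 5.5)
We reduce from Restricted \textsc{Subset Product}, which is NP-hard by \cref{lem:restricted-subset-product}. The product gadget is the same as in the proof of \cref{thm:bp-fixed-hard}, except that class primes and dummy classes are no longer needed. Given $a_1,\ldots,a_n\geq2$ and target $Q\geq8$, we create one box per label $a_i$. Box $i$ equals a common prize $M$ with probability $1-1/a_i$ and equals zero otherwise. We set $S=100\cdot H\cdot Q^2$ and $M=S\cdot Q$, and use the ideal costs
\[
c_i=S\ln a_i,\qquad T=S\cdot(Q-1-\ln Q)-H.
\]
The batch capacity is $k=n$; since $\BP(T,k)$ with $k\geq n$ coincides with $\BP(T,\infty)$, the same instance also covers $k=\infty$.

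For a nonempty batch $A$, write $w(A)=\prod_{i\in A}a_i$. By independence, $A$ fails with probability $1/w(A)$. The same algebra as in the fixed-capacity proof gives the one-shot surplus
\[
\Delta(A)=M\Bigl(1-\tfrac{1}{w(A)}\Bigr)-K(A)=H-S\cdot D\bigl(w(A)/Q\bigr).
\]
If $w(A)=Q$, then $D(1)=0$ and the surplus is exactly $H>0$. If $w(A)\neq Q$, then $w(A)$ is a positive integer, so \cref{lem:analytic-product-separation} gives $D(w(A)/Q)\geq 1/(16Q^2)$. Hence the surplus is at most $H-S/(16Q^2)=-\tfrac{21}{4}H<0$. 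So a positive-surplus batch exists if and only if some subset of the labels has product $Q$. This is precisely the Subset Product question, since every subset is a feasible batch when $k=n$.

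The remaining work is representation, which I expect to be the only delicate step. Here $n$ is not fixed, and batches may contain up to $n$ boxes, so the rounding errors accumulate over $n$ cost terms. By \cref{lem:rational-log-surrogates}, we compute rationals $\ell_i$ with $|\ell_i-\ln a_i|\leq\delta$ and $\ell_Q$ with $|\ell_Q-\ln Q|\leq\delta$, where $\delta=H/(100\cdot S\cdot(n+1))$. We then set $c_i=S\ell_i$ and $T=S(Q-1-\ell_Q)-H$.

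With these choices, any batch's surplus moves by at most $S(n+1)\delta=H/100$. Target batches therefore keep surplus at least $\tfrac{99}{100}H>0$, and non-target batches stay below $-5H$. Positivity of $T$ and nonnegativity of the costs follow as before: $S(Q-1-\ln Q)-H>0$ for $Q\geq8$, and the perturbation is tiny.

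Bit lengths are polynomial. $Q$ and the $a_i$ are given in binary, so $S$, $M$ and the probabilities $1-1/a_i$ have polynomial encoding length, and $\log(1/\delta)$ is polynomial in the input size. This places the problem of detecting a positive-surplus batch in the NP-hard regime. The construction has $T>0$ and is common-prize high-or-zero, for every rational $H>0$.
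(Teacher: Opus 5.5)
Your proposal is correct and follows essentially the same route as the paper. Both use one box per label, the calibration $S=100HQ^2$ (the paper's $\Lambda$), $M=SQ$, $c_i=S\ln a_i$, $T=S(Q-1-\ln Q)-H$ and $k=n$ to cover $k=\infty$, and both take precision $\delta=H/(100S(n+1))$, which bounds every batch's surplus perturbation by $H/100$.
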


\begin{proof}
We first describe the reduction using exact logarithmic costs and ignore
representation-size issues; the final paragraph replaces the logarithms by
rational approximations and checks that the constructed instance has polynomial
encoding length. We reduce from restricted \textsc{Subset Product}. Let the
source instance be $(b_1,\ldots,b_N,Q)$. Set
\[
\Lambda=100\cdot H\cdot Q^2,
\qquad
M=\Lambda\cdot Q.
\]
For label $b_j$, create a high-or-zero box equal to $M$ with probability
$1-1/b_j$ and zero otherwise. A set $A$ has failure probability
\[
\prod_{j\in A} b_j^{-1}=\frac{1}{w(A)},
\qquad
w(A)=\prod_{j\in A}b_j.
\]
Give label $b_j$ opening cost
\[
c_j=\Lambda\cdot\ln b_j,
\qquad
T=\Lambda\cdot(Q-1-\ln Q)-H.
\]
The one-shot surplus of $A$ is
\begin{align*}
M\cdot\left(1-\frac{1}{w(A)}\right)
-T-\sum_{j\in A} c_j
&=\Lambda Q\cdot\left(1-\frac{1}{w(A)}\right)
  -\left(\Lambda(Q-1-\ln Q)-H+\Lambda\sum_{j\in A}\ln b_j\right)\\
&=H+\Lambda\left(1+\ln Q-\ln w(A)-\frac{Q}{w(A)}\right)\\
&= H-\Lambda\cdot\left(\ln\frac{w(A)}{Q}+\frac{Q}{w(A)}-1\right)\\
&= H-\Lambda\cdot D(w(A)/Q).
\end{align*}
It equals $H$ when $w(A)=Q$. If $w(A)\ne Q$, then
\cref{lem:analytic-product-separation} gives
\[
H-\Lambda\cdot D(w(A)/Q)
\leq H-\frac{\Lambda}{16\cdot Q^2}
= -\frac{21}{4}\cdot H.
\]

\emph{Representation.}
The construction above used exact logarithms. To make the input rational, apply
\cref{lem:rational-log-surrogates} and choose rational numbers
$\ell_j,\ell_Q$ satisfying
\[
|\ell_j-\ln b_j|\leq\delta,
\qquad
|\ell_Q-\ln Q|\leq\delta,
\qquad
\delta=\frac{H}{100\cdot \Lambda\cdot(N+1)}.
\]
Replace the exact logarithmic costs by
\[
c_j=\Lambda\cdot \ell_j,
\qquad
T=\Lambda\cdot(Q-1-\ell_Q)-H.
\]
For every nonempty set $A$, rationalization changes its surplus by at
most
\[
\Lambda\cdot\left(|\ell_Q-\ln Q|+
  \sum_{j\in A}|\ell_j-\ln b_j|\right)
\leq \Lambda\cdot(N+1)\cdot\delta
= H/100.
\]
Therefore a target batch has surplus in
$[99\cdot H/100,101\cdot H/100]$, while every off-target batch has surplus at
most $-21H/4+H/100<-5H$. Thus the constructed instance contains a
positive-surplus batch if and only if the restricted \textsc{Subset Product}
instance is YES.
The precision also keeps every $\ell_j$ positive, since $b_j\geq2$. Since
$Q\geq8$, the quantity $Q-1-\ln Q$ is bounded away from zero, and the
approximation error is less than $H/100$, so $T>0$. The required precision has
polynomial encoding length, so \cref{lem:rational-log-surrogates} makes the
whole construction polynomial time. Set the batch capacity to $k=N$; since there are
only $N$ boxes, the same instance also proves the claim for $k=\infty$.
\end{proof}

\begin{proof}[Proof of Theorem~\ref{thm:bp-sign-hard}]
Use the instances constructed in \cref{lem:product-gap} with $H=1$. In a YES
instance, there is a batch with positive one-shot surplus, so opening it once
gives $\operatorname{OPT}_1>0$. In a NO instance, every nonempty feasible batch
is overpriced. By \cref{app:lem:overpriced}, there is an optimal non-reusable
policy that never opens an overpriced batch. Hence the empty policy is optimal
and $\operatorname{OPT}_1=0$. Thus a polynomial-time solver for the promise
problem would decide the NP-hard batch-existence problem from
\cref{lem:product-gap}.
\end{proof}

\subsection{Proof of Theorem~\ref{thm:rbp-hard}: Sign-gap hardness for reusable boxes}

\begin{lemma}
\label{lem:multiset-product-gap}
For every rational $H>0$, deciding whether a common-prize high-or-zero reusable
instance contains a nonempty positive-surplus feasible batch is NP-hard when
finite $k$ is part of the input, and also for $k=\infty$, even with $T>0$.
\end{lemma}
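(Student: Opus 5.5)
We reduce from restricted \textsc{Subset Product} (\cref{lem:restricted-subset-product}), using the private-prime gadget described before \cref{thm:rbp-hard} together with the product surplus identity from \cref{lem:product-gap}.

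\emph{Construction.} Let the source instance be $(a_1,\ldots,a_N,Q)$. Choose $N$ distinct primes $p_1,\ldots,p_N$ that divide no $a_j$ and no $Q$; for instance, take primes larger than $\max(Q,\max_j a_j)$. They can be found in polynomial time, with polynomial bit length, by the prime number theorem and deterministic primality testing. For each $i$ create two box types:
\begin{itemize}
\item $u_i$ with label $p_i$;
\item $v_i$ with label $p_i a_i$.
\end{itemize}
Set the target $Q'=Q\prod_i p_i$, which is at least $8$. Each type is a high-or-zero box with prize $M=\Lambda Q'$ and failure probability $1/\text{label}$. The costs are $c=\Lambda\ln(\text{label})$ and $T=\Lambda(Q'-1-\ln Q')-H$, with $\Lambda=100HQ'^2$. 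For a multiset $\vec a$ with label product $w(\vec a)$, the independent fresh samples fail together with probability $1/w(\vec a)$. The computation of \cref{lem:product-gap} then applies verbatim to multisets and gives surplus $H-\Lambda D(w(\vec a)/Q')$. By \cref{lem:analytic-product-separation}, this equals $H$ when $w=Q'$ and is at most $-21H/4$ otherwise.

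\emph{Correctness.} Suppose $w(\vec a)=Q'$. Each $p_i$ has exponent one in $Q'$ and divides exactly the two labels of index $i$. Hence the multiplicities satisfy $a_{u_i}+a_{v_i}=1$. Cancelling the private primes leaves $\prod_{i:a_{v_i}=1}a_i=Q$, which is a subset solution. Conversely, a subset $S$ with product $Q$ yields the batch that takes $v_i$ for $i\in S$ and $u_i$ otherwise. This batch has exactly $N$ samples, so we set $k=N$ in the finite-capacity case.

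\emph{Rationalization.} This is the main obstacle. In the non-reusable proof, the error was controlled by the number of boxes, but here multisets may be arbitrarily large, so perturbing the logarithms could accumulate error. For finite $k=N$, every batch has at most $N$ elements. Choosing $\delta=H/(100\Lambda(N+1))$ therefore bounds the surplus change by $H/100$, exactly as before.

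For $k=\infty$, we cap the relevant sizes using the positive item costs. Every label is at least $2$, so every rational item cost is at least $\Lambda(\ln2-\delta)\geq\Lambda/2$. A batch of size $r$ then has cost at least $r\Lambda/2$, while its reward is at most $M=\Lambda Q'$. So every batch with $r>2Q'$ is automatically overpriced regardless of the rationalization. For $r\leq 2Q'$, we take $\delta=H/(100\Lambda(2Q'+1))$, which still has polynomial bit length. The surplus perturbation is then at most $H/100$. Thus positive surplus occurs if and only if the source instance is a YES instance, and $T>0$ persists as in \cref{lem:product-gap}.
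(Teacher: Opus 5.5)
Your proof follows the paper's argument essentially step for step. It uses the same private-prime gadget with labels $p_i$ and $p_i a_i$ and target $Q'=Q\prod_i p_i$, the same surplus identity $H-\Lambda D(w/Q')$ with $\Lambda=100HQ'^2$, and capacity $k=N$ in the finite case. For $k=\infty$ you use the same device: positive item costs make every large multiset overpriced, so the logarithm-approximation error only has to be controlled on a bounded range of sizes. Your cutoff $2Q'$ with item costs at least $\Lambda/2$ plays the role of the paper's $L_0=4Q'+1$ with item costs at least $\Lambda/4$. These parts are correct.

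The one step to repair is how you find the private primes. You take primes exceeding $\max(Q,\max_j a_j)$ and justify this with the prime number theorem plus deterministic primality testing. That only shows such primes are dense, so random sampling finds them in expected polynomial time. Searching upward deterministically would need a polylogarithmic bound on prime gaps, which is not known: unconditional bounds are exponential in the bit length. Finding a prime of a given bit length in deterministic polynomial time is an open problem, so as written your reduction is randomized rather than a deterministic Karp reduction.

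The fix is to use small primes, as the paper does. $Q\prod_j a_j$ has at most $L_{\mathrm{in}}$ distinct prime divisors, where $L_{\mathrm{in}}$ is the input bit length. So at least $N$ of the first $N+L_{\mathrm{in}}$ primes divide neither $Q$ nor any $a_j$. These primes have polynomially bounded magnitude and can be found deterministically by sieving. Nothing else in your argument depends on the primes being large.
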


\begin{proof}
We first describe the reduction using exact logarithmic costs and ignore
representation-size issues; the final paragraph replaces the logarithms by
rational approximations and checks that the constructed instance has polynomial
encoding length. We reduce from restricted \textsc{Subset Product}. Let the
source instance be $(a_1,\ldots,a_n,Q)$. The only additional difficulty in the
reusable model is that a batch may contain repeated copies. We first encode each
original item as a forced binary choice, so that any product hitting the target
uses each source item at most once.

\emph{Encoding the subset choice.}
Choose pairwise distinct private primes $p_1,\ldots,p_n$ that divide neither
$Q$ nor any input number $a_j$.

For each source item $i$, create two labels
\[
b_{i,0}=p_i,
\qquad
b_{i,1}=p_i\cdot a_i,
\]
and set
\[
Q'=Q\cdot\prod_i p_i.
\]
The label $b_{i,0}$ means ``skip item $i$,'' while $b_{i,1}$ means ``take item
$i$.'' Both labels contain the private prime $p_i$, and no other label or
source factor contains $p_i$.

For a reusable multiplicity vector $u$, define
\[
w(u)=\prod_i b_{i,0}^{u_{i,0}}\cdot b_{i,1}^{u_{i,1}}.
\]
The private-prime encoding gives the following equivalence:
\[
w(u)=Q'
\quad\Longleftrightarrow\quad
u_{i,0}+u_{i,1}=1\ \ \forall i
\quad\text{and}\quad
\prod_{i:u_{i,1}=1}a_i=Q.
\]
Indeed, comparing the exponent of each private prime $p_i$ forces
$u_{i,0}+u_{i,1}=1$, and after canceling all private primes the remaining
condition is exactly the source subset-product equation.

\emph{Creating the reusable sign gap.}
Let $L=n$ for the finite-batch-capacity instance, where we set $k=n$. For the
unbounded-batch-capacity instance, set $L=L_0=4Q'+1$.

Apply the product construction to the $2n$ labels $b_{i,0},b_{i,1}$ with target
$Q'$ and multiplicity bound $L$. Set
\[
\Lambda=100\cdot H\cdot (Q')^2,
\qquad
M=\Lambda Q'.
\]
For each label $b_j$, create a reusable high-or-zero type equal to $M$ with
probability $1-1/b_j$ and zero otherwise. Give label $b_j$ opening cost
\[
c_j=\Lambda\ln b_j,
\qquad
T=\Lambda(Q'-1-\ln Q')-H.
\]
For a multiplicity vector $u$ with $\|u\|_1\leq L$, the all-zero probability is
$1/w(u)$, and its one-shot surplus is
\begin{align*}
M\cdot\left(1-\frac{1}{w(u)}\right)-T-\sum_j u_j c_j
&=\Lambda Q'\cdot\left(1-\frac{1}{w(u)}\right)
  -\left(\Lambda(Q'-1-\ln Q')-H+\Lambda\sum_j u_j\ln b_j\right)\\
&=H-\Lambda\cdot\left(\ln\frac{w(u)}{Q'}+\frac{Q'}{w(u)}-1\right)\\
&=H-\Lambda D(w(u)/Q').
\end{align*}
Thus vectors with $w(u)=Q'$ have surplus $H$, while those with $w(u)\ne Q'$
have surplus at most $-21H/4$ by \cref{lem:analytic-product-separation}.

Since every label is at least $2$, every item cost is at least $\Lambda/4$.
In the finite-batch-capacity instance every feasible batch has $\|u\|_1\leq n=L$. In
the unbounded-batch-capacity instance, every batch with $\|u\|_1>L_0$ has cost greater
than
\[
L_0\cdot\frac{\Lambda}{4} = M+\frac{\Lambda}{4},
\]
so its surplus is at most $-\Lambda/4<-5H$ because its expected reward is at
most $M$.

\emph{Representation.}
The construction above used exact logarithms. To make the input rational, apply
\cref{lem:rational-log-surrogates} to the labels and to $Q'$, and choose
rational approximations with
\[
\delta=\min\left\{\frac{H}{100\cdot\Lambda\cdot(L+1)},\frac14\right\}.
\]
Replace each $\ln b_j$ and $\ln Q'$ in the costs and setup cost by its rational
approximation. Rationalizing the logarithms changes the surplus of every vector
$\|u\|_1\leq L$ by at most $H/100$. Hence target vectors have surplus in
$[99\cdot H/100,101\cdot H/100]$, and every non-target vector of size at most
$L$ has surplus at most $-21H/4+H/100<-5H$. Since $b_j\geq2$ and
$\delta\leq1/4$, every item cost remains at least $\Lambda/4$. Therefore, in the
unbounded-batch-capacity instance, every batch with $\|u\|_1>L_0$ still has surplus at
most $-\Lambda/4<-5H$.

Combining this with the private-prime equivalence, the constructed rational
reusable instance has a positive-surplus feasible batch if and only if the
source \textsc{Subset Product} instance is YES. The private primes can be found
in polynomial time: if $L_{\mathrm{in}}$ is the total source bit length, then
$Q\cdot\prod_j a_j$ has at most $L_{\mathrm{in}}$ distinct prime divisors, so
among the first $n+L_{\mathrm{in}}$ primes at least $n$ are available. Thus the
primes, labels, and target $Q'$ have polynomial binary encoding length. In the
unbounded-batch-capacity case, $L_0=4Q'+1$ may be numerically large, but its bit length
is polynomial. The required precision has polynomial encoding length. The exact
setup cost is positive for $Q'\geq8$, and replacing $\ln Q'$ by its rational
approximation changes it by at most $\Lambda\delta\leq H/100$, so the setup cost
remains positive. Hence the rational instance is produced in polynomial time.
\end{proof}

\begin{proof}[Proof of Theorem~\ref{thm:rbp-hard}]
Use the instances constructed in \cref{lem:multiset-product-gap} with $H=1$. In
a YES instance, there is a reusable batch with positive one-shot surplus. For a
common-prize reusable batch with success probability $s$, cost $K$, and prize
$M$, positive surplus means $K<sM$, so its reservation value is
$M-K/s>0$. By \cref{thm:reusable-index}, this gives $\operatorname{OPT}_1>0$.
In a NO instance, every nonempty reusable batch is overpriced. Then
$\sigma^\star=0$, and \cref{thm:reusable-index} says opening nothing is optimal,
so $\operatorname{OPT}_1=0$. Thus a polynomial-time solver for the promise
problem would decide the NP-hard batch-existence problem from
\cref{lem:multiset-product-gap}.
\end{proof}
\section{Proofs of Bicriteria Approximation Results}
\label{sec:approximation-proofs}

\subsection{Proof of Theorem~\ref{thm:general-noreusable}: Non-reusable boxes}
\label{subsec:proof-nonreusable-approx}

The proof follows the outline from \cref{subsec:approx-nonreusable}. We first
identify the finite-support reward envelope, then verify that every adaptive
policy induces a feasible LP point, and finally analyze the rounding of the
opening marginals and the fractional batch count.

For fixed opening marginals $x\in[0,1]^n$, define the atomized one-winner
envelope
\[
B(x)=
\max\left\{
\sum_i\sum_{\ell\in L_i} v_{i\ell}y_{i\ell}:
0\leq y_{i\ell}\leq p_{i\ell}x_i\ \forall i,\ell,
\quad
\sum_i\sum_{\ell\in L_i}y_{i\ell}\leq1
\right\}.
\]
The useful form of this envelope is its threshold representation. As in the
high-or-zero case, the identity is the same fractional-knapsack structure for a
single winner; the only difference is that each box has been split into its
positive atoms.

\begin{proposition}[Threshold Form of the Atomized Envelope]
\label{prop:threshold-form-general-bx}
For every $x\in[0,1]^n$,
\[
B(x)=
\int_0^\infty
\min\left\{1,\sum_i x_i\Pr[V_i>t]\right\}\,dt .
\]
\end{proposition}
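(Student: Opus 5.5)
The plan is to solve the linear program defining $B(x)$ in closed form as a fractional knapsack, and then rewrite that greedy value as an integral over thresholds via the layer-cake formula.

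\emph{Step 1: greedy solution.} Pool all atoms $(i,\ell)$ with $\ell\in L_i$. Atom $(i,\ell)$ has value $v_{i\ell}>0$ and capacity $p_{i\ell}x_i$, and all atoms share a single unit budget $\sum_{i,\ell}y_{i\ell}\le1$. Since every objective coefficient is positive and there is only this one coupling constraint, a standard exchange argument shows the maximum is attained greedily. Sort the atoms by decreasing value and fill each one to its capacity until the unit budget is exhausted. If the total capacity $\sum_{i,\ell}p_{i\ell}x_i$ is less than $1$, every atom is simply filled.

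\emph{Step 2: layer-cake rewriting.} Let $y^*$ be the greedy solution. Write each value as $v_{i\ell}=\int_0^\infty \mathbf 1\{v_{i\ell}>t\}\,dt$ and exchange sum and integral (the sum is finite):
\[
B(x)=\int_0^\infty \sum_{i,\ell:\,v_{i\ell}>t} y^*_{i\ell}\,dt.
\]
It then suffices to show, for each fixed $t$, that
\[
\sum_{v_{i\ell}>t}y^*_{i\ell}=\min\Bigl\{1,\sum_{v_{i\ell}>t}p_{i\ell}x_i\Bigr\}.
\]
Note that $\sum_{\ell:\,v_{i\ell}>t}p_{i\ell}=\Pr[V_i>t]$ for every $t\ge0$, because the zero atom never exceeds $t\ge 0$. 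So the right-hand side is exactly the integrand $\min\{1,\sum_i x_i\Pr[V_i>t]\}$ in the claimed formula.

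\emph{Step 3: the per-threshold identity.} The atoms with $v_{i\ell}>t$ form a prefix of the greedy order, so greedy fills them before any atom of value at most $t$. There are two cases. If the prefix's total capacity is below $1$, greedy fills every atom in the prefix to capacity. Otherwise the unit budget is exhausted inside the prefix, so the prefix receives exactly mass $1$. Either way the mass on the prefix equals $\min\{1,\text{prefix capacity}\}$, which is the identity needed in Step 2.

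\emph{Main obstacle.} The only delicate point is ties among equal values, which make the greedy order ambiguous. Since the objective is indifferent to how budget is split among equal-valued atoms, fix any tie-breaking rule. The set $\{(i,\ell):v_{i\ell}>t\}$ is still a prefix for every $t$, so the argument goes through unchanged.

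An alternative, which avoids appealing to the greedy exchange argument, is to prove the two inequalities directly. For the upper bound, any feasible $y$ satisfies $\sum_{v>t}y\le\min\{1,\sum_{v>t}p x\}$ pointwise in $t$; integrating over $t$ bounds its objective by the integral. For the lower bound, the greedy $y^*$ is feasible and attains the integral with equality.
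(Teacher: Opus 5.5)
Your proposal is correct and is essentially the paper's proof. The paper argues exactly as in your closing ``alternative'' paragraph: it bounds any feasible $y$ pointwise in $t$ by $\min\{1,\sum_{v_{i\ell}>t}p_{i\ell}x_i\}$, then shows the greedy fill attains this bound at every threshold, so the pointwise bound itself certifies that greedy is optimal and the exchange argument in your Step~1 is unnecessary. Your observation that $\{(i,\ell):v_{i\ell}>t\}$ is a prefix of the greedy order for every $t$, whatever the tie-breaking, is right, and it is slightly cleaner than the paper's restriction to thresholds that are not atom values.
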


\begin{proof}
For each positive atom $a=(i,\ell)$, write
$w_a=v_{i\ell}$ and $\mu_a=p_{i\ell}x_i$. Then
\[
B(x)=\max\left\{
\sum_a w_a y_a:
0\leq y_a\leq \mu_a\ \forall a,
\quad \sum_a y_a\leq1
\right\}.
\]
For any feasible $y$,
\[
\sum_a w_a y_a
=\int_0^\infty \sum_{a:w_a>t}y_a\,dt
\leq
\int_0^\infty
\min\left\{1,\sum_{a:w_a>t}\mu_a\right\}\,dt .
\]
Since $\sum_{a:w_a>t}\mu_a=\sum_i x_i\Pr[V_i>t]$, this gives the upper bound.

For the reverse inequality, order the atoms in nonincreasing value and greedily
fill one unit of winner mass:
\[
y_a=
\min\left\{\mu_a,\left(1-\sum_{b<a}y_b\right)^+\right\}.
\]
For every threshold $t$ that is not an atom value, the atoms with $w_a>t$ form
a prefix of this order after grouping equal values, and the greedy construction
attains
\[
\sum_{a:w_a>t}y_a
=
\min\left\{1,\sum_{a:w_a>t}\mu_a\right\}.
\]
The two sides can differ only at finitely many atom values, which do not affect
the integral.
\end{proof}

With this envelope in place, the next step is to check that the LP is indeed a relaxation of adaptive policies. This is where the nonanticipatory nature of a policy is used: the decision to open box $i$ is made before the realization of $V_i$ is observed.

When upper-bounding a positive-utility policy, it is without loss of
generality to assume that the policy opens its first batch with probability
one. Indeed, if a policy stops immediately with probability \(1-q\) and
otherwise follows a nonempty policy \(\pi'\), then
\[
    \operatorname{val}_1(\pi)
    =
    q\,\operatorname{val}_1(\pi').
\]
Thus, whenever \(\operatorname{val}_1(\pi)>0\), conditioning on the nonempty
branch weakly increases utility.
\begin{lemma}[General LP Upper Bound]
\label{lem:general-lp-upper}
Every adaptive non-reusable policy \(\pi\) that opens its first batch with probability one induces a feasible LP solution with objective value $\mathbb{E}[M_\pi]-\mathbb{E}[C_\pi]$.
\end{lemma}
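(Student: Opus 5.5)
We define the LP point directly from the policy's ex-ante statistics and then check each constraint and the objective. Let $x_i=\Pr[\pi\text{ opens box } i]$ and let $z=\mathbb{E}[\text{number of batches opened by }\pi]$. For the winner variables, fix a deterministic tie-breaking rule, say by box index, that selects a single box $W$ attaining $M_\pi$ whenever $M_\pi>0$. Set $y_{i\ell}=\Pr[W=i,\ V_i=v_{i\ell}]$ for $\ell\in L_i$. If $\pi$ is randomized, all of these are expectations over both the policy's internal randomness and the box values.

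The cost and capacity constraints are the simple part. Since the first batch is opened with probability one, at least one batch is always opened, so $z\ge1$. Batches in the non-reusable model are disjoint and nonempty, so at most $n$ of them are opened on every path, giving $z\le n$. Each opened batch has at most $k$ boxes, so $\sum_i \mathbf{1}[i\text{ opened}]\le k\cdot(\#\text{batches})$ holds pointwise. Taking expectations gives $\sum_i x_i\le kz$. Clearly $x_i\in[0,1]$. By linearity of expectation, $\mathbb{E}[C_\pi]=T z+\sum_i c_i x_i$, because box $i$'s cost is paid exactly when it is opened and $T$ is paid once per batch.

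For the reward side, the events $\{W=i,V_i=v_{i\ell}\}$ are disjoint across all pairs $(i,\ell)$, so $\sum_{i,\ell}y_{i\ell}\le 1$. Moreover $\mathbb{E}[M_\pi]=\sum_{i,\ell}v_{i\ell}y_{i\ell}$: whenever $M_\pi>0$, the winner realizes some positive atom, and zero atoms contribute nothing to the reward. Combined with the cost identity above, the LP objective at $(x,y,z)$ equals $\mathbb{E}[M_\pi]-\mathbb{E}[C_\pi]$.

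The main step is $y_{i\ell}\le p_{i\ell}x_i$, and this is where nonanticipation is used. We have $y_{i\ell}\le \Pr[i\text{ opened},\ V_i=v_{i\ell}]$, since box $i$ can only be the winner if it was opened. The decision to include box $i$ in some batch is a function of the policy's randomness and the values of boxes in earlier batches. Because box $i$ is non-reusable, it was never opened before that point, so none of those earlier boxes is $i$. By independence of the boxes, the event $\{i\text{ opened}\}$ is therefore independent of $V_i$. To make this rigorous, we decompose the event by the history $h$ at the moment $i$ is placed in a batch, and sum the factorizations $\Pr[h]\Pr[V_i=v_{i\ell}]$. This yields $\Pr[i\text{ opened},V_i=v_{i\ell}]=x_i p_{i\ell}$. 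All constraints then hold, and the point is feasible with the stated objective value.
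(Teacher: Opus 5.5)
Your proposal is correct and follows essentially the same route as the paper's proof. It uses the same ex-ante definitions of $x$, $z$, and the tie-broken winner variables $y_{i\ell}$, the same pathwise capacity and cost accounting, and the same nonanticipation argument for $y_{i\ell}\le p_{i\ell}x_i$; your decomposition by the history at the moment box $i$ is placed in a batch just makes the paper's one-line independence claim explicit.
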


\begin{proof}
Let $x_i$ be the probability that box $i$ is opened, and let $z$ be the
expected number of opened batches. Since $\pi$ opens a first batch, uses
nonempty batches, and consumes every opened box,
\[
1\leq z\leq n,
\qquad
\sum_i x_i\leq kz.
\]
Moreover,
\[
\mathbb{E}[C_\pi]=Tz+\sum_i c_i x_i.
\]
Break ties deterministically among boxes attaining the final maximum. For each
positive atom, let $y_{i\ell}$ be the probability that box $i$ is the selected
positive winner and realizes value $v_{i\ell}$. The event that $i$ is opened is
determined before $V_i$ is revealed and is independent of $V_i$, so
\[
y_{i\ell}
\leq \Pr[i\text{ is opened and }V_i=v_{i\ell}]
=x_i p_{i\ell}.
\]
At most one positive atom is selected, so $\sum_{i,\ell}y_{i\ell}\leq1$, and
\[
\mathbb{E}[M_\pi]=\sum_i\sum_{\ell\in L_i}v_{i\ell}y_{i\ell}.
\]
Thus the induced point is feasible and has the claimed objective.
\end{proof}

The upper bound above is fractional in the winner variables. To round it, we
compare the envelope with the reward of an actual random set of opened boxes.
For a set $S\subseteq[n]$, let
\[
R(S)=\mathbb{E}\left[\max_{i\in S}V_i\right],
\qquad
R(\varnothing)=0,
\]
and let $F$ be its multilinear extension. The function $R$ is normalized,
monotone, and submodular.

The next lemma is the rank-one correlation-gap comparison between independent
sampling with marginals $x$ and the one-winner envelope $B(x)$.

\begin{lemma}[Correlation Gap for the Winner Envelope]
\label{lem:general-random-set-gap}
For every $x\in[0,1]^n$,
\[
F(x)\geq \left(1-\frac1e\right)B(x).
\]
\end{lemma}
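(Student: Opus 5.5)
We compare the two sides threshold by threshold, using the layer-cake representation of both quantities.

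\emph{Layer-cake form of $F$.} Let $S_x$ include each box $i$ independently with probability $x_i$, independently of the values. Since rewards are nonnegative,
\[
F(x)=\mathbb{E}\Bigl[\max_{i\in S_x}V_i\Bigr]=\int_0^\infty \Pr\Bigl[\max_{i\in S_x}V_i>t\Bigr]\,dt .
\]
For a fixed $t$, the event $\max_{i\in S_x}V_i>t$ fails only if every box $i$ is either excluded or realizes $V_i\le t$. By independence across boxes and between inclusion and values, this gives
\[
\Pr\Bigl[\max_{i\in S_x}V_i>t\Bigr]=1-\prod_i\bigl(1-x_iq_i(t)\bigr),
\qquad q_i(t)=\Pr[V_i>t].
\]

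\emph{Layer-cake form of $B$.} By \cref{prop:threshold-form-general-bx},
\[
B(x)=\int_0^\infty\min\{1,s(t)\}\,dt,
\qquad s(t)=\sum_i x_iq_i(t).
\]
It therefore suffices to prove the pointwise inequality
\[
1-\prod_i(1-u_i)\ \ge\ \Bigl(1-\frac1e\Bigr)\min\Bigl\{1,\sum_i u_i\Bigr\}
\qquad\text{for all } u_i=x_iq_i(t)\in[0,1],
\]
and then integrate over $t$.

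\emph{Proof of the pointwise inequality.} This is the standard rank-one correlation-gap bound.
\begin{itemize}
\item Using $1-u\le e^{-u}$ termwise gives $\prod_i(1-u_i)\le e^{-s}$ with $s=\sum_i u_i$, so the left-hand side is at least $1-e^{-s}$.
\item If $s\ge1$, then $1-e^{-s}\ge 1-1/e$, which is the claimed bound since the minimum equals $1$.
\item If $s\le1$, the function $g(s)=1-e^{-s}$ is concave with $g(0)=0$ and $g(1)=1-1/e$. Concavity on $[0,1]$ places $g$ above its chord, so $g(s)\ge(1-1/e)s$.
\end{itemize}
Integrating the pointwise inequality over $t\in[0,\infty)$ yields $F(x)\ge(1-1/e)B(x)$.

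\emph{Main obstacle.} The only point requiring care is applying \cref{prop:threshold-form-general-bx} to rewrite $B(x)$ as an integral. The remaining steps are the product formula, which uses independence of inclusion from values and across boxes, and the elementary concavity bound. Both integrands are bounded by $1$ and vanish for $t\ge V_{\max}$, so the integrals are finite and termwise comparison is justified.
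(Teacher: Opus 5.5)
Your proposal is correct and follows essentially the same route as the paper. Both write $F(x)$ in threshold form as $\int_0^\infty\bigl(1-\prod_i(1-x_i\Pr[V_i>t])\bigr)\,dt$, bound the integrand pointwise by $1-e^{-s_t}\ge(1-1/e)\min\{1,s_t\}$, and then integrate using \cref{prop:threshold-form-general-bx}; you simply spell out the concavity step and the integrability remarks a bit more explicitly.
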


\begin{proof}
If each box $i$ is independently sampled with probability $x_i$, then
\[
F(x)=
\int_0^\infty
\left(1-\prod_i\left(1-x_i\Pr[V_i>t]\right)\right)dt .
\]
For a threshold $t$, put $s_t=\sum_i x_i\Pr[V_i>t]$. Then
\[
1-\prod_i\left(1-x_i\Pr[V_i>t]\right)
\geq 1-e^{-s_t}
\geq \left(1-\frac1e\right)\min\{1,s_t\}.
\]
Integrating and applying \cref{prop:threshold-form-general-bx} proves the
claim.
\end{proof}

The correlation-gap lemma assumes that the marginals $x$ can be used directly.
The algorithm must also convert the fractional batch count $z$ into an integer
number of batches. The following lemma isolates exactly the loss from that
two-point rounding.

\begin{lemma}[Random Batch-Count Rounding]
\label{lem:load-rounding}
Fix $x\in[0,1]^n$ and $z\geq1$. Let $\bar z$ be obtained by rounding $z$ to
$\floor z$ or $\ceil z$ as in \cref{alg:lp-rounding}, and let
$q_i=\min\{1,(\bar z/z)x_i\}$. Then
\[
\mathbb{E}_{\bar z}[F(q)]
\geq
\left(1-\frac1e\right)h(z)B(x),
\qquad
h(z)=\frac{\floor z+(z-\floor z)^2}{z}.
\]
Moreover, $h(z)\geq2(\sqrt2-1)$ for all $z\geq1$.
\end{lemma}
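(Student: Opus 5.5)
We would apply \cref{lem:general-random-set-gap} to the scaled marginals $q$, which gives $F(q)\ge(1-1/e)B(q)$ for each realization of $\bar z$. The remaining work is to compare $\mathbb{E}_{\bar z}[B(q)]$ with $B(x)$, and for that we use the threshold form of \cref{prop:threshold-form-general-bx}.

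\textbf{Pointwise comparison at a threshold.} Fix $t$ and put $s=\sum_i x_i\Pr[V_i>t]$. Write $r=\bar z/z$.
\begin{itemize}
\item If $r\le 1$, then $q_i=r x_i$ for every $i$, so $\sum_i q_i\Pr[V_i>t]=rs$ and $\min\{1,rs\}\ge r\min\{1,s\}$.
\item If $r>1$, then $q_i\ge x_i$ for every $i$, so $\min\{1,\sum_i q_i\Pr[V_i>t]\}\ge\min\{1,s\}$.
\end{itemize}
In both cases $\min\{1,\sum_i q_i\Pr[V_i>t]\}\ge \min\{1,r\}\min\{1,s\}$. Integrating over $t$ gives $B(q)\ge \min\{1,\bar z/z\}\,B(x)$.

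\textbf{Taking expectation over $\bar z$.} Write $\theta=z-\floor z$. With probability $1-\theta$ we have $\bar z=\floor z$, so the factor is $\floor z/z$. With probability $\theta$ we have $\bar z=\ceil z\ge z$, so the factor is $1$. The case $\theta=0$ is trivial. Hence
\[
\mathbb{E}[\min\{1,\bar z/z\}]=(1-\theta)\frac{\floor z}{z}+\theta=\frac{\floor z-\theta\floor z+\theta\floor z+\theta^2}{z}=\frac{\floor z+\theta^2}{z},
\]
which is exactly $h(z)$. Combining with the first step,
\[
\mathbb{E}_{\bar z}[F(q)]\ge\left(1-\frac1e\right)\mathbb{E}_{\bar z}[B(q)]\ge\left(1-\frac1e\right)h(z)B(x),
\]
which is the main claim.

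\textbf{The bound $h(z)\ge 2(\sqrt2-1)$.} This is the only step needing care. Write $z=m+\theta$ with integer $m\ge1$ and $\theta\in[0,1)$, so $h=(m+\theta^2)/(m+\theta)$. We minimize over $\theta$ for fixed $m$. The derivative has numerator $2\theta(m+\theta)-(m+\theta^2)=\theta^2+2m\theta-m$, which vanishes at $\theta^*=\sqrt{m^2+m}-m\in(0,1/2)$. At this critical point the first-order condition gives $h=2\theta^*$. Therefore
\[
\min_{\theta}h=2\left(\sqrt{m^2+m}-m\right)=\frac{2m}{\sqrt{m^2+m}+m},
\]
which is increasing in $m$. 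The overall minimum is therefore attained at $m=1$, where it equals $2(\sqrt2-1)$. The only potential obstacle is confirming this monotonicity in $m$ and handling the endpoint $\theta\to1$, where $h\to1$, which is harmless.
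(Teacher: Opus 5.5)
Your proof is correct. It differs from the paper's in where the batch-count loss is absorbed.

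The paper compares $F(q)$ with $F(x)$ directly, at the level of the multilinear extension:
\begin{itemize}
\item When $\bar z=\floor z$, it uses that the multilinear extension of the monotone submodular $R$ is concave along nonnegative rays with $F(0)=0$, so $F(q)\ge(\floor z/z)F(x)$.
\item When $\bar z=\ceil z$, it uses that $q\ge x$ and $F$ is monotone.
\end{itemize}
This gives $\mathbb{E}_{\bar z}[F(q)]\ge h(z)F(x)$. Only then does the paper apply the correlation gap, once, at $x$.

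You reverse the order. You apply the correlation gap to each realized $q$, which is legitimate since $q\in[0,1]^n$. You then do the scaling comparison on the envelope through its threshold form (\cref{prop:threshold-form-general-bx}). There the only ``concavity'' needed is the elementary inequality $\min\{1,rs\}\ge r\min\{1,s\}$ for $r\le 1$, together with monotonicity in the marginals.

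What each route buys:
\begin{itemize}
\item Yours is more self-contained once the threshold form is available, and it avoids the fact that multilinear extensions of submodular functions are concave along nonnegative directions.
\item The paper's isolates the batch-count loss as a statement about $F$ alone, independent of the envelope $B$, so it would apply to any monotone submodular reward.
\end{itemize}
Both routes end at the same bound $(1-1/e)h(z)B(x)$.

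Your minimization of $h$ matches the paper's: the critical point is $\theta^\star=\sqrt{m^2+m}-m$, the minimum value is $2\theta^\star$, and this value increases in $m$.
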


\begin{proof}
Write $r=\floor z$ and $\theta=z-r$. If $\bar z=r$, then $q=(r/z)x$. Since
$R$ is monotone submodular, $F$ is concave along nonnegative rays, so
$F(q)\geq(r/z)F(x)$. If $\bar z=r+1$, then $q\geq x$ coordinatewise, and
monotonicity gives $F(q)\geq F(x)$. Therefore
\[
\mathbb{E}_{\bar z}[F(q)]
\geq
\left((1-\theta)\frac r{r+\theta}+\theta\right)F(x)
=h(z)F(x).
\]
Combining with \cref{lem:general-random-set-gap} gives the first claim.

For the batch-count factor, keep $r\geq1$ fixed and vary $\theta\in[0,1]$:
\[
h(z)=\frac{r+\theta^2}{r+\theta}.
\]
Its minimum on this interval is attained at
$\theta^\star=\sqrt{r^2+r}-r$ and equals
\[
2\left(\sqrt{r^2+r}-r\right)
=\frac{2}{\sqrt{1+1/r}+1}.
\]
This expression increases with $r$, so the global minimum is the case $r=1$,
equal to $2(\sqrt2-1)$.
\end{proof}

After the number of batches is integral, the remaining requirement is the
ex-post capacity bound. Since capacity is uniform and batches can be formed
after the set is chosen, pipage rounding supplies the final conversion from
fractional marginals to a feasible set.

\begin{lemma}[Uniform-Capacity Pipage Rounding]
\label{lem:uniform-rounding}
Let $f:2^{[n]}\to\mathbb{R}_{\geq0}$ be normalized, monotone, and submodular,
with multilinear extension $F$. If $q\in[0,1]^n$ and $\sum_iq_i\leq K$, then
randomized pipage rounding returns $S$ with $|S|\leq K$, $\Pr[i\in S]=q_i$, and
\[
\mathbb{E}[f(S)]\geq F(q).
\]
\end{lemma}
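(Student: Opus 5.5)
Our plan is the standard pipage rounding argument of Ageev--Sviridenko and Calinescu et al.\ \cite{calinescu2011maximizing}, specialized to the uniform matroid of rank $K$. We may assume $K$ is an integer: replacing $K$ by $\floor{K}$ is harmless, since any set $S$ with $|S|\leq K$ automatically satisfies $|S|\leq\floor{K}$. We then pass to the base polytope. If $\sum_i q_i<\floor{K}$, we would rather not raise coordinates, because that would break the exact marginal guarantee $\Pr[i\in S]=q_i$. Instead we run pipage rounding directly on the independence polytope $\{q\in[0,1]^n:\sum_i q_i\leq K\}$, where the tight constraint may be absent. Alternatively, we can add $K$ dummy elements of value zero that pad the sum up to exactly $K$; dummies do not change $f$ and are discarded at the end.

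The process itself is a martingale on fractional coordinates. While at least two coordinates $i\neq j$ are fractional, we move along the direction $e_i-e_j$, which preserves $\sum_i q_i$. Let $\delta^+>0$ and $\delta^->0$ be the largest steps in the directions $+(e_i-e_j)$ and $-(e_i-e_j)$ that keep $q\in[0,1]^n$. We jump to $q+\delta^+(e_i-e_j)$ with probability $\delta^-/(\delta^++\delta^-)$, and to $q-\delta^-(e_i-e_j)$ otherwise. This makes the expected new point equal to $q$, so every marginal is a martingale, and each step makes at least one more coordinate integral. If exactly one fractional coordinate remains (possible only when the sum is not tight or integral), we round it to $1$ with probability $q_i$ and to $0$ otherwise, which is again a martingale step in direction $e_i$. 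Rounding up cannot push the sum above $K$: the integral part is at most $\floor{K}=K$ minus a positive fractional part, so rounding up gives at most $K$. After at most $n$ steps, $q$ is integral. The final indicator vector is the set $S$, with $|S|\leq K$ and $\Pr[i\in S]=q_i$ by the martingale property.

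The key step, which we expect to be the main obstacle, is showing that $\mathbb{E}[F]$ does not decrease. For the two-coordinate moves we use the standard fact that the multilinear extension of a submodular function has $\partial^2F/\partial q_i\partial q_j\leq0$ for $i\neq j$ and $\partial^2F/\partial q_i^2=0$. Hence $\phi(t)=F(q+t(e_i-e_j))$ has $\phi''(t)=-2\,\partial_{ij}F\geq0$, so $\phi$ is convex. Because the jump is mean-preserving, Jensen's inequality gives $\mathbb{E}[\phi(\text{step})]\geq\phi(0)=F(q)$. The single-coordinate move leaves $F$ unchanged in expectation, since $F$ is linear in each coordinate. By induction over steps, $\mathbb{E}[f(S)]=\mathbb{E}[F(\mathbf{1}_S)]\geq F(q)$. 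Monotonicity is not needed for this inequality; it only ensures the rounding can be used without loss elsewhere.

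Finally, the procedure makes at most $n$ moves, and each move needs only the current $q$, not evaluations of $F$. It therefore runs in polynomial time, which completes the lemma.
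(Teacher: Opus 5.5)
Your proposal is correct and is essentially the paper's approach: the paper gives no argument beyond citing \cite{calinescu2011maximizing}, and you reproduce that standard randomized pipage rounding (convexity of $F$ along $e_i-e_j$ with a mean-preserving two-point step, plus an exact linear step for a lone fractional coordinate). One small correction: replacing $K$ by $\lfloor K\rfloor$ does not preserve the hypothesis $\sum_i q_i\le K$, and the lemma is actually false for non-integral $K$ (preserving marginals forces $\mathbb{E}|S|=\sum_i q_i$, which can exceed $\lfloor K\rfloor$), so you should instead take $K$ to be an integer rank, as $K=k\bar z$ is in the paper's application.
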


This is the standard pipage-rounding guarantee for monotone submodular
functions under a uniform matroid constraint
\cite{calinescu2011maximizing}.

We now combine the ingredients: the LP upper bound for the benchmark, the
correlation-gap comparison for reward, the batch-count rounding loss, pipage
rounding for hard capacity, and the cost accounting.

\generalnoreusablethm*

\begin{proof}
Let $\Phi$ be the optimum value of the LP. By
\cref{lem:general-lp-upper}, \(\Phi\) upper-bounds every nonempty adaptive
policy at original costs.

If $\Phi\leq0$, every nonempty policy has nonpositive original-cost value, and
the empty policy is optimal. Assume $\Phi>0$, and let $(x,z,y)$ be an optimal
solution. For this fixed $(x,z)$, replacing $y$ by an optimizer in the
definition of $B(x)$ can only increase the objective, so we may write
\[
\Phi=B(x)-Tz-\sum_i c_i x_i.
\]

Apply \cref{alg:lp-rounding}. For each realized $\bar z$,
\[
\sum_i q_i\leq\frac{\bar z}{z}\sum_i x_i\leq k\bar z,
\]
so \cref{lem:uniform-rounding} applies with the rank-$k\bar z$ uniform
matroid. Together with \cref{lem:load-rounding}, the expected reward is at least
\[
\alphaNR B(x),
\qquad
\alphaNR=\left(1-\frac1e\right)2(\sqrt2-1).
\]
The expected setup cost is at most $T\mathbb{E}[\bar z]=Tz$. For every item,
\[
\Pr[i\in S]
=(1-\theta)\frac{\floor z}{z}x_i
+\theta\min\left\{1,\frac{\floor z+1}{z}x_i\right\}
\leq x_i,
\]
so the expected item cost is at most $\sum_i c_i x_i$. Hence
\[
\operatorname{val}_{\alphaNR}(\widehat\pi)
\geq
\alphaNR\left(B(x)-Tz-\sum_i c_i x_i\right)
=\alphaNR\Phi.
\]
Since $\Phi\geq\operatorname{OPT}_1$ when the optimum is nonempty, and the
empty optimum has value zero, the theorem follows. Solving the LP only to
additive accuracy $\varepsilon$ loses the corresponding additive amount.

The LP has polynomially many variables and constraints in the finite-support
input representation. Moreover, the multilinear extension
can be evaluated in polynomial time by summing over the finitely many support thresholds. Hence the pipage-rounding procedure is implementable in polynomial time.
\end{proof}

\subsection{Proof of Theorem~\ref{thm:reusable-general}: Reusable boxes}
\label{subsec:proof-reusable-approx}

The reusable proof follows the same order as the overview in
\cref{subsec:approx-reusable}. First, a threshold certificate is converted into
an actual repeated policy. Then we show that the feasibility LP contains every integral
batch certificate, and finally that a fractional certificate can be rounded to
an integral batch while losing only the discounted-cost factor.

\begin{proof}[Proof of \cref{lem:reusable-threshold-certificate}]
Let $p=\Pr[X_{\vec{a}}>\tau]$. The certificate implies $p>0$. Repeating
$\vec{a}$ until the first threshold crossing uses a geometric number of
openings, so its $\lambda$-discounted value is
\[
\mathbb{E}[X_{\vec{a}}\mid X_{\vec{a}}>\tau]
-\lambda\frac{K(\vec{a})}{p}
=
\tau+
\frac{f_\tau(\vec{a})-\lambda K(\vec{a})}{p}
\geq \tau.
\]
\end{proof}

The threshold certificate is useful only if the relaxation certifies every
integral batch that could serve as the optimal repeated batch. The next lemma
checks this by assigning the selected excess above threshold to the atom that
realizes it.

\begin{lemma}[\(\mathrm{RLP}_\tau\) Relaxes Integral Batches]
\label{lem:reusable-general-upper}
Let \(\vec a\) be a feasible reusable batch with reservation value
\(\sigma(\vec a)\). Then, for every \(\tau\le \sigma(\vec a)\), there exist
variables \(y_{i\ell}\) such that \(x_i=a_i\) and \((x,y)\) is feasible for
\(\mathrm{RLP}_\tau\).
\end{lemma}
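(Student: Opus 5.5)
The plan is to build $y$ directly from the winner structure of one opening of $\vec a$. Set $x_i=a_i$, so $\sum_i x_i=\|\vec a\|_1\le k$ and $x\ge0$. Draw the $a_i$ i.i.d.\ samples $V_{i,r}$. Break ties deterministically so that exactly one (type, copy) pair attains $X_{\vec a}$, and call it the winner. For each type $i$ and positive atom $\ell\in L_i$, let $y_{i\ell}$ be the probability that the winner has type $i$ and realizes value $v_{i\ell}$ (assign zero mass when $X_{\vec a}=0$).

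The first step is to check the easy constraints. Nonnegativity is immediate. The constraint $\sum_{i,\ell}y_{i\ell}\le1$ holds because at most one positive atom is the winner. For the availability constraint, a union bound over the $a_i$ copies of type $i$ gives
\[
y_{i\ell}\le \Pr\bigl[\exists r\le a_i:\ V_{i,r}=v_{i\ell}\bigr]\le a_i p_{i\ell}=p_{i\ell}x_i .
\]
Here the multiplicity $a_i$ replaces the opening probability used in \cref{lem:general-lp-upper}.

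The second step is the certificate constraint. Since the winner realizes $X_{\vec a}$, we have
\[
\sum_{i,\ell}(v_{i\ell}-\tau)^+y_{i\ell}=\mathbb{E}\bigl[(X_{\vec a}-\tau)^+\bigr]=f_\tau(\vec a).
\]
Zero-valued outcomes contribute nothing as long as $\tau\ge0$. For $\tau<0$ one must also account for the $-\tau$ contributed by a zero winner; I would sidestep this by restricting to $\tau\ge0$, which is all the binary search uses, or by noting that the bound $f_\tau\ge K$ is only easier there. It then remains to show $f_\tau(\vec a)\ge K(\vec a)=T+\sum_i c_ix_i$. The function $\tau\mapsto f_\tau(\vec a)$ is nonincreasing, and the reservation value is defined by $f_{\sigma(\vec a)}(\vec a)=K(\vec a)$. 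Hence for $\tau\le\sigma(\vec a)$ we get $f_\tau(\vec a)\ge f_{\sigma(\vec a)}(\vec a)=K(\vec a)$, and all constraints of $\mathrm{RLP}_\tau$ hold.

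The main obstacle I expect is careful bookkeeping rather than any deep step. Two points need attention. First, the union bound over copies is what makes $x_i=a_i$ (copy counts, possibly greater than one) the correct scaling. Second, the definition of $\sigma(\vec a)$ presupposes that $\vec a$ is not overpriced, so the equation $f_{\sigma}=K$ holds with equality and monotonicity of $f_\tau$ in $\tau$ finishes the argument.
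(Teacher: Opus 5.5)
Your proposal is correct and follows the paper's proof: set $x=\vec a$, define $y_{i\ell}$ as the probability that the deterministically tie-broken winner has type $i$ and value $v_{i\ell}$, bound $y_{i\ell}\le a_i p_{i\ell}$ by a union bound over the $a_i$ copies, and conclude from $f_\tau(\vec a)\ge f_{\sigma(\vec a)}(\vec a)=K(\vec a)$. Your $\tau<0$ remark catches a point the paper's identity $\sum_{i,\ell}(v_{i\ell}-\tau)^+y_{i\ell}=f_\tau(\vec a)$ silently skips. The precise fix, however, is not "$f_\tau\ge K$ is easier there". With your $y$, for $\tau<0$ the certificate's left side equals $\mathbb{E}[X_{\vec a}]-\tau\Pr[X_{\vec a}>0]\ge\mathbb{E}[X_{\vec a}]\ge K(\vec a)$, using that $\vec a$ is not overpriced; equivalently, the LP side is nonincreasing in $\tau$, so the case $\tau=0\le\sigma(\vec a)$ covers all negative thresholds.
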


\begin{proof}
Set $x_i=a_i$. After one opening of $\vec{a}$, break ties deterministically
among copies whose value is the batch maximum and exceeds $\tau$. If the maximum
is at most $\tau$, select no atom. Otherwise, let the selected winning copy have
type $i$ and atom $v_{i\ell}$, and define $y_{i\ell}$ as the probability of
this event. Then $\sum_{i,\ell}y_{i\ell}\leq1$, and
\[
\sum_i\sum_{\ell\in L_i}(v_{i\ell}-\tau)^+y_{i\ell}
=f_\tau(\vec{a}).
\]
The event counted by $y_{i\ell}$ implies that at least one of the $a_i$ fresh
copies of type $i$ realizes atom $v_{i\ell}$, so the union bound gives
\[
y_{i\ell}\leq a_i p_{i\ell}=p_{i\ell}x_i.
\]
Moreover, since \(\tau\le \sigma(\vec a)\),
\[
    f_\tau(\vec a)
    \ge
    f_{\sigma(\vec a)}(\vec a)
    =
    K(\vec a).
\]
Therefore the certificate constraint of \(\mathrm{RLP}_\tau\) is satisfied,
and \((x,y)\), with \(x_i=a_i\), is feasible for
\(\mathrm{RLP}_\tau\).

\end{proof}

Conversely, a certified fractional solution must be implemented as a true
batch. We fill the $k$ slots independently according to the fractional copy
counts and compare the resulting threshold tails with the LP winner envelope.

\begin{lemma}[Reusable Slot Rounding]
\label{lem:reusable-general-slot-rounding}
Let $(x,y)$ be feasible for \(\mathrm{RLP}_\tau\). Fill $k$ independent slots
by choosing type $i$ with probability $x_i/k$ and a dummy type otherwise. Let
$\vec{A}$ be the resulting random multiplicity vector. Then
\[
\mathbb{E}[f_\tau(\vec{A})]
\geq
\alphaR\sum_i\sum_{\ell\in L_i}(v_{i\ell}-\tau)^+y_{i\ell},
\qquad
\mathbb{E}\!\left[\sum_i c_i A_i\right]
=\sum_i c_i x_i .
\]
\end{lemma}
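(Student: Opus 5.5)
The cost identity is immediate, so I would dispose of it first. Each of the $k$ independent slots selects type $i$ with probability $x_i/k$. Hence $\mathbb{E}[A_i]=k\cdot x_i/k=x_i$, and linearity of expectation gives $\mathbb{E}[\sum_i c_iA_i]=\sum_i c_i x_i$. This uses $\sum_i x_i\le k$ from $\mathrm{RLP}_\tau$, so that the slot probabilities, including the dummy type with probability $1-\sum_i x_i/k$, form a valid distribution.

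For the reward inequality I would use the threshold representation, in the spirit of \cref{prop:threshold-form-general-bx} and \cref{lem:general-random-set-gap}. For a fixed multiplicity vector $\vec{a}$,
\[
f_\tau(\vec{a})=\mathbb{E}[(X_{\vec a}-\tau)^+]=\int_\tau^\infty \Pr[X_{\vec a}>t]\,dt .
\]
Taking the expectation over $\vec{A}$ and using Fubini gives $\mathbb{E}[f_\tau(\vec A)]=\int_\tau^\infty \Pr[X_{\vec A}>t]\,dt$. Here the randomness is joint over the slot choices and the fresh samples.

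The key observation is that after the slots are drawn, each slot produces an independent sample of its type. The $k$ slot outcomes are therefore i.i.d. Each slot exceeds $t$ with probability $s_t/k$, where $s_t=\sum_i x_i\Pr[V_i>t]$, and a dummy slot never exceeds $t$. It follows that
\[
\Pr[X_{\vec A}>t]=1-\left(1-\frac{s_t}{k}\right)^k\ge 1-e^{-s_t}\ge \left(1-\frac1e\right)\min\{1,s_t\}.
\]
The final inequality holds because $1-e^{-s}$ is concave with value $1-1/e$ at $s=1$.

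It remains to upper bound the LP side by the same envelope. Writing $(v_{i\ell}-\tau)^+=\int_\tau^\infty \mathbf 1\{v_{i\ell}>t\}\,dt$, I get
\[
\sum_{i,\ell}(v_{i\ell}-\tau)^+y_{i\ell}=\int_\tau^\infty \sum_{(i,\ell):v_{i\ell}>t}y_{i\ell}\,dt .
\]
The two constraints $\sum y\le1$ and $y_{i\ell}\le p_{i\ell}x_i$ bound the integrand by $\min\{1,s_t\}$, which is the same argument as the upper-bound half of \cref{prop:threshold-form-general-bx}. Integrating the pointwise inequality over $t\ge\tau$ then yields the claim with $\alphaR=1-1/e$.

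The only delicate point is the i.i.d. structure across slots: it must be justified that the unconditional exceedance probability per slot is exactly $s_t/k$ and that the slots are independent. This follows directly from the construction, since each slot draws its type and then its fresh sample independently of all other slots. Given that, the rest is routine.
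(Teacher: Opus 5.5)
Your proposal is correct and follows essentially the same route as the paper. Both arguments compare, for each threshold $t\ge\tau$, the rounded exceedance probability $1-(1-q_x(t)/k)^k\ge(1-1/e)\min\{1,q_x(t)\}$ with the LP winner mass $\sum_{v_{i\ell}>t}y_{i\ell}\le\min\{1,q_x(t)\}$, then integrate over $t$, and both obtain the cost identity from $\mathbb{E}[A_i]=x_i$ by linearity; your explicit treatment of the i.i.d.\ slot structure and the intermediate bound $1-e^{-s}$ only spells out steps the paper states without detail.
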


\begin{proof}
For $t\geq\tau$, set
\[
q_x(t)=\sum_i x_i\Pr[V_i>t].
\]
Since $\sum_i x_i\leq k$, we have $0\leq q_x(t)\leq k$. One rounded slot
exceeds $t$ with probability $q_x(t)/k$, so
\[
\Pr[X_{\vec{A}}>t]
=1-\left(1-\frac{q_x(t)}{k}\right)^k
\geq
\alphaR\min\{1,q_x(t)\},
\qquad
\alphaR=1-\frac1e.
\]
The LP winner mass above the same level is bounded by the same envelope:
\[
\sum_{i,\ell:v_{i\ell}>t}y_{i\ell}
\leq
\min\left\{1,\sum_i x_i\Pr[V_i>t]\right\}.
\]
Therefore, for every $t\geq\tau$,
\[
\Pr[X_{\vec{A}}>t]
\geq
\alphaR\sum_{i,\ell:v_{i\ell}>t}y_{i\ell}.
\]
Integrating over thresholds gives the excess-reward bound. The cost identity is
linear because each slot chooses type $i$ with probability $x_i/k$.
\end{proof}

It remains to combine threshold monotonicity, slot rounding, and the structural
characterization of optimal reusable policies by a single reservation-index
batch.

\reusablegeneralthm*

\begin{proof}
The case $T=0$ is handled exactly by \cref{prop:zero-setup}, so assume
$T>0$. 
Certification is monotone: if \(\tau\) is LP-certified, then every \(\tau'\le \tau\) is also LP-certified, because lowering the threshold increases the coefficients \((v_{i\ell}-\tau)^+\) and makes the certificate constraint easier to satisfy.

Suppose \(\mathrm{RLP}_\tau\) has a feasible solution $(x,y)$. By \cref{lem:reusable-general-slot-rounding},
\[
\mathbb{E}\left[
f_\tau(\vec{A})
-\alphaR\left(T+\sum_i c_iA_i\right)
\right]
\geq
\alphaR\left(
\sum_i\sum_{\ell\in L_i}(v_{i\ell}-\tau)^+y_{i\ell}
-T-\sum_i c_i x_i
\right)
\geq0.
\]
Since the input distributions have finite support, the conditional expectation
of the displayed certificate can be computed exactly after any partial
assignment of slots. The method of conditional expectations returns a
realization $\vec{a}$ satisfying
\[
f_\tau(\vec{a})\geq\alphaR K(\vec{a}).
\]
Because $T>0$, the all-dummy realization has negative certificate value, so the
returned batch is nonempty. By \cref{lem:reusable-threshold-certificate}, the
repeated policy for this batch has $\alphaR$-discounted value at least $\tau$.

If $\operatorname{OPT}_1=0$, the algorithm either returns the empty policy or a
certified repeated batch, both of which have nonnegative
$\alphaR$-discounted value. Suppose $\operatorname{OPT}_1>0$. By
\cref{thm:reusable-index}, there is an optimal reusable batch
$\vec{a}^{\star}$ with reservation value
$\sigma^\star=\operatorname{OPT}_1$. For every
$\tau\leq\sigma^\star$,
\[
f_\tau(\vec{a}^{\star})
\geq
f_{\sigma^\star}(\vec{a}^{\star})
=K(\vec{a}^{\star}).
\]
\cref{lem:reusable-general-upper} therefore implies that every
$\tau\leq\operatorname{OPT}_1$ is LP-certified. Monotonicity of certification
lets binary search return a certified threshold
$\widehat\tau\geq\operatorname{OPT}_1-\varepsilon$. The policy returned for
$\widehat\tau$ satisfies
\[
\operatorname{val}_{\alphaR}(\widehat\pi)
\geq
\widehat\tau
\geq
\operatorname{OPT}_1-\varepsilon. \qedhere
\]
\end{proof}

\paragraph{AI Disclosure.}
OpenAI's ChatGPT (GPT-5.5) was used to support the writing and mathematical development of the paper, including editing, improving the exposition, and formalizing parts of the analysis. In particular, it was used to develop illustrative examples and to extend results from high-or-zero distributions to general finite-support distributions. The authors made all final decisions regarding the mathematical statements and proofs, and independently re-derived, verified, and revised all AI-assisted mathematical content. The authors assume responsibility for all content.

\bibliographystyle{plain}
\bibliography{references}

\appendix
\section{Additional Structural Details}
\label{app:structural-proofs}

\subsection{A lower bound for consecutive batching}
\label{subsec:consecutive-batching-lower-bound}

We sort boxes by their individual reservation values computed with respect to
item costs only: \(\sigma_i\) is defined by
\[
    \mathbb{E}[(V_i-\sigma_i)^+]=c_i.
\]
Fix an ordering of the boxes. A consecutive-batch policy is a policy that only
opens intervals in the current remaining order.

We first give the full calculation for the example from the main text. The setup
cost is \(T=10\), the batch capacity is \(k=2\), and the boxes are ordered as
\(A,D,B\). Box \(A\) has item cost \(c_A=1\), with \(\Pr[V_A=100]=0.1\) and
\(\Pr[V_A=0]=0.9\). Box \(B\) is deterministic with value \(9\) and item cost
\(c_B=1\). Box \(D\) is deterministic with value \(100\) and item cost
\(c_D=91\).

The reservation values are
\[
    \sigma_A=100-\frac{1}{0.1}=90,\qquad
    \sigma_D=100-91=9,\qquad
    \sigma_B=9-1=8.
\]
Thus the reservation-value order is \(A\succ D\succ B\). The utilities of the
consecutive feasible batches are
\[
\begin{array}{c|ccccc}
S & \{A\} & \{D\} & \{B\} & \{A,D\} & \{D,B\} \\ \hline
u(S) & -1 & -1 & -2 & -2 & -2 .
\end{array}
\]
Indeed, \(u(\{A\})=10-10-1=-1\), \(u(\{D\})=100-10-91=-1\),
\(u(\{B\})=9-10-1=-2\), and both pairs containing \(D\) have maximum value
\(100\) and total cost \(102\). On the other hand,
\[
u(\{A,B\})
    =
    \mathbb{E}[\max\{V_A,V_B\}]-10-1-1
    =
    \bigl(0.1\cdot 100+0.9\cdot 9\bigr)-10-1-1
    =
    6.1>0.
\]
So the non-consecutive batch \(\{A,B\}\) is profitable, while every consecutive
batch is unprofitable.

Moreover, every consecutive-batch policy has value at most zero. If such a
policy ever opens \(D\), then its final reward is at most \(100\), while opening
\(D\) alone already costs \(T+c_D=101\). Thus any policy opening \(D\) has
negative value. If the policy never opens \(D\), then it can only open \(A\) and
\(B\) as singleton batches. The expected marginal reward from opening \(A\) is
at most \(10<T+c_A=11\), and the expected marginal reward from opening \(B\) is
at most \(9<T+c_B=11\). Hence opening either singleton cannot increase expected
value, and the empty policy is optimal among consecutive-batch policies.

We now prove the bicriteria lower bound.

\begin{theorem}
\label{thm:consecutive-batching-half-lower-bound}
Fix \(k\ge 2\) and \(\alpha\in(1/2,1]\). There is a finite-support
non-reusable instance with setup cost \(T=1\) and batch capacity \(k\) such that the
reservation-value order is
\[
    A \succ D_1 \succeq \cdots \succeq D_k \succ B,
\]
where $D_1, \dots, D_k$ are identical decoy boxes, and every consecutive-batch policy
\(\pi\) satisfies
\[
    \operatorname{val}_{\alpha}(\pi)\le 0,
\]
while \(\operatorname{OPT}_1>0\). Consequently, no algorithm restricted to
consecutive batches in this order can guarantee an
\((\alpha,\alpha)\)-bicriteria approximation for any \(\alpha>1/2\).
\end{theorem}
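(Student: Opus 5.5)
The plan is to build the instance so that, under $\alpha$-discounted costs, \emph{every} consecutive batch is individually unprofitable, and then argue that adaptivity cannot help. The only profitable combination will be $\{A,B\}$, and it is non-consecutive: $k$ decoys sit between $A$ and $B$ in the order, and a batch holds at most $k$ boxes.

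\emph{Construction.} Given $\alpha\in(1/2,1]$, fix $\beta\in(1/2,\alpha)$, or $\beta=\alpha$ if $\alpha<1$ is not needed. Then fix $p\in(0,1)$ with $p<2-1/\beta$, which is possible since $\beta>1/2$. Put $T=1$ and define the boxes as follows.
\begin{itemize}
\item Box $A$ equals $H=\beta/p$ with probability $p$ and $0$ otherwise, with cost $c_A=0$. Its mean is $\beta$.
\item Box $B$ is deterministic with value $\beta$ and cost $c_B=0$.
\item Each decoy $D_j$ equals $h$ with probability $r$ and $0$ otherwise, with cost $0$. Here $h\in(\beta,H)$, and $r=\delta/h$ for a small $\delta>0$ chosen below.
\end{itemize}
With zero item costs, each reservation value is the top of the support. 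So $\sigma_A=H>\sigma_{D_j}=h>\sigma_B=\beta$, which gives the order $A\succ D_1\succeq\cdots\succeq D_k\succ B$. If the convention requires positive costs, I would perturb them by tiny amounts; this preserves all strict inequalities.

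For the benchmark, $u(\{A,B\})=p H+(1-p)\beta-1=\beta(2-p)-1>0$ by the choice of $p$. Hence $\operatorname{OPT}_1>0$.

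\emph{Every consecutive batch is unprofitable.} Any interval of the order with at most $k$ boxes contains at most one of $A,B$ together with some decoys. Its expected maximum is therefore at most $\beta+k\delta$ when it contains $A$ or $B$, and at most $k\delta$ otherwise. Its discounted cost is at least $\alpha T=\alpha$. I choose $\beta<\alpha$ and $\delta\le(\alpha-\beta)/k$. (For $\alpha=1$, take $\beta<1$ as well.) Then $\mathbb{E}[X_S]\le\alpha K(S)$ for every consecutive batch $S$.

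\emph{From batches to policies.} This is the key step, and the one I expect to need the most care. Any consecutive-batch policy $\pi$ opens batches $S_1,S_2,\dots$ adaptively. Write $m_{j-1}$ for the current maximum before opening $S_j$. The value decomposes as
\[
\operatorname{val}_\alpha(\pi)=\mathbb{E}\Bigl[\sum_j \bigl(\max\{m_{j-1},X_{S_j}\}-m_{j-1}-\alpha K(S_j)\bigr)\Bigr].
\]
The choice of $S_j$ is made before its values are revealed, and the boxes are unopened and independent of the history. So conditional on the history, the expected increment is at most $\mathbb{E}[X_{S_j}]-\alpha K(S_j)\le0$. Summing via the tower property, justified since at most $n$ batches are ever opened, gives $\operatorname{val}_\alpha(\pi)\le0$.

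\emph{Conclusion and remaining checks.} An $(\alpha,\alpha)$-bicriteria guarantee would require $\operatorname{val}_\alpha(\widehat\pi)\ge\alpha\operatorname{OPT}_1-\varepsilon>0$ for small $\varepsilon$, so no consecutive policy achieves it. The main obstacle is purely parametric. The decoys must outrank $B$ in reservation value while adding negligible expected reward to any batch. Deterministic high-value decoys, as in the $T=10$ example, fail for $\alpha$ near $1/2$, which is why I use rare high-value decoys with $rh=\delta$ tiny. I will double-check that consecutivity is with respect to the \emph{remaining} order: after some boxes are opened, $A$ and $B$ could become adjacent. But once every decoy is gone, $A$ and $B$ have either been opened or the remaining interval bound still applies. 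The cleaner route is that the per-batch bound $\mathbb{E}[X_S]\le\alpha K(S)$ holds for \emph{every} batch that does not contain both $A$ and $B$. So I only need to verify that an interval containing both $A$ and $B$ must contain all $k$ unopened decoys, or else that decoys were opened earlier. If the latter breaks the argument, I will add more decoys: the proof needs to keep $A$ and $B$ separated by more than $k-2$ unopened boxes throughout, and I will adjust the count so that happens.
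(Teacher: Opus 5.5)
Your construction works; with zero item costs the paper's $\sup$-definition of reservation values gives $+\infty$ for every box, so the tiny-cost perturbation you mention is actually needed. The proof has a genuine gap, however, at the step from batches to policies, which is the concern you raise at the end. Consecutivity is with respect to the \emph{remaining} order. A consecutive policy can therefore first open $\{D_1,\dots,D_k\}$ as one batch and then open $\{A,B\}$, which is now an interval of size $2\le k$. That batch has $\mathbb{E}[X_{\{A,B\}}]=\beta(2-p)>1\ge\alpha=\alpha K(\{A,B\})$. On the likely history $m=0$, its increment in your decomposition is strictly positive, so the claim that every increment has nonpositive conditional expectation is false. Adding more decoys cannot fix this. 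However many decoys you insert, a consecutive policy can exhaust them in batches of size at most $k$ and then make $A$ and $B$ adjacent. No per-batch bound of the form $\mathbb{E}[X_S]\le\alpha K(S)$ can hold for all batches such a policy may open.

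The missing idea is to amortize across batches instead of bounding each batch. Making $A$ and $B$ adjacent forces an extra batch, which pays a full setup cost while contributing almost no reward. Concretely, bound the maximum by the sum of opened values:
\[
M_\pi\le I_AV_A+I_BV_B+\sum_j I_{D_j}V_{D_j}.
\]
Then check that $I_A+I_B\le N$ pathwise, where $N$ is the number of opened batches. If $A$ and $B$ lie in the same batch, that batch holds at most $k-2$ decoys, so some decoy was removed by an earlier batch and $N\ge2$. Also $\sum_j I_{D_j}\le kN$ pathwise. By nonanticipation and independence,
\[
\mathbb{E}[M_\pi]\le\beta\,\mathbb{E}[I_A+I_B]+\delta\,\mathbb{E}\Bigl[\sum_j I_{D_j}\Bigr]\le(\beta+k\delta)\,\mathbb{E}[N]\le\alpha\,\mathbb{E}[N]\le\alpha\,\mathbb{E}[C_\pi],
\]
using $C_\pi\ge N$ because $T=1$. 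This is the paper's argument. With it, your parameter choices $\beta<\alpha$, $k\delta\le\alpha-\beta$, and $(2-p)\beta>1$ go through unchanged.
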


\begin{proof}
The construction has two useful boxes and a block of decoy boxes. The useful boxes
\(A\) and \(B\) are profitable only when opened together: \(A\) is a rare large
prize, \(B\) is a moderate sure prize, and their maximum pays for one setup
cost. The decoy boxes only serve to separate \(A\) from \(B\) in the
reservation value order. They are chosen to have reservation values between
those of \(A\) and \(B\), but their total expected reward is arbitrarily small.

Choose \(\beta\) with \(1/2<\beta<\alpha\). This will be the expected value of
each useful box. Then choose \(d>0\) such that \(\beta+d<\alpha\); \(d\) will
upper-bound the total expected value of all decoys. Since \(\beta>1/2\), choose
\(p\in(0,1)\) small enough that
\[
    (2-p)\beta>1.
\]
Finally choose \(\varepsilon>0\) small enough that
\[
    (2-p)\beta-1-2\varepsilon>0
    \qquad\text{and}\qquad
    \varepsilon<\beta.
\]

Next, we create the boxes.
Box \(A\) has item cost \(c_A=\varepsilon\) and
\[
    \Pr[V_A=\beta/p]=p,
    \qquad
    \Pr[V_A=0]=1-p.
\]
Box \(B\) has item cost \(c_B=\varepsilon\) and deterministic value
\(V_B=\beta\). Thus
\[
    \mathbb{E}[V_A]=\mathbb{E}[V_B]=\beta.
\]
Their reservation values are
\[
    \sigma_A=\frac{\beta-\varepsilon}{p},
    \qquad
    \sigma_B=\beta-\varepsilon,
\]
so \(\sigma_A>\sigma_B\).

It remains to insert \(k\) boxes between \(A\) and \(B\) without giving a
consecutive policy useful value. Choose any target reservation value
\(\sigma_D\) with \(\sigma_A>\sigma_D>\sigma_B\). Each decoy is a rare high
value: choose \(G>\max\{\sigma_D,d/k\}\), let
\[
    \Pr[V_D=G]=\frac{d}{kG},
    \qquad
    \Pr[V_D=0]=1-\frac{d}{kG},
\]
and set
\[
    c_D=\frac{d}{kG}(G-\sigma_D).
\]
Thus, each decoy box has expected value \(d/k\). At the same time, its reservation
value is exactly the prescribed value \(\sigma_D\), since
\[
    \mathbb{E}[(V_D-\sigma_D)^+]
    =
    \frac{d}{kG}(G-\sigma_D)
    =
    c_D.
\]
We create independent copies $D_1, \dots, D_k$ of the same box type $D$. Therefore the decreasing reservation value order is
\[
    A \succ D_1 \succeq \cdots \succeq D_k \succ B.
\]

Now consider any consecutive-batch policy \(\pi\). The key point is that every
nonempty batch pays one setup cost, while a consecutive policy can extract at
most one useful box, plus negligible decoy box value, per opened batch. Let \(N\) be
the number of nonempty batches opened, let \(I_A,I_B\) indicate whether \(A,B\)
are opened, and let \(I_{D_j}\) indicate whether the \(j\)-th decoy box is opened.
If both \(A\) and \(B\) are opened, then at least two batches must be opened:
in the initial order, any interval containing both useful boxes also contains
all \(k\) decoys and therefore has size \(k+2\). Hence, pathwise,
\[
    I_A+I_B\le N.
\]
Also, if \(N=0\), no decoy box is opened, while if \(N\ge 1\), the total expected
value of all opened decoy boxes is at most the total expected value \(d\) of all
decoys. Therefore, pathwise,
\[
    \frac{d}{k}\sum_{j=1}^k I_{D_j}\le dN.
\]

Let \(M_\pi\) be the maximum reward observed by \(\pi\). We upper-bound this
maximum by the sum of all opened rewards. Since rewards are nonnegative,
\[
    M_\pi
    \le
    I_A V_A+I_B V_B+\sum_{j=1}^k I_{D_j}V_{D_j}.
\]
The event that a box is opened is determined before that box's value is
observed, and values are independent. Therefore,
\[
\begin{aligned}
    \mathbb{E}[M_\pi]
    &\le
    \beta\,\mathbb{E}[I_A+I_B]
    +
    \frac{d}{k}\sum_{j=1}^k \mathbb{E}[I_{D_j}]  \\
    &\le
    (\beta+d)\mathbb{E}[N].
\end{aligned}
\]
The setup cost alone is \(N\), and item costs are nonnegative, so
\(C_\pi\ge N\). Thus
\[
    \operatorname{val}_{\alpha}(\pi)
    =
    \mathbb{E}[M_\pi]-\alpha\mathbb{E}[C_\pi]
    \le
    (\beta+d-\alpha)\mathbb{E}[N]
    \le 0.
\]

On the other hand, the unrestricted policy can open the non-consecutive batch
\(\{A,B\}\), which is feasible because \(k\ge 2\). Since \(\beta/p>\beta\),
\[
    \mathbb{E}[\max\{V_A,V_B\}]
    =
    p\cdot\frac{\beta}{p}+(1-p)\beta
    =
    (2-p)\beta.
\]
Therefore the original-cost value of opening \(\{A,B\}\) is
\[
    (2-p)\beta-1-2\varepsilon>0.
\]
Hence \(\operatorname{OPT}_1>0\), while every consecutive-batch policy has
\(\alpha\)-discounted value at most zero. This proves the theorem.
\end{proof}

\subsection{Numerical details for the strict adaptivity-gap example}
\label{app:strict-gap-example}

We tuned parameters of the instance used in Example~\ref{ex:strict-gap} with at most two decimal places. The setup cost is \(T=0.44\), the batch capacity is \(k=2\), and the boxes are as follows. Box \(1\) is deterministic and has value $V_1=0.83$ and cost \(c_1=0.40\). Box \(2\) has distribution
\[
    \Pr[V_2=0]=0.48,\qquad
    \Pr[V_2=0.83]=0.43,\qquad
    \Pr[V_2=100]=0.09,
\]
and cost \(c_2=8.50\). Box \(3\) has distribution
\[
    \Pr[V_3=100]=0.05,\qquad
    \Pr[V_3=0]=0.95,
\]
and cost \(c_3=4.16\).

Consider the adaptive policy that first opens box \(2\). If \(V_2=0\), it opens
\(\{1,3\}\). If \(V_2=0.83\), it opens \(\{3\}\). If \(V_2=100\), it stops.
On the low and medium branches, the expected final reward after opening the
second batch is
\[
    0.05\cdot 100+0.95\cdot 0.83=5.7885.
\]
Therefore the expected reward of the adaptive policy is
\[
    \mathbb{E}[M]
    =
    (0.48+0.43)\cdot 5.7885+0.09\cdot 100
    =
    14.267535.
\]
The relevant batch costs are
\[
    K(\{2\})=0.44+8.50=8.94,\qquad
    K(\{1,3\})=0.44+0.40+4.16=5.00,
\]
and
\[
    K(\{3\})=0.44+4.16=4.60.
\]
Hence
\[
    \mathbb{E}[C]
    =
    8.94+0.48\cdot 5.00+0.43\cdot 4.60
    =
    13.318.
\]
Thus the adaptive policy has value
\[
    \mathbb{E}[M]-\mathbb{E}[C]
    =
    0.949535.
\]

We now compute the best fixed-menu value. Since a fixed-menu policy can ignore
available batches, it suffices to enumerate maximal disjoint menus. For each
fixed menu \(\mathcal{F}\), we treat every batch \(B\in\mathcal{F}\) as a
Pandora box with reward \(X_B=\max_{i\in B}V_i\) and cost \(K(B)\), run
Weitzman's rule, and record the resulting fixed-menu utility
\(\operatorname{val}_1(\mathcal{F})\).

For \(k=2\), the maximal disjoint menus and their values are
\[
\begin{array}{c|c}
\mathcal{F} & \operatorname{val}_1(\mathcal{F}) \\ \hline
\{\{1\},\{2\},\{3\}\} & 0.796055 \\
\{\{1,2\},\{3\}\} & 0.794535 \\
\{\{1,3\},\{2\}\} & 0.795380 \\
\{\{2,3\},\{1\}\} & 0.789055 .
\end{array}
\]
Therefore
\[
    \operatorname{OPT}_{\mathrm{fix}}(T,2)=0.796055.
\]
Since
\[
    \frac{0.949535}{0.796055}>1.1928,
\]
this gives a strict adaptivity gap of about \(1.193\).

For \(k\ge 3\), the only additional maximal menu is the single batch
\(\{\{1,2,3\}\}\). Its one-shot value is
\[
    \mathbb{E}[\max\{V_1,V_2,V_3\}]-K(\{1,2,3\})
    =
    14.267535-13.50
    =
    0.767535,
\]
which is still below \(0.796055\). Hence the same rounded instance gives a
strict adaptive-versus-fixed-menu gap for every \(k\ge 2\).

\subsection{Capped-value reduction for non-reusable policies}

For a feasible batch $B$, let $X_B=\max_{i\in B}V_i$. If $K(B)=0$, define
\[
\kappa_B=X_B,
\qquad
\beta_B=0.
\]
If $0<K(B)\leq\mathbb{E}[X_B]$, let $\sigma_B$ solve
\[
K(B)=\mathbb{E}\bigl[(X_B-\sigma_B)^+\bigr].
\]
For such positive-cost batches, define
\[
\kappa_B=\min\{X_B,\sigma_B\},
\qquad
\beta_B=(X_B-\sigma_B)^+.
\]
A positive-cost batch with $K(B)>\mathbb{E}[X_B]$ is called overpriced.

\begin{lemma}[Overpriced Batches Can Be Skipped]\label{app:lem:overpriced}
In every finite non-reusable instance, there is an optimal policy that never
opens an overpriced batch.
\end{lemma}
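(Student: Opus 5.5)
Since the instance is finite, an optimal policy exists and can be taken deterministic: it is a finite decision tree whose nodes are (opened boxes, observed values), each node choosing a feasible batch of unopened boxes or stopping. Starting from such an optimal tree, I would remove overpriced batches one at a time and show the value never drops. To guarantee termination, I would pick, among all optimal policies, one minimizing the expected number of overpriced batch openings. Then I would show that this number must be zero, by exhibiting an optimal policy with strictly fewer such openings whenever it is positive.

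The core step is a local exchange at a deepest node $u$ where an overpriced batch $B$ is opened. Let $m$ be the current best observed value at $u$. After opening $B$, by depth-minimality every continuation subtree opens only non-overpriced batches. I would compare the subtree rooted at $u$ with the modified policy that \emph{skips} $B$. Concretely, the modified policy runs the continuation chosen by a ghost sample $\tilde X$ of the boxes in $B$, i.e. it simulates the realization of $B$ without paying for it. This is feasible since the continuation opens boxes disjoint from $B$.

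The comparison works as follows. The continuation's reward in the original subtree is $\max(m, X_B, Y)$, where $Y$ is the continuation's best value, whereas in the modified policy it is $\max(m, Y)$ with $Y$ driven by the ghost. Pathwise,
\[
\max(m,X_B,Y)-\max(m,Y)\le (X_B-\max(m,Y))^+\le X_B .
\]
The continuation's costs are identically distributed in the two policies, because the ghost has the same law as $X_B$ and is independent of the remaining boxes. Taking expectations (and using that $X_B$ is independent of the continuation given the ghost coupling), the loss is at most $\mathbb{E}[X_B]<K(B)$, which is exactly the cost saved. So skipping $B$ strictly helps unless the node is reached with probability zero. Since skipping strictly improves value, we get a contradiction with optimality whenever the overpriced node has positive reach probability. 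Otherwise the node can be pruned without changing the value, which reduces the count.

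The main obstacle is that the modified policy is randomized, because it uses the ghost. To fix this, I would average: the value is an expectation over ghost realizations, so some deterministic realization $\tilde x$ performs at least as well. Substituting the subtree for $\tilde x$ gives a deterministic optimal policy with fewer overpriced openings, since only non-overpriced batches were below $u$. A second subtlety is that "deepest" must be measured in the finite tree, which is fine because the instance has $n$ boxes and the tree has depth at most $n$. Iterating the exchange finitely often then completes the argument.
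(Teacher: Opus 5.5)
Your argument is correct, but it takes a different route from the paper.

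\textbf{The paper's route.} It argues at the level of states. It takes the optimal continuation value $W(U,m)$ and notes that $W(U,\cdot)$ is nondecreasing and $1$-Lipschitz in the incumbent. One chain then shows that opening an overpriced $B$ is worth at most
$W(U\setminus B,m)-K(B)+\mathbb{E}[X_B]<W(U\setminus B,m)\le W(U,m)$.
So no state ever prefers an overpriced batch, and any backward-induction policy is overpriced-free.

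\textbf{Your route.} You keep the given continuation and use a ghost sample of $B$ to make ``skip $B$ but follow the same continuation'' literally implementable. Your pathwise bound $\max(m,X_B,Y)-\max(m,Y)\le X_B$ is the policy-level counterpart of the paper's Lipschitz property. What this buys is a genuine improvement map: processed deepest-first, it turns \emph{any} finite-tree policy into an overpriced-free deterministic one of no smaller value, without invoking optimal continuations. It also mirrors the ghost-path idea the paper uses for \cref{thm:bp-gap}.

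\textbf{What it costs.} The node-by-node surgery, the positive-reach dichotomy, and fixing one $\tilde x$ per node all presuppose a finite tree, i.e.\ finite-support distributions. For general distributions, as in \cref{thm:bp-gap}, each single node has probability zero. You would then have to apply the ghost-skip simultaneously on the whole set of histories that open overpriced batches, and contradict optimality among randomized policies. The paper's state-based argument needs no such change.

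\textbf{Minor points.}
\begin{itemize}
\item When you start from an optimal tree, the deepest-node and potential machinery is unnecessary. The ghost-skip strictly improves value at any positive-reach overpriced node, whatever lies below it.
\item Independence is only needed to show that the ghost-driven continuation has the same reward--cost law as the original. After coupling the ghost with the real values of $B$, the pathwise bound alone gives a loss of at most $\mathbb{E}[X_B]$.
\item Pruning a zero-reach node does not lower the \emph{expected} number of overpriced openings. Your potential should instead count overpriced nodes in the tree.
\end{itemize}
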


\begin{proof}
Let $W(U,m)$ be the optimal expected continuation payoff when the unopened boxes
are $U$ and the current best observed value is $m$. Increasing the incumbent
cannot hurt, and increasing it by $d$ can improve the final selected value by at
most $d$. Thus $W(U,\cdot)$ is nondecreasing and $1$-Lipschitz.

Suppose a policy is about to open an overpriced batch $B\subseteq U$. Its
continuation payoff is at most
\begin{align*}
-K(B)+\mathbb{E} W(U\setminus B,\max\{m,X_B\})
&\leq W(U\setminus B,m)-K(B)+\mathbb{E}\bigl[(X_B-m)^+\bigr]\\
&\leq W(U\setminus B,m)-K(B)+\mathbb{E}[X_B]\\
&< W(U\setminus B,m).
\end{align*}
Discarding $B$ without opening it and then following the same continuation is
therefore strictly better than opening $B$.
\end{proof}

\begin{lemma}[Capped-Value Upper Bound for Adaptive Policies]\label{app:lem:cap-reduction}
Let $O_\pi$ be the random family of batches opened by a finite non-reusable
policy $\pi$ that uses no overpriced positive-cost batches. Then
\[
\operatorname{val}_1(\pi)\leq
\mathbb{E}\left[\max\bigl(\{0\}\cup\{\kappa_B:B\in O_\pi\}\bigr)\right].
\]
\end{lemma}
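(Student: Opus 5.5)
We follow Weitzman's classical amortization argument, applied batch by batch. The key identity, for a positive-cost, non-overpriced batch $B$, is
\[
K(B)=\mathbb{E}[\beta_B]=\mathbb{E}\bigl[(X_B-\sigma_B)^+\bigr],
\]
which holds by the definition of $\sigma_B$. Let $O_\pi$ be the family of batches opened. We bound the realized reward pathwise by
\[
M_\pi=\max_{B\in O_\pi}X_B\le \max_{B\in O_\pi}\kappa_B+\sum_{B\in O_\pi}\beta_B,
\]
using $X_B=\kappa_B+\beta_B$ and nonnegativity of the $\beta$'s. For zero-cost batches $\beta_B=0$ and $\kappa_B=X_B$, so the decomposition still holds. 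If $O_\pi$ is empty, the maximum is read as $0$.

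Next we take expectations and charge the $\beta$ terms against costs. For each batch $B$, the event $\{B\in O_\pi\}$ is determined by information revealed before $B$ is opened. Since batches are disjoint along every realization (boxes are non-reusable) and values are independent, this event is independent of $X_B$. Summing over the finitely many possible batches gives
\[
\mathbb{E}\sum_{B\in O_\pi}\beta_B=\sum_B\Pr[B\in O_\pi]\,\mathbb{E}[\beta_B]=\sum_B\Pr[B\in O_\pi]K(B)=\mathbb{E}[C_\pi].
\]
Subtracting yields $\operatorname{val}_1(\pi)=\mathbb{E}[M_\pi]-\mathbb{E}[C_\pi]\le\mathbb{E}[\max(\{0\}\cup\{\kappa_B\})]$. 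The hypothesis that no overpriced batch is used ensures $\sigma_B$ is well defined for every opened positive-cost batch. Because $\mathbb{E}[(X_B-\sigma)^+]$ is continuous and decreasing in $\sigma$, with limit $\mathbb{E}[X_B]\ge K(B)$ as $\sigma$ decreases to $0$, $\sigma_B\ge0$ exists, and hence $\kappa_B\ge0$.

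The main obstacle is making the independence step rigorous when the policy is randomized and adaptive. We handle it by conditioning on the history $\mathcal{H}$ just before each opening decision. Given $\mathcal{H}$, the decision to open $B$ is fixed (or drawn from internal randomness independent of the values), and $B$ consists of unopened boxes whose values are independent of $\mathcal{H}$. Hence $\mathbb{E}[\beta_B\mathbf 1\{B\text{ opened}\}\mid\mathcal{H}]=\mathbf 1\{B\text{ opened}\}\,K(B)$. Summing over decision points along the finite decision tree, using the tower property, gives the cost identity without any global independence claim.
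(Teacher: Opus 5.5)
Your proposal is correct and is essentially the paper's argument. Both decompose $X_B=\kappa_B+\beta_B$, use the history-conditioned identity $\mathbb{E}[\beta_{B}\mid\mathcal{H}]=K(B)$ to charge the excess terms of opened batches against their costs, and bound what remains by the maximum capped value. The only cosmetic difference is that the paper credits the final maximum to a single round through a winner indicator $A_t\le I_t$, whereas you use the pathwise bound $\max_B(\kappa_B+\beta_B)\le\max_B\kappa_B+\sum_B\beta_B$, which yields the same inequality.
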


\begin{proof}
Index the possible rounds by $t$. Let $\mathcal{H}_t$ be the history immediately
before round $t$, let $I_t$ indicate that the round is reached, and let $B_t$ be
the batch selected at that history. Both $I_t$ and $B_t$ are
$\mathcal{H}_t$-measurable. Since $B_t$ contains only unopened boxes, its
primitive values are independent of $\mathcal{H}_t$. Therefore, on every
reached history,
\[
\mathbb{E}[\beta_{B_t}\mid\mathcal{H}_t]
= \mathbb{E}\bigl[(X_{B_t}-\sigma_{B_t})^+\mid\mathcal{H}_t\bigr]
=K(B_t),
\]
and hence $\mathbb{E}[I_t\cdot\beta_{B_t}]
=\mathbb{E}[I_t\cdot K(B_t)]$.

Let $A_t$ indicate, with deterministic tie-breaking, that round $t$ supplies the
final maximum. Then $A_t\leq I_t$, and
\begin{align*}
\mathbb{E}[M_\pi]-\mathbb{E}[C_\pi]
&=\mathbb{E}\sum_t A_t\cdot(\kappa_{B_t}+\beta_{B_t})
  -\mathbb{E}\sum_t I_t\cdot K(B_t)\\
&\leq \mathbb{E}\sum_t A_t\cdot\kappa_{B_t}\\
&\leq \mathbb{E}\left[\max\bigl(\{0\}\cup\{\kappa_B:B\in O_\pi\}\bigr)\right].
\end{align*}
All sums are finite because each primitive box can be opened at most once.
\end{proof}

\begin{proposition}[Fixed Menus Are Classical Pandora Instances]
\label{app:prop:fixed-weitzman}
Fix a finite disjoint family $\mathcal{F}$ of feasible batches, each of which
is either zero-cost or non-overpriced. Treat each batch $B\in\mathcal{F}$ as one
box in a finite classical Pandora instance, with value distribution $X_B$ and
opening cost $K(B)$. These induced boxes are independent, so Weitzman's rule is
optimal, and its value is
\[
\mathbb{E}\left[\max\bigl(\{0\}\cup\{\kappa_B:B\in\mathcal{F}\}\bigr)\right].
\]
\end{proposition}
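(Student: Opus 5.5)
The plan is to treat the fixed menu $\mathcal{F}$ as an ordinary Pandora instance and import Weitzman's classical analysis directly. Since the batches in $\mathcal{F}$ are pairwise disjoint and the primitive values $V_i$ are independent, the induced rewards $X_B=\max_{i\in B}V_i$ for $B\in\mathcal{F}$ are independent: each is a function of a disjoint block of independent variables. Opening $B$ costs $K(B)$ and reveals only $X_B$. The part of the realization inside $B$ beyond its maximum is irrelevant to the objective, so we may assume the policy observes only $X_B$. Restricted to $\mathcal{F}$, any policy is therefore a policy for the classical instance with $|\mathcal{F}|$ independent boxes. Weitzman's theorem \cite{weitzman1979optimal} then gives optimality of the reservation-value rule.

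To identify the value, I will prove the two inequalities of the capped-value identity. For the upper bound, \cref{app:lem:cap-reduction} applies verbatim to any policy restricted to $\mathcal{F}$: every batch is zero-cost or non-overpriced, and the boxes in each batch are unopened when it is chosen. Hence every such policy has value at most $\mathbb{E}[\max(\{0\}\cup\{\kappa_B:B\in O_\pi\})]$. Since $O_\pi\subseteq\mathcal{F}$, this is at most $\mathbb{E}[\max(\{0\}\cup\{\kappa_B:B\in\mathcal{F}\})]$.

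For the matching lower bound, I will reexamine the chain in the proof of \cref{app:lem:cap-reduction} for Weitzman's rule, which opens batches in decreasing $\sigma_B$ and stops once the incumbent is at least the next index. Two inequalities must be tight.

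\emph{First}, whenever a batch is opened but is not the final maximum ($A_t=0,I_t=1$), we need $\beta_{B_t}=0$. This holds because if $X_{B_t}>\sigma_{B_t}$, the rule stops immediately: $X_{B_t}$ exceeds all remaining indices, so $B_t$ is the final winner.

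\emph{Second}, $\sum_t A_t\kappa_{B_t}$ must equal the maximum of $\kappa_B$ over all of $\mathcal{F}$, including unopened batches. If the winner $B$ has $X_B\ge\sigma_B$, then $\kappa_B=\sigma_B$. Batches opened later would have $\sigma\le\sigma_B$, and the unopened ones also have smaller index, hence smaller $\kappa$. If the rule stops because the incumbent $m\ge\sigma_{B'}$ for all remaining $B'$, then $\kappa_{B'}\le\sigma_{B'}\le m$. Every earlier non-winning opened batch has $\kappa=X\le m$. Finally, $m$ equals $\kappa$ of the winner, because the winner stopped below its index or triggered the stop.

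Zero-cost batches have $\kappa_B=X_B$ and $\beta_B=0$. They are handled by treating $\sigma_B=\infty$, opening them first, and noting that they contribute exactly their value. If the rule opens everything, the maximum over opened batches is the maximum over $\mathcal{F}$.

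The main obstacle I expect is this tie-breaking and boundary bookkeeping. The cases are ties between $X_B$ and $\sigma_B$, the stopping rule with $\ge$ versus $>$, and the zero-cost batches where $\sigma_B$ is undefined. I will verify carefully that in every case the winner's realized value equals $\max_{B\in\mathcal{F}}\kappa_B$ and that no positive $\beta$ is wasted. Equality of the two bounds then gives both optimality of Weitzman's rule and the stated formula.
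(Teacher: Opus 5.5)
Your proposal is correct, and it proves more than the paper's proof does. The paper's proof has two steps. It notes that disjointness makes the induced prizes independent, then cites Weitzman's theorem for both optimality and the value. The value is implicitly taken in its capped form $\mathbb{E}[\max(\{0\}\cup\{\kappa_B\})]$, a reformulation usually credited to \cite{kleinberg2016descending} rather than stated by Weitzman.

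You instead derive the value from the paper's own \cref{app:lem:cap-reduction}. That lemma gives the upper bound for every policy that opens only batches of $\mathcal{F}$. You then show that both inequalities in its chain are tight for the index rule. First, no opened non-winning batch has $\beta_B>0$. Second, the winner's capped value equals $\max_{B\in\mathcal{F}}\kappa_B$. This makes the argument self-contained and proves optimality of the rule on its own, so your appeal to Weitzman becomes redundant. The paper's route buys only brevity.

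One piece of your bookkeeping is misstated. Suppose the winner $B$ has $X_B>\sigma_B$. Then the final incumbent is $m=X_B>\sigma_B=\kappa_B$. So two of your claims are false in this case: that $m$ equals $\kappa$ of the winner when the winner ``triggered the stop'', and your closing target that ``the winner's realized value equals $\max_{B\in\mathcal{F}}\kappa_B$''. The correct target is the one you state under \emph{Second}, namely $\sum_t A_t\kappa_{B_t}=\max(\{0\}\cup\{\kappa_B:B\in\mathcal{F}\})$.

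To fix this, split cases on the winner. If $X_B\ge\sigma_B$, bound the earlier opened batches by $\sigma_B$, not by $m$. The rule did not stop just before opening $B$, so the incumbent then was $<\sigma_B$, which gives $\kappa_{B'}\le X_{B'}<\sigma_B$. Unopened batches satisfy $\kappa_{B'}\le\sigma_{B'}\le\sigma_B$. If $X_B<\sigma_B$, then $\kappa_B=X_B=m$ and your stopping-rule argument applies verbatim. In the empty case, all $\sigma_B\le0$, nothing is opened, and both sides are $0$. With this case split the proof is complete.
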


\begin{proof}
Disjointness makes the induced box prizes $(X_B)_{B\in\mathcal{F}}$ independent.
The claim is Weitzman's theorem applied to these boxes
\cite{weitzman1979optimal}.
\end{proof}

\subsection{Proof of the non-reusable adaptivity gap}

\begin{lemma}[Random-Path Rounding]\label{app:lem:random-path}
Let $Y_1,\ldots,Y_N$ be independent latent variables. Each action $e$ has a
support $S(e)\subseteq[N]$, reveals a measurable observation of
$(Y_i)_{i\in S(e)}$, and carries a nonnegative integrable mark determined by
that observation. Assume every feasible action sequence has pairwise-disjoint
supports. For every finite-depth adaptive decision tree, there is a deterministic
feasible path whose expected maximum mark is at least half the adaptive
expected maximum mark.
\end{lemma}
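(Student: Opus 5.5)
We sample a random root-to-leaf path of the adaptive tree using an independent ``ghost'' copy of the latent variables, then charge the adaptive maximum to the ghost-path maximum plus the real-path maximum. Concretely, let $Y'=(Y'_1,\ldots,Y'_N)$ be an independent copy of $Y$. Run the adaptive tree on $Y'$; this determines a random root-to-leaf path $P(Y')$, i.e.\ a sequence of actions $e_1,e_2,\ldots,e_m$ whose supports are pairwise disjoint by assumption. We then evaluate the marks of these actions on the \emph{real} variables $Y$, writing $Z_e(Y)$ for the mark of $e$ computed from $(Y_i)_{i\in S(e)}$. The goal is to show
\[
\mathbb{E}_{Y,Y'}\Bigl[\max_{e\in P(Y')} Z_e(Y)\Bigr]\;\geq\;\tfrac12\,\mathbb{E}_{Y}\Bigl[\max_{e\in P(Y)} Z_e(Y)\Bigr].
\]
An averaging argument over $Y'$ then yields a deterministic path with at least this value.

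We build a hybrid process. For a given path, construct $\tilde Y$ as follows. Walk down the tree, and at each node with action $e$ set $\tilde Y_{S(e)}=Y_{S(e)}$; coordinates never touched are taken from $Y'$. Because supports along any path are disjoint and each decision depends only on already-revealed coordinates, the path followed by the tree under $\tilde Y$ has the same law as the adaptive path. Moreover $\tilde Y$ has the law of $Y$, with each revealed block fresh and independent of the past. This is the standard stochastic-probing coupling.

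The cleaner route, which we would try first, is the Bradac et al.\ style ``half from each side'' inequality. Let $A=\max_{e\in P(Y)}Z_e(Y)$ be the adaptive value, and let $e^*$ be the action attaining it. Condition on the prefix of the path up to the node of $e^*$, and compare with the ghost run. With probability tied to agreement, the ghost path passes through that node, and then the real mark $Z_{e^*}(Y)$ is counted.

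To make this precise, we would use the symmetrization identity: $(Y,Y')$ is exchangeable. So
\[
\mathbb{E}\Bigl[\max_{e\in P(Y')}Z_e(Y)\Bigr]+\mathbb{E}\Bigl[\max_{e\in P(Y)}Z_e(Y')\Bigr]\quad\text{equals}\quad 2\,\mathbb{E}\Bigl[\max_{e\in P(Y')}Z_e(Y)\Bigr].
\]
We would then aim to show that $\max_{e\in P(Y')}Z_e(Y)+\max_{e\in P(Y)}Z_e(Y')\geq$ something whose expectation is $\mathbb{E}[A]$. One way to get this is through the threshold representation: write each maximum as $\int_0^\infty \mathbf 1\{\max>t\}\,dt$, and prove for each fixed $t$ that
\[
\Pr\bigl[\exists e\in P(Y'):Z_e(Y)>t\bigr]\geq \tfrac12\Pr\bigl[\exists e\in P(Y):Z_e(Y)>t\bigr].
\]
For fixed $t$ this is a stopping problem, so we may truncate the adaptive tree at the first action whose mark exceeds $t$. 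Let $\rho$ be the probability that the adaptive run sees an exceedance.

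For the ghost path, run the hybrid process. Either the hybrid sees an exceedance, and then the path equals that of the real run up to that point, so a real mark exceeds $t$. Or the hybrid path agrees with the ghost path, and we need an exceedance among real marks on it. The per-threshold bound $\Pr\geq \rho(1-\rho)+\ldots$ is where a naive argument gives only $1-1/e$ or $\rho-\rho^2$. Getting $\tfrac12$ requires noting that $\Pr[\exists e\in P(Y'):Z_e(Y)>t]\geq \mathbb{E}[1-\prod_{e\in P(Y')}(1-q_e)]$ with $q_e=\Pr[Z_e>t]$, while $\rho\leq\mathbb{E}[\sum_{e\in P}q_e\wedge 1]$ by optional stopping. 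A sharper bound comes from the adaptive-vs-path coupling: the real exceedance event on the ghost path contains the event that the ghost run has no exceedance before the node where the real run's first exceedance occurs. This gives probability at least $\rho(1-\rho)+\rho\cdot\rho\,(\ldots)$.

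The main obstacle is this per-threshold factor $\tfrac12$ (or the direct sum-of-maxima argument replacing it). If the threshold route only gives $1-1/e$-type losses in the wrong direction, we would fall back on the direct Gupta--Nagarajan--Singla random-path argument for the rank-one (max) objective, which is known to give factor $\tfrac12$ for disjoint probing. We would cite it as in the main text if a self-contained proof proves too fiddly.
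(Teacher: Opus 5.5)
\textbf{Gap: the per-threshold factor $\tfrac12$ is never proved.} Your reduction is the right one, and it matches the paper's. You run the tree on an independent ghost copy $Y'$, evaluate the marks on that path with the real $Y$, and use $\mathbb{E}[Z]=\int_0^\infty\Pr[Z>t]\,dt$ to reduce to the per-threshold claim $\Pr[\exists e\in P(Y'):Z_e(Y)>t]\ge\tfrac12\Pr[\exists e\in P(Y):Z_e(Y)>t]$.

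That claim is the entire content of the lemma, and you do not prove it. You call it the main obstacle and end by proposing to cite an external adaptivity-gap result, which leaves the central step unproved. The partial tools you list do not close it:
\begin{itemize}
\item Your hybrid $\tilde Y$ (real values on the coordinates the run touches, ghost values elsewhere) just reproduces the real adaptive run. It carries no information about the ghost path.
\item The bound $\rho\le\mathbb{E}[\min\{1,\sum_{e\in P}q_e\}]$ is false. Suppose the root exceeds $t$ with probability $1/2$, and when it does not, the truncated tree opens one more action that exceeds $t$ surely. Then $\rho=1$, while $\mathbb{E}[\min\{1,\sum_{e\in P}q_e\}]=\tfrac12\cdot\tfrac12+\tfrac12\cdot1=\tfrac34$. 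Optional stopping only gives the Wald identity $\rho=\mathbb{E}[\sum_{e\in P}q_e]$ for the truncated path.
\item The exchangeability identity is correct, but you never say what quantity it is supposed to dominate.
\end{itemize}

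\textbf{The missing idea: induction on depth.} This is how the paper argues. Let $a(\mathcal T)$ and $r(\mathcal T)$ be the exceedance probabilities of the adaptive run and of the ghost path. Let the root action be $e$, with $p=\Pr[Z_e>t]$ and continuation trees $\mathcal T_o$. Then $a(\mathcal T)=p+B$, where $B=\mathbb{E}[\mathbf 1\{Z_e\le t\}\,a(\mathcal T_{O_e})]$.

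For the ghost path, the real root mark is independent of the ghost observation $\widetilde O_e$, and every continuation support avoids $S(e)$. Hence $r(\mathcal T)=p+(1-p)\,\mathbb{E}[r(\mathcal T_{\widetilde O_e})]\ge p+\tfrac{1-p}{2}\,\mathbb{E}[a(\mathcal T_{\widetilde O_e})]$ by the induction hypothesis. Since $\widetilde O_e$ and $O_e$ have the same law, $\mathbb{E}[a(\mathcal T_{\widetilde O_e})]\ge B$. Therefore $r(\mathcal T)-\tfrac12 a(\mathcal T)\ge\tfrac p2(1-B)\ge0$.

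Your hybrid intuition can also be made rigorous directly. Walk down the ghost path and stop at the first node where either the real or the ghost mark exceeds $t$. By disjointness, at each reached node the two marks are independent with the same exceedance probability $q_v$. The stopping probability is at least $\rho$, since any ghost exceedance forces a stop, and it is at most twice the probability that a real mark triggers the stop. This gives $\Pr[\text{real exceedance on the ghost path}]\ge\rho/2$.
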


\begin{proof}
It suffices to prove the claim for the randomized path obtained by running the
tree on an independent ghost instance and then probing the resulting path on a
fresh real instance; one deterministic path attains at least the average.

Fix a threshold $\tau\geq0$. Let $a_\tau(\mathcal{T})$ be the probability that
the adaptive tree $\mathcal{T}$ observes a mark at least $\tau$, and let
$r_\tau(\mathcal{T})$ be the same probability for the ghost-path policy. We
prove $r_\tau(\mathcal{T})\geq a_\tau(\mathcal{T})/2$ by induction on the tree
depth.

Let the root action be $e$, with mark $X_e$, observation $O_e$, and continuation
tree $\mathcal{T}_o$ after observation $o$. Put
$p=\mathbb{P}[X_e\geq\tau]$. Then
\[
a_\tau(\mathcal{T})
=p+\mathbb{E}\left[\mathbf{1}_{\{X_e<\tau\}}
  a_\tau(\mathcal{T}_{O_e})\right].
\]
For the ghost-path policy, the real root mark is independent of the ghost
observation $\widetilde O_e$. Moreover, every feasible continuation has support
disjoint from the root support. The induction hypothesis gives
\[
r_\tau(\mathcal{T})
\geq p+\frac{1-p}{2}\cdot\mathbb{E}\left[
  a_\tau(\mathcal{T}_{\widetilde O_e})\right].
\]
Since $\widetilde O_e$ and $O_e$ have the same marginal distribution,
\[
\mathbb{E}\left[a_\tau(\mathcal{T}_{\widetilde O_e})\right]
\geq \mathbb{E}\left[\mathbf{1}_{\{X_e<\tau\}}
  a_\tau(\mathcal{T}_{O_e})\right]
=:B.
\]
Therefore
\[
r_\tau(\mathcal{T})-\frac{1}{2}\cdot a_\tau(\mathcal{T})
\geq \frac{p}{2}\cdot(1-B)\geq0.
\]
Integrating over thresholds using
$\mathbb{E}[Z]=\int_0^\infty\mathbb{P}[Z\geq\tau]\,d\tau$ proves the expected
maximum bound.
\end{proof}

\begin{proof}[Proof of \cref{thm:bp-gap}]
The definitions above already handle zero-cost batches through
$\kappa_B=X_B$ and $\beta_B=0$. By \cref{app:lem:overpriced}, take an optimal
non-reusable policy that never opens an overpriced positive-cost batch.
\cref{app:lem:cap-reduction}
upper-bounds its net value by the expected maximum capped value among the
batches it opens.

Apply \cref{app:lem:random-path} to the adaptive decision tree whose
actions are feasible batches, whose supports are the primitive boxes in the
batch, and whose mark is the capped value $\kappa_B$. Along every realized
non-reusable path these supports are disjoint. Hence some fixed disjoint family
has expected maximum capped value at least half the adaptive capped value.
\cref{app:prop:fixed-weitzman} identifies this quantity with the
value of the corresponding fixed-menu Pandora instance. Thus
\[
\operatorname{OPT}_1^{\BP(T,k)}\leq
2\cdot\operatorname{OPT}_{\mathrm{fix}}(T,k).
\]
\end{proof}

\subsection{Proof of Theorem~\ref{thm:reusable-index}: Reusable reservation index}

\begin{proof}[Proof of \cref{thm:reusable-index}]
For finite $k$, there are only $\binom{n+k}{k}-1$ feasible reusable batches. For
$k=\infty$, every non-overpriced batch $\vec{a}$ satisfies
\[
T+\sum_i a_i\cdot c_i=K(\vec{a})\leq\mathbb{E}[X_{\vec{a}}]\leq V_{\max}.
\]
Since every $c_i>0$, this implies
$a_i\leq (V_{\max}-T)_+/c_i$ for each $i$, so only finitely many non-overpriced
batches remain. Hence $\sigma^\star$ is attained whenever it is positive.

For every feasible reusable batch $\vec{a}$ and every $s\geq\sigma^\star$,
\[
\mathbb{E}\bigl[(X_{\vec{a}}-s)^+\bigr]\leq K(\vec{a}).
\]
For a non-overpriced batch this follows because the excess function decreases
in $s$ and $s\geq\sigma(\vec{a})$. For an overpriced batch it follows from
$\mathbb{E}[(X_{\vec{a}}-s)^+]\leq\mathbb{E}[X_{\vec{a}}]<K(\vec{a})$.

Set $W(m)=\max\{m,\sigma^\star\}$. If $m<\sigma^\star$, opening any batch $\vec{a}$
and then continuing according to $W$ gives at most
\[
-K(\vec{a})+\mathbb{E}W(\max\{m,X_{\vec{a}}\})
=\sigma^\star-K(\vec{a})+\mathbb{E}\bigl[(X_{\vec{a}}-\sigma^\star)^+\bigr]
\leq \sigma^\star=W(m).
\]
If $m\geq\sigma^\star$, the corresponding bound is
\[
-K(\vec{a})+\mathbb{E}\max\{m,X_{\vec{a}}\}
=m-K(\vec{a})+\mathbb{E}\bigl[(X_{\vec{a}}-m)^+\bigr]
\leq m=W(m).
\]
Backward induction bounds every finite-horizon policy by $W(m)$. For an
admissible infinite-horizon policy, truncate after $h$ openings. Bounded
convergence applies to the terminal reward and monotone convergence applies to
the accumulated cost, so the same upper bound survives as $h\to\infty$.

If $\sigma^\star>0$, let $\vec{a}^{\star}$ attain it and put
$p=\mathbb{P}[X_{\vec{a}^{\star}}>\sigma^\star]$. Positive opening cost implies
$p>0$. Repeating $\vec{a}^{\star}$ until the first threshold crossing uses a geometric
number of openings and has value
\[
\mathbb{E}[X_{\vec{a}^{\star}}\mid X_{\vec{a}^{\star}}>\sigma^\star]
-\frac{K(\vec{a}^{\star})}{p}
=\sigma^\star,
\]
by the reservation equation. This attains $W(m)$ whenever $m<\sigma^\star$; for
$m\geq\sigma^\star$, stopping attains $W(m)$.
\end{proof}

\end{document}